 \newtheorem{theorem}{Theorem}[section]
 \newtheorem{proposition}[theorem]{Proposition}
 \newtheorem{corollary}[theorem]{Corollary}
 \newtheorem{lemma}[theorem]{Lemma}
 \newtheorem{claim}[theorem]{Claim}
\newtheorem{definition}{Definition}
\newcommand{\alg}[1]{\textsc{#1}}
\newcommand{\br}{\alg{Bcast}}
\newcommand{\lbr}{\alg{LocalBcast}}
\newcommand{\ta}{\alg{Try\&Adjust}}
\def \cI {\mathcal{I}}
\def \hcI {\hat{\mathcal{I}}}
\def \bcI {\hat{\mathcal{I}}}
\def \hphi {\hat{\eta}}
\def \Icd {\mathcal{I}_{cd}}
\def \Ic {\mathcal{I}_{c}}
\def \Iack {\mathcal{I}_{ack}}
\def \e {\epsilon}
\def \z {\zeta}
\def \l {\lambda}
\def \R {R}
\def \rmin {r_{min}}
\def \scc {\textsf{SuccClear}}
\newcommand{\prim}[1]{\textsf{#1}}
\def \cd {\prim{CD}}
\def \ack {\prim{ACK}}
\def \ntd {\prim{NTD}}
\newcommand{\busy}{\textsc{Busy}}
\newcommand{\idle}{\textsc{Idle}}
\newcommand{\mypar}[1]{\smallskip\noindent\textbf{#1}.\ }
\newcommand{\thead}[1]{\smallskip\noindent\emph{#1}.\ }
\begin{document}

\begin{titlepage}

\author{Magn\'us M. Halld\'orsson\thanks{ICE-TCS, School of Computer Science, Reykjavik University, Iceland; mmh@ru.is, ttonoyan@gmail.com.} \and
Tigran Tonoyan$^\ast$ \and
Yuexuan Wang\thanks{College of Comp.\ Sci. \& Tech., Zhejiang University, Hangzhou, P.R.\ China; Department of Computer Science, The University of Hong Kong, Hong Kong, P.R.\ China;  amywang@hku.hk.} 
\and Dongxiao Yu\thanks{Services Computing Technology and System Lab, Cluster and Grid Computing Lab in the School of Computer Science and Technology, Huazhong University of Science and Technology, 1037 Luoyu Road, Wuhan 430074, P.R. China; dxyu@hust.edu.cn.}
}

\date{\today}

\title{Data Dissemination in Unified Dynamic Wireless Networks
}

\maketitle
\thispagestyle{empty}

\begin{abstract}
We give efficient algorithms for the fundamental problems of Broadcast and Local Broadcast in dynamic wireless networks.
We propose a general model of communication which captures and includes both fading models (like SINR) and
graph-based models (such as quasi unit disc graphs, bounded-independence graphs, and protocol model).
The only requirement is that the nodes can be embedded in a bounded growth quasi-metric, which is the weakest condition known to ensure distributed operability.
Both the nodes and the links of the network are dynamic: nodes can come and go, while the signal strength on links can go up or down.

The results improve some of the known bounds even in the static setting, including an optimal algorithm for local broadcasting in the SINR model, which is additionally uniform (independent of network size).
An essential component is a procedure for balancing contention, which has potentially wide applicability.
The results illustrate the importance of carrier sensing, a stock feature of wireless nodes today,
which we encapsulate in primitives to better explore its uses and usefulness.
\end{abstract}
\end{titlepage}

\section{Introduction}


Wireless networks are ubiquitous and are on their way to become even more prevalent, e.g., with the advent of Internet-of-Things. Wireless communication is, however, particularly challenging to model algorithmically.
In two crucial interrelated aspects, wireless networks on the ground 
differ from models typically assumed in algorithmic
studies.  One is the \emph{communication modeling}: when is a transmission successfully decoded, as a function of the
environment and the \emph{interference} from other transmissions.  The other is \emph{variability
  with time}: wireless networks are particularly susceptible to changes.
Both of these are hard to capture accurately with well-defined, clear-cut rules. 

We aim in this paper to address core information dissemination problems -- \emph{local broadcast} and (global)
\emph{broadcast} -- in dynamic distributed networks, under very weak assumptions on the communication.
To
this end, we propose a communication model with significant flexibility that allows for adversarial control,
generalizing essentially all known analytic wireless models. It allows for more general interference relationships than
treated before.  The network can experience adversarial dynamic behavior, both edge changes (change of signal strengths)
and node insertions/deletions.

Wireless communication has traditionally been modeled theoretically by graphs, either geometric or general.
Interference is then also transmitted on graph edges (precluding a node from receiving a message from a neighbor if another neighbor is transmitting),
but is sometimes represented by a supergraph.
\emph{Fading channel} or \emph{physical} models common in communication engineering, on the other hand, 
consider interference as cumulative, decreasing with distance but adding up.
They have been popular in recent algorithmic studies, adding more realism to the formulation.
The standard assumption of \emph{geometric signal decay} (that signal decreases inverse polynomially with distance) in the SINR model, the prototypical fading channel,
is though equally at odds with experimental evidence.
Ultimately, it may prove futile to hope for a clean deterministic model, or even a purely stochastic one,
without a significant dose of unpredictability and non-determinism.

Wireless communication is commonly closely linked to mobility, as the transceivers are more often than not on the move.
Dynamic changes to reception conditions have many causes other than node mobility, since 
almost any changes in the environment affect transmissions due to reflections of signals over multiple paths,
antenna characteristics, scattering, and diffraction.
These changes are by nature hard to predict, even when assuming a ``mobility model''.
The most robust approach would then be to assume a non-trivial adversarial component.
While the study of algorithms in dynamic networks has a long history, little has been done in cases where interference plays a role.


\mypar{Setting and Model}
%
Nodes are distributed and autonomous. 
There is no built-in structure and the nodes have no information besides bounds on model parameters and an upper bound on the number of nodes.
Communication is locally synchronous, but there is no global clock. 

We will assume a very general model for when nodes successfully communicate.
Nodes are located in space, with separation between points given by the 
relative decrease in strength of signal (or interference) sent between the points.
This induces a metric space, when these decays are raised to the appropriate power -- actually, it is a \emph{quasi-metric}, since symmetry need not hold.
For distributed computation to be possible, the quasi-metric must have \emph{bounded independence} (to be defined precisely).
\emph{Edge changes} can occur, with some restrictions, 
which are changes in the signal strength between the pair of points.

The rule for when communication is successful is only partially pre-specified: transmission succeeds on a ``clear channel''.
That is, if a sender is within a \emph{communication radius} from the receiver,
if no other node transmits within a (larger) radius, and if the combined interference from all other transmitting nodes is (quite) small, 
then the transmission succeeds. Otherwise, success is up to the adversary, or it can be further specified by the particular model assumptions desired.
This captures essentially all known algorithmic wireless models (including quasi-unit disc graphs, unit-ball graphs, bounded-independence graphs, $k$-hop extensions, and SINR). The only exception is the radio network model with general graphs, which cannot be extended to involve comprehensive interference without a major hit in time complexity.

The generality of our model is a key feature. Given the vagaries of actual wireless environments, it is preferable for robustness reasons to make minimal assumptions about the communication model. A conservative approach is then to seek algorithms that work in most established models rather than depending on model-specific factors that simplify the life of the algorithm designer.

\mypar{Our Approach and Results}
%
The key algorithmic technique is a natural randomized \emph{contention balancing} procedure, where a node continuously adjusts its transmission probability based on the interference that it senses. It allows nodes to stabilize quickly from any initial conditions, or after waking up.
This routine is a variation on an old story, a simple backoff procedure to manage local contention:
\begin{quote}
  if a node $v$ senses contention in a given round beyond a fixed threshold, then $v$ halves its transmission probability in the next round and otherwise doubles it.
\end{quote}
Our main technical contribution is to show logarithmic-round convergence of this method to a steady state of nearly balanced contention, from an arbitrary starting configuration and in the presence of network changes.
It proves also to be surprisingly tolerant of different communication models.
The higher level algorithms are then built on top of this primitive.

A crucial component is the use of \emph{carrier sense} to detect the cumulative amount of signals in the air.
Since it is supplied by the cheapest available hardware today as RSS (received signal strength) readings, we posit that carrier-sense
capability should be the default assumption in wireless algorithmics
 (while exploring the necessity of different assumptions is interesting theoretically).
As carrier-sense indicators can provide fine-grained information, we are interested in restraining its use and identifying which aspects are necessary to 
achieve the results obtained. To this end, we identify several \emph{primitives} that carrier-sense can supply, restrict the algorithm to use only a subset of the primitives, and examine which of these are truly necessary.

The local broadcast algorithm simply runs the contention balancing procedure, with nodes bowing out when they are sure to have completed their transmission.
The broadcast algorithms are based on sparsifying the instance, so that only nodes of constant density actually participate in the global broadcast action.
The former is achieved in $O(\Delta + \log n)$ time, where $\Delta$ is the maximum number of neighbors that a node can have,
while the latter takes $O(\hat{D})$ rounds, where $\hat{D}$ is a dynamic diameter.

These dissemination algorithms are efficient enough to improve on some of the results known for static versions of the problems.
The local broadcast algorithm is strongly optimal, or within constant factors on every instance.
In the standard setting (static, spontaneous case), the algorithm is  \emph{uniform}, in that it need not know the network size.
The broadcast algorithm is also optimal and uniform in the same setting, 
while in the non-spontaneous setting it is faster by a logarithmic factor than the previous algorithm of \cite{JKRS14} that
however does not require carrier sense.



\mypar{Closely Related Work} 
There are two largely disjoint bodies of work of wireless algorithmic results, with work on fading models like SINR
slowly catching up with the better studied graph-based models.  One approach for capturing more realism in SINR model is
to move beyond Euclidean metrics \cite{FKV11}, even to general ones \cite{KV10}.  One can view relative signal decrease
as implicitly defining a quasi-distance metric \cite{BH14}.  Link scheduling problems can be formulated on edge-weighted
interference graphs \cite{hoeferspaa} that properly generalize both graph-based and SINR models, linked by a
graph-theoretic parameter.  Distributed dissemination problems, however, necessarily require metric restrictions, such
as doubling or \emph{fading} metrics \cite{H12}, and limits on communication abilities in order to capture both types of models.

The local broadcast and global broadcast problems have been extensively studied in both graph-based radio network models~\cite{ABLP89,ABLP91,CR03,DGKN13,DT10N,GHK13,GHLN12,KP03,KM98,N14} and the SINR model~\cite{BarenboimPeleg15,DGKN13,GMW08,HM11,JKRS14,SRS08,YHWL12,YHWTL12,YWHL11D}. For local broadcast, the best results known in the radio network model are both $O(\Delta\log n)$ with and without knowing (an upper bound on) $\Delta$~\cite{ABLP89,GHLN12}. In the SINR model, with knowledge of $\Delta$, the local broadcast can be accomplished in the same time bound as in the radio network model \cite{GMW08}. 
If $\Delta$ is not known, the best result is $O(\Delta\log n+\log^2n)$~\cite{HM11,YHWL12}, which is improved to $O(\Delta + \log n)$
with free acknowledgments \cite{HM11}. This can be further improved to $O(\Delta + \log n \cdot \log\log n)$ in the spontaneous case, when $\Delta$ is known~\cite{BarenboimPeleg15}. 

The time complexity of non-spontaneous broadcasting in the radio network model is $\Theta(D+\log n)\log(n/D)$~\cite{ABLP91,CR03,KP03,KM98,N14} without collision detection. With collision detection, this lower bound was recently broken in~\cite{GHK13}, where a solution of $O(D+polylog(n))$ was given. 
Broadcasting has also been treated in the SINR model under a variety of assumptions.
Some are stronger than ours (location information \cite{JKRS13,JKS13}, power control \cite{YHWTL12}),
while others relax the assumption about the connectivity property, incurring necessarily much higher complexity \cite{CKV15,DGKN13,JKS13I}. 
Results in our setting, but without carrier sensing, include time complexity of $O(D\log n \cdot polylog(R_s))$ \cite{DGKN13} (see also \cite{HHL15}), 
where $R_s$ denotes the maximum ratio between distances of stations connected in the communication graph;
and $O(D\log^2n)$ \cite{JKRS14}.
In the spontaneous setting, where the nodes can build an overlay structure along which the message is then propagated, Scheideler et al.~\cite{SRS08} 
used carrier sense to give a dominator algorithms, which can be applied to solve broadcast in $O(D+\log n)$ rounds. Yu et al.~\cite{YHWTL12} solved the problem in the same time bound using power control, while the algorithm in~\cite{JKRS14} that requires neither power control nor carrier sense runs in time $O(D\log n+\log^2 n)$. 

These problems have also been treated in dynamic networks. In the \emph{unstructured} model \cite{KMW04}, where nodes may wake up asynchronously (modeling the node insertion), the local broadcast problem is well studied, even in the SINR setting~\cite{GMW08,HM11,YHWL12,YWHL11D}, but this model does not consider node deletion. In the \emph{dual graph} model~\cite{FLN09} (originally due to \cite{CMS04}), both the local broadcast~\cite{GHLN12} and global broadcast~\cite{GKLN14,GLN13,FLN09,KLNOR10} problems are studied. But this model involves only edge behavior and not node changes (churn). Hence, the impact of dynamicity on wireless information dissemination is still largely unexplored. 

More detailed related work is introduced in Sec.~\ref{DRW}.

\mypar{Our Contributions}
%
We have obtained generalized and improved algorithms for two of the most fundamental dissemination problems, in some cases improving the best results known in static settings.
Beyond these specific results, we identify the following technical contributions:
\begin{compactenum}
\item \emph{Unified model of wireless networks}. 
 The model proposed appears to be the first that allows for the development of pan-model distributed dissemination algorithms.
This hopefully prompts further studies crossing the artificial boundary between graph- and fading-based models.
\item \emph{Dynamic networks under interference}. This appears to be the first work to address dynamic networks in the presence of comprehensive interference. 
 \item \emph{Uniform algorithms}. Our algorithms in the static spontaneous setting appear to be the first in fading models that work independent of instance parameters (number of nodes, max.~degree). 
\item \emph{Primitives for carrier-sense.} We introduce several primitives or \emph{capabilities} that can be implemented using environmental sensing,
and propose to study the power of such primitives.
\item \emph{Stabilization mechanisms.} We identify contention adaptation as a fundamental ability in wireless networks, that appears to be of crucial value to implement other distributed tasks.
\end{compactenum}
\mypar{Roadmap} The formal model and basic definitions, including the definitions of communication model and carrier sensing primitives,
 are given in Sec.~\ref{sec:model}. Sec.~\ref{sec:balancingcontention} contains the core technical part of the paper that includes the basic contention balance routine and its analysis. The main results concerning local and global broadcast problems are presented in Sections~\ref{sec:localbroadcast} and~\ref{sec:broadcast}, respectively. Due to space constraints, most proofs are relegated to appendices.

\section{Models and Definitions}\label{sec:model}

We consider a dynamic network of point-size wireless devices (nodes). Nodes can transmit messages in time slots/rounds  that are sufficiently long to allow a transmission of a single message. No global clock or synchronization of rounds is required, but the clocks of different nodes run at a similar rate, i.e., the length of a round differs between nodes at most by a factor of 2. Nodes may arrive and leave the network at any time. Unless specified otherwise, the nodes are assumed to work \emph{non-spontaneously}: they can initially be in sleep state and join the execution of an algorithm only after receiving a message. We say a node is \emph{alive} at some point in time if it is present in the network. We assume the total number of nodes in the network is polynomially bounded by a number $n$ in each round.
 We use $V$\marginpar{$V$} to denote the set of alive nodes  at any fixed point in time and also use $n$ to denote the current number of nodes, i.e. $n=|V|$.\marginpar{$n$}

We assume  all nodes use the same transmission power $P$\marginpar{$P$} for communication in all rounds. 

\thead{Metrics}
The \emph{signal strength} -- or \emph{interference}, depending on context -- 
of transmitting node $u$ on a node $v$ is $\cI_{uv}=P/f(u,v)$\marginpar{$\cI_{uv}$}, where $f(u,v)> 0$\marginpar{$f(u,v)$} is the \emph{path loss} from $u$ to $v$. 
%
The \emph{metricity} of a space $(V,f)$ is the smallest number $\z$\marginpar{$\z$} such that for every triplet $u,v,w\in V$,
$f(u,v)^{1/\z}\leq f(u,w)^{1/\z}+f(w,v)^{1/\z}$ \cite{BH14}.
%
%
We define $d(u,v)=f(u,v)^{1/\z}$\marginpar{$d(u,v)$} if $u$ and $v$ are different nodes and $d(u,v)=0$ when $u = v$. Note that $(V,d)$ is a \emph{quasi-metric}, as all metric axioms except symmetry hold.
In the rest of the paper, we  assume that in each round the metricity of the network is bounded by a fixed constant $\z$ and will work with values $d(u,v)$ instead of $f(u,v)$.
We assume  the quasi-metric $(V,d)$ has \emph{bounded independence}, defined below,  roughly stating that there cannot be many nodes each causing high interference to a \emph{fixed} node, while having low mutual interferences. 

First, some notations. The \emph{ball} with radius $r$ centered at $u$ is defined as\marginpar{$B(u,r)$} 
$B(u,r)=\{v\in V|\max\{d(v,u),d(u,v)\}<r\}$.
The \emph{in-ball} with radius $r$ centered at $u$ is defined as $D(u,r) = \{v \in V | d(v,u) < r\}$;\marginpar{$D(u,r)$}
 clearly, $B(u,r)\subseteq D(u,r)$.
A set $S\subseteq V$ is a $r$-\emph{packing} for set $S'$ if balls of radius $r$ centered at nodes in $S$ are contained in $S'$ and are disjoint. $S$ is a $r'$-\emph{cover} for $S'$ if the union of balls of radius $r'$ centered at nodes in $S$ contains $S'$.
Note that any maximal $r$-packing is a $2r$-cover, and thus one can bound sizes of covers by packings.

We say that $(V,d)$ has $(\rmin,\l)$-\emph{bounded independence}\marginpar{$\rmin$}\marginpar{$\l$}, for given $\rmin\ge 0$ and $\l>0$, if for every $q\ge 1$ and every in-ball $D$ of radius $qr_{min}$, the size of a maximum cardinality $r_{\min}$-packing of $D$ is at most $C\cdot q^\l$, where $C$ is a constant, possibly depending on $\l$. 
%
For instance, the Euclidean plane is $(r,\l=2)$-bounded independent, for every $r$.

\thead{Neighborhoods, Communication Graph and Dissemination Problems}
Let $R$ denote the maximum transmission distance possible when no other node transmits. 
As the latter event is arguably very rare, we define the \emph{communication radius} $R_B = (1-\epsilon) R$\marginpar{$R_B$} as a slightly smaller distance, 
where $\e$\marginpar{$\e$} is a \emph{precision} parameter. We will drop the parameter $\e$ whenever it is fixed and clear from the context.
Fix a round $t$. 
The \emph{neighborhood} of a node is $N_t(u, \e) = \{v\in V: d(u,v) \le (1-\e)\R\}$,\marginpar{$N_t(u)$} describing who $u$ can communicate with directly.
The basic operation of interest is when $u$ broadcasts a message to its neighbors $N_t(u)$.
The \emph{communication graph} is a directed graph $G_t(V,E)$,\marginpar{$G$, $G_t$} where $(u,v)\in E$ if and only if $v\in N_t(u)$. Thus, the sequence $G_0,G_1,\dots$ defines a \emph{dynamic graph}.
%
The \emph{vicinity} of $u$ refers to a larger region, $D_u^\rho=D(u,\rho\R)$\marginpar{$D_u^\rho$}, for a parameter $\rho > 1$. 


The data dissemination problems that we consider are defined below. We say that a node $u$ \emph{mass-delivers} in round $t$ if it transmits and all its neighbors ($N_t(u)$) receive the message.
\begin{compactitem}
\item{In the \emph{Local Broadcast} problem, given a node $u$, it is required to minimize the time from the beginning of the algorithm until node $u$ mass-delivers at least once, assuming it stays alive during that time.}

\item{In the \emph{(global) Broadcast} problem, given a distinguished \emph{source} node that initially holds a message, the goal is to minimize the time needed to deliver the message to every node in the network through multihop transmissions.} 
\end{compactitem}

\thead{One Hop Communication}
%
When is a transmission successfully received?
Suppose a node $u$ transmits in round $t$, and let $S$ be the set of concurrently transmitting nodes.
Let $\rho_c=\rho_c(\e)\ge 0$\marginpar{$\rho_c$} and $\Ic=\Ic(\e)>0$\marginpar{$\Ic$} be parameters that depend on the precision $\e$.
\begin{definition}
(\scc, Success on a clear channel) If no other node in $D_u^{\rho_c}$ transmits \emph{and}  the total interference at node $u$ is at most $\Ic$ (i.e. $S\cap D_u^{\rho_c}=\emptyset$ and $\sum_{v\in S}\cI_{vu} \le \Ic$), then the transmission of node $u$ is successfully received by \emph{all} its neighbors ($N_t(u)$).
Otherwise, the reception is under adversarial control.
\end{definition}

\thead{Randomized Algorithms}
%
We mainly consider randomized algorithms of the following form: in each round $t$, node $v$ makes a transmission with probability $p_t(v)$, independent of other nodes' transmissions in that round.
An important notion for the analysis of such algorithms is \emph{local contention}, the sum of the transmission probabilities in a close region.
%
The contention in the \emph{close neighborhood} of a node $v$ in round $t$ is \marginpar{$P_t(v)$}
$P_t(v)=\sum_{w\in B(v,\R/2)}p_t(w)$,
where the radius $\R/2$ allows all pair of nodes in $B(v,\R/2)$ to potentially communicate.
Also, let
$P_{t}^{\rho}(v)=\sum_{u\in D_v^\rho}p_t(u)$\marginpar{$P_t^\rho(v)$}
denote the contention in the larger \emph{vicinity} of $v$ in round $t$ ($\rho$ will be fixed later).
We will also use the notation $\cI_t^\rho(v)$\marginpar{$\cI_t^\rho(v)$} to denote the interference at $v$  from nodes outside its vicinity (in $\bar D_{v}^\rho=V\setminus D_v^\rho$) in round $t$. The expected value of $\cI_t^\rho(v)$ is\marginpar{$\bcI_t^\rho(v)$} $\bcI_t^\rho(v)=\sum_{w\in \bar D_{v}^\rho}p_t(w)\cI_{wv}$.

\thead{Sensing Primitives}
We assume the nodes have abilities to sense activity on the channel.
Namely, we assume the nodes are able to detect high and low contention in their vicinity, detect (under some conditions) whether their transmission in a given round succeeded and detect a single very near transmission. In the following, we formalize these notions in three primitives: {\cd}, {\ack} and {\ntd}. We show in Sec.~\ref{sec:primitiveproofs} how 
\emph{all} these primitives can be implemented with \emph{basic} physical carrier sensing and possibly also with other means.

\ul{Contention Detection} ({\cd}).
Contention can be probabilistically deduced from measured level of radio activity. We want to relax this ability and will use 
the following variant, where the outcome of {\cd} is one of the two values: {\busy} or {\idle} channel.
Formally, 
for each node $v$ and round $t$:
\begin{compactitem}
\item{if contention among close neighbors is high ($P_t(v)>\phi$) then they \emph{all} detect {\busy} channel in round $t$ with probability at least $1-h_1^{-\phi}$, for given $\phi > 1$, where $h_1>1$ is a constant\marginpar{$h_1$},}

\item{if 
the contention in the vicinity of $v$ is low ($P_{t}^\rho(v)\le \eta$) \emph{and} the interference on $v$ from outside its vicinity is above a threshold ($\cI_t^\rho(v)< \Icd$) then $v$ detects {\idle} channel in round $t$ with probability at least $h_2^{-\eta}$\marginpar{$h_2$}, for given $\eta > 0$,  where $\Icd>0$\marginpar{$\Icd$} and $h_2 > 1$ are constants.}
\end{compactitem}

\ul{Successful Transmission Detection} (\ack).
If a node $u$ has the {\ack}={\ack}($\e$) primitive (depending on the precision parameter $\e$) then: if $u$ transmits in round $t$, the interference at $u$ is bounded by $\Iack$\marginpar{$\Iack$} and the transmission is received by all nodes in $N_t(u, \e)$, then the outcome of {\ack} is 1, where $\Iack$ is a parameter. If the transmission is not received by a node $v\in N_t(u, \e)$ then the outcome is $0$. Otherwise, the outcome is $0$ or $1$, adversarially.

\ul{Near Transmission Detection} (\ntd).
With {\ntd}={\ntd}($\e$) primitive, a node is able to detect if a transmitter is very close, assuming that it receives the transmitted message. The outcome of {\ntd} is $1$ for node $v$ in round $t$ if $v$ \emph{receives} a transmission from a node $u$, $u\in D_v^{\e/2}$. Otherwise, the outcome of {\ntd} is $0$.
This can also be made approximate.

\thead{Dynamicity} We consider a dynamic network where the topology may change adversarially in each round due to node churn (node arrivals/departures) and edge changes. 
We assume that arriving nodes start running the algorithms from an initial configuration, so we do not limit the rate of churn. With edge changes, existing nodes that were not neighbors before, may become neighbors (e.g. due to mobility). The new neighbors may cause too much interference in a too short time, so the edge changes should be limited. We assume the amount of edge changes is bounded for each node $v$, as follows. Consider a time interval $T$ of length $\Omega(\log n)$. We require that the number of new neighbors of $v$ during $T$ (not counting churn) is bounded by $\tau |T|$, where $|T|$ denotes the number of rounds in $T$ and $\tau$\marginpar{$\tau$} is a constant, to be fixed later. We further assume   the fraction of rounds in $T$ when there are more than $\phi$ new neighbors of $v$ is bounded by $O(\phi^{-k})$ for every $\phi\ge 1$, where $k > 2\l/(\z - \l)$.\marginpar{$k$} Note that there is no restriction on distance changes inside the neighborhood of $v$ (e.g. it is fine for node $v$ if its neighbors move, as far as they remain neighbors). Note that the edge changes may affect the underlying metric, but we require that the upper bounds on metricity and independence are maintained.




\thead{Requirements and Assumptions}
For the convenience of the reader, we gather together all of our assumptions and requirements in a single place.

Communication is assumed to succeed in a clear channel (\scc).
We assume constant metricity $\zeta$ and that $(V,d)$ has $(\rmin,\l)$-bounded independence with $1\le \l<\z$. For the local broadcast problem, we assume that $\rmin\le\R/4$, and for the broadcast problem that $\rmin \le \e\R/4$. 
As is standard in fading models, the communication radius $R_B$ is necessarily an $(1-\epsilon)$-fraction of the maximum transmission distance in a clear channel. 

Besides the knowledge required by the primitives that are needed for a particular algorithm, the nodes are assumed to know the precision parameter $\e$. A polynomial estimate on the number of nodes, $n$, is needed  
in dynamic and non-spontaneous algorithms, but not in the static spontaneous problems.
Knowledge of approximations of model parameters are needed to implement primitives, including $\epsilon$, $\zeta$, $R$, $\rho_c$, and $\Ic$.
Knowledge of the maximum degree $\Delta$ is not needed.


%
Synchronous operation is only assumed in the Broadcast algorithm.
Aspects not defined or constrained are assumed to be under (adaptive) adversarial control, including
when transmissions that fail {\scc} are successful, or when nodes appear or disappear from the network.

The extent of increases in edge strengths over a period is restricted, as detailed above, while decreases are not and neither are node changes.
The dissemination problems are only expected to function with the set of nodes that are sufficiently stable, as detailed in the respective section.


\section{Controlling Contention}\label{sec:balancingcontention}

In order to keep the contention in the network balanced, we propose a basic procedure called {\ta}, which will be the main building block in our algorithms. 
The idea is to let each node adapt its transmission probability to the contention detected using the assumed {\cd} primitive.
The parameter $\beta\ge 1$ describes the passiveness of the newly arriving nodes. 

\begin{quote}
\ta($\beta$): Each node $v$ maintains transmission probability $p_t(v)\le 1/2$ in each round $t$,
initialized as $p_t(v)=\frac{1}{2}n^{-\beta}$ when $v$ enters the network. 
In round $t+1$, $v$ does:\\
1. {Transmit with probability $p_t(v)$, and}\\
2. {Set $p_{t+1}(v)= \begin{cases}
    \max\{p_t(v)/2,n^{-\beta}\} & \mbox{if {\busy} channel,  and} \\
       \min\{2p_t(v), 1/2\}, & \mbox{otherwise.} 
       \end{cases}$}
\end{quote}

The aim for controlling contention is, of course, to ensure that transmissions made have a fair chance of being successful, which means they sufficiently overpower the interference experienced at intended receiver from all other transmissions made in that round.
We account for this interference in two ways: the \emph{contention} captures the expected interference from the nodes' neighbors, while the \emph{interference} integrates also the interference from nodes further away. 

We will measure the contention in the \emph{vicinity} of each node $v$, i.e. in $D_v^\rho$, where $\rho$\marginpar{$\rho$} is a large enough constant. We specify a threshold $\eta=\log_{h_2} (10/9)$\marginpar{$\eta$} (recall $h_2$ from {\cd} definition) for measuring contention: if $P_t^\rho(v) > \eta$ then round $t$ is a \emph{high contention round} for node $v$ and is a \emph{low contention round}, otherwise. We further specify a threshold $\hcI$ for interference: if $\hcI_t^\rho(v)>\hcI$ then round $t$ is a \emph{high interference round} for $v$ and is \emph{low interference round}, otherwise. 

These thresholds are chosen so as to ensure that in a low-contention/interference round, node $v$ will be likely to succeed if it transmits. However, requiring all or most rounds to be low contention for all nodes will lead to high delays. Instead, it turns out that most rounds will have \emph{bounded contention} and low interference, which 
allows for good progress; we say node $v$ experiences \emph{bounded contention} in round $t$ if $P_t^\rho(v) < \hphi$\marginpar{$\hphi$}, where $\hphi>0$ is a large enough constant, to be specified later.
We say that a round $t$ is \emph{good} for node $v$ if $t$ is both bounded contention and low interference round
(in which case, some node in $v$'s vicinity has a good chance of successfully transmitting).

We analyze the properties of {\ta} using the notion of \emph{a phase}, the shortest time in which at least $\gamma\log n$ rounds occur for all nodes,
where $\gamma$ is sufficiently large (given in Prop.~\ref{pr:Pgoodroundl}). 
We use $H$ to denote a general phase and also the set of rounds in that phase. $|H|$ denotes the number of rounds in a phase $H$ (i.e. $\gamma \log n$).
%
The fundamental property of {\ta} (Prop.~\ref{pr:Pgoodroundl})
is that, for each node and each phase, most of the rounds in the phase are good for that node.
%
This property is then used to show that:
\begin{compactenum}
\item{If most of the good rounds in a phase have low contention, then node $v$ detects {\idle} channel in most of those rounds.}
\item{Otherwise, during a constant fraction of the rounds, a node in the vicinity of $v$ mass-delivers.}
\end{compactenum}

By choosing the parameters carefully, i.e.~requiring \emph{low enough} contention, we can make sure that during a phase with mostly low contention, the node detects {\idle} channel in \emph{most of the rounds} (more than half) in a phase, thus leading to an increase of transmission probability by the end of the phase, which, after sufficiently many phases ensures message delivery, w.h.p. On the other hand, during a phase with mostly high contention, there will be many nodes in the vicinity of $v$ that successfully transmit, leading to lower contention.
These ideas are applied in Thms.~\ref{thm:localbroadcast} and~\ref{thm:broadcast}. The core idea behind this analysis is based on~\cite{SRS08}.

%
%

{\begin{proposition}\label{pr:Pgoodroundl}
Let $\sigma\in (0,1)$.
If constants $\rho=\rho(\sigma,\hcI),\hphi = \hphi(\rho,\sigma)$ and $\gamma=\gamma(\rho,\hphi,\sigma,\hcI)$ are large enough then for each node $v$ and phase $H$, with probability $1-O(n^{-3})$, 
a $(1-\sigma)$-fraction of the rounds in $H$ are good.
\end{proposition}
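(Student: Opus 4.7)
The strategy is to handle the two conditions for a good round separately: bounded expected interference $\hcI_t^\rho(v)\le \hcI$ and bounded contention $P_t^\rho(v)<\hphi$. I would dispose of the interference condition by pure geometry, so that the probabilistic analysis only needs to control contention.

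For the interference, I would use $p_t(w)\le 1/2$ everywhere to bound $\hcI_t^\rho(v)\le \tfrac{1}{2}\sum_{w\in \bar D_v^\rho}\cI_{wv}$, then partition $\bar D_v^\rho$ into annular shells at distances $[q\R,(q+1)\R]$ with $q\ge \rho$. Bounded independence places at most $O(q^{\l-1})$ nodes in each such shell, each contributing at most $P/(q\R)^\z$ to the sum, so the shell contribution is $O(q^{\l-1-\z})$. Since $\l<\z$, the tail beyond $\rho$ is $O(\rho^{\l-\z})$, which can be forced below $\hcI$ by choosing $\rho$ sufficiently large in terms of $\hcI$. Hence $\hcI_t^\rho(v)\le \hcI$ in every round, deterministically, and it remains to control $P_t^\rho(v)$.

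For the contention, I would take a maximal $\R/4$-packing $\{w_1,\dots,w_k\}$ of $D_v^\rho$, so that $\bigcup_i B(w_i,\R/2)\supseteq D_v^\rho$; bounded independence together with $\rmin\le \R/4$ gives $k=O(\rho^\l)$, and pairwise separation $\ge \R/4$ limits the number of packing centers inside any fixed $B(u,\R/2)$ to $O(1)$. Thus $\sum_i P_t(w_i)\ge P_t^\rho(v)$ while each $p_t(u)$ is counted $O(1)$ times. If $P_t^\rho(v)\ge \hphi$, pigeonhole provides some index $i$ with $P_t(w_i)\ge \hphi/k=:\phi$, and the \cd{} primitive then forces every node in $B(w_i,\R/2)$ to detect \busy{} and halve its probability with probability at least $1-h_1^{-\phi}$. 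Choosing $\hphi$ (hence $\phi$) large renders $h_1^{-\phi}$ negligible, and the $O(1)$-overlap bound further implies that a constant fraction of the total mass in $D_v^\rho$ lies within sub-balls with $P_t(w_i)>\phi$, all of which halve simultaneously in expectation.

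To convert per-round pressure into a bound on the number $|B|$ of high-contention rounds in a phase $H$ of length $\gamma\log n$, I would view $P_t^\rho(v)$ as a stochastic process bounded in $[0,N/2]$ with $N=O(\rho^\l)$. On a round with $P_t^\rho(v)\ge \hphi$ the expected decrease is $\Omega(P_t^\rho(v)-\hphi)$ coming from the heavy sub-balls, while the worst-case increase from adversarial doubling in the intermediate regime $\eta<P_t(u)\le \phi$ is at most $O(\hphi)$. Choosing $\hphi$ sufficiently large in terms of $\sigma$ makes the expected drift on bad rounds a negative constant $-c(\sigma)$; applying Azuma--Hoeffding to the Doob supermartingale built from $P_t^\rho(v)+c\cdot|\{s<t:P_s^\rho(v)\ge \hphi\}|$, whose per-step increments are bounded by $O(\rho^\l)$, then yields $\Pr[|B|\ge \sigma|H|]\le \exp(-\Omega(\sigma^2\gamma\log n/\rho^{2\l}))$, which becomes $O(n^{-3})$ once $\gamma$ is large enough. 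The main obstacle is this drift step: one has to order the constants correctly (first $\rho$, fixing $k$ and the cover-overlap constant; then $\hphi$, so that the heavy-sub-ball decrease dominates the light-sub-ball increase; finally $\gamma$, to boost concentration past the $n^{-3}$ bar) and carefully bound how much mass the adversary can park in the intermediate regime where \cd{} gives no guarantee, since this is where the simple ``sum over sub-balls'' estimate is tightest.
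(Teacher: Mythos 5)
There is a genuine gap, and it is in the step you thought was the easy one: disposing of the interference condition ``by pure geometry.'' Bounded independence in this model bounds the size of an $\rmin$-\emph{packing} of an in-ball, not the number of nodes in it; the model deliberately allows arbitrarily many nodes to be co-located (this is exactly what the paper's lower bound, Theorem~\ref{thm:brlowerbound}, exploits). So your claim that a shell at distance $q\R$ contains $O(q^{\l-1})$ nodes is false, and with only the trivial bound $p_t(w)\le 1/2$ the quantity $\hcI_t^\rho(v)=\sum_{w\in\bar D_v^\rho}p_t(w)\cI_{wv}$ can be arbitrarily large no matter how large $\rho$ is. The interference condition is therefore \emph{not} deterministic: it must be derived from the algorithm's control of transmission probabilities at all distance scales. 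This is what the paper's proof does: it covers each annulus by $O((\rho+i)^{\l})$ balls of radius $\R/2$ (here bounded independence legitimately bounds the number of \emph{cover elements}, not nodes), shows via the \cd-driven halving (Lemma~\ref{le:phi} and Corollary~\ref{co:Pexpectation}) that the \emph{contention} in each such ball exceeds a distance-dependent threshold $\phi_i=(\rho+i)^{(\z-\l)/2}\phi$ in only an expected $O(\phi_i^{-k})$ fraction of rounds, verifies that when all balls are below threshold the expected interference telescopes to $O(\rho^{-(\z-\l)/2})\le\hcI$ (Claim~\ref{le:PexpectAl}, using $k>2\l/(\z-\l)$), and only then applies a concentration bound over the phase.

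The same confusion between packing number and node count also undermines your concentration argument for contention: $P_t^\rho(v)$ is not bounded by $O(\rho^\l)$ (there may be up to $n$ nodes in $D_v^\rho$), so the increments of your Doob supermartingale are not $O(\rho^\l)$ and Azuma--Hoeffding does not apply as stated; moreover the per-round change is multiplicative (doubling), not an $O(\hphi)$ additive perturbation. The paper avoids a drift analysis on the aggregate $P_t^\rho(v)$ altogether: it fixes a constant-size $\R/2$-cover of $D_v^\rho$, proves for each cover center $u$ separately that $P_t(u)>\phi$ in few rounds of the phase (via indicator variables $X_t$ whose products are dominated by independent coins, so the Impagliazzo--Kabanets form of the Chernoff bound in Lemma~\ref{le:chernoff} handles the dependence, plus an explicit accounting of churn and edge-change mass), and then takes a union bound over the cover. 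Your pigeonhole-over-a-packing idea points in the same direction, but to repair the proposal you would need to (i) replace the geometric interference bound by the probabilistic, multi-scale contention argument, and (ii) replace the aggregate drift/Azuma step by per-ball bounds with a dependence-tolerant concentration inequality and explicit treatment of the dynamic insertions.
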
}

The proof is rather technical and is deferred to Sec.~\ref{sec:contentiondetails} but the intuition is as follows.
The contention in each neighborhood must be bounded most of the time, because when it becomes large, it has a high chance of being decreased due to {\busy} channel.
Moreover, we show that in expectation, the contention in \emph{all} local neighborhoods is bounded, which is then combined with a geometric argument to show that the expected interference at each node is low most of the time.
\smallskip

We derive from the fundamental property two useful propositions.
The first says that if contention is high, then nodes in the vicinity deliver the message.

\begin{proposition}\label{pr:Pdecrease}
Assume that constants $\hphi, \rho,\gamma$ are large enough. 
For each node $v$ and phase $H$, 
if at least $1/10$-fraction of the rounds of $H$ are of high contention,
then $\Omega(|H|)$ nodes in $D_v^{\rho}$ mass-deliver, with probability $1-O(n^{-3})$.
\end{proposition}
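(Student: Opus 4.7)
My plan has three steps: (i) extract a linear fraction of rounds of $H$ that are simultaneously \emph{good} (per Proposition~\ref{pr:Pgoodroundl}) and of high contention; (ii) lower-bound by a constant the per-round probability that some node of $D_v^\rho$ mass-delivers; (iii) convert the resulting linear count of mass-delivery events into $\Omega(|H|)$ \emph{distinct} mass-deliverers using the drop-out semantics of the enclosing algorithms.

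For step (i), I invoke Proposition~\ref{pr:Pgoodroundl} with $\sigma$ chosen strictly below $1/20$. With probability $1-O(n^{-3})$, at most $|H|/20$ rounds of $H$ fail to be good; combined with the hypothesis that at least $|H|/10$ of the rounds are of high contention, a pigeonhole argument yields at least $|H|/20=\Omega(|H|)$ rounds of $H$ that are simultaneously good and of high contention. I condition on this event and call these rounds \emph{productive}.

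For step (ii), I fix a productive round $t$: we have $\eta<P_t^\rho(v)<\hphi$ and $\hcI_t^\rho(v)\le\hcI$. For each $u\in D_v^\rho$, the probability that $u$ mass-delivers in round $t$ factorizes as $p_t(u)$ times the probability that \scc{} holds at $u$ given that $u$ transmits. Taking $\rho$ large enough relative to $\rho_c$ (so that, by the triangle inequality for $d$, $D_u^{\rho_c}$ is contained in a vicinity of $v$ with contention still bounded by $\hphi$), the independence of per-round transmissions gives
\[
\Pr[\text{no other transmitter in }D_u^{\rho_c}]\ \ge\ \prod_{w\ne u}(1-p_t(w))\ \ge\ 4^{-P_t^{\rho_c}(u)}\ \ge\ 4^{-\hphi},
\]
while Markov applied to the residual interference at $u$---whose conditional expectation is bounded by $\hcI$ plus an $O(1)$ contribution from $D_v^\rho\setminus D_u^{\rho_c}$ absorbed via the bounded-independence/packing structure---yields $\Pr[\text{interference}\le\Ic]\ge 1/2$ once $\Ic$ is chosen large enough. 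Hence $\Pr[u\text{ mass-delivers}]\ge c_0\,p_t(u)$ for a constant $c_0>0$. Summing over $u\in D_v^\rho$ (and noting that mass-delivery events within a single round are pairwise disjoint for nodes that share close neighborhoods) gives an expected count of mass-delivering nodes per productive round of at least $c_0\,P_t^\rho(v)\ge c_0\eta=\Omega(1)$.

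For step (iii), a Chernoff bound applied to the independent per-round indicators over the $\Omega(|H|)$ productive rounds yields at least $\Omega(|H|)$ mass-delivery events with probability $1-e^{-\Omega(|H|)}=1-O(n^{-3})$ for $\gamma$ large enough. To upgrade this to $\Omega(|H|)$ \emph{distinct} mass-delivering nodes, I invoke the usage of {\ta} inside the enclosing dissemination algorithms: once a node has mass-delivered, it bows out of further contention, so each mass-delivery event is attributable to a fresh node of $D_v^\rho$, yielding the proposition. The main obstacle is step (ii), namely balancing $\rho,\hphi,\hcI,\Ic$ against the bounded-independence exponent $\l$ and metricity $\zeta$ so that the within-vicinity interference contribution to the expected interference at any $u\in D_v^\rho$ is uniformly bounded by a constant; a secondary subtlety is that the distinctness step leans on drop-out semantics that {\ta} itself does not enforce, and must be carried through by appealing to the outer algorithm in which this proposition is applied.
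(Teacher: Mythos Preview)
Your overall plan matches the paper's structure, and steps~(i) and~(iii) are fine (the distinctness caveat you flag in~(iii) is indeed absorbed by the outer algorithms, exactly as you say). The genuine gap is in step~(ii), in two places.

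First, a ``good round for $v$'' only gives you $P_t^\rho(v)<\hphi$, i.e., bounded contention \emph{inside} $D_v^\rho$. For $u$ near the boundary of $D_v^\rho$, the triangle inequality gives $D_u^{\rho_c}\subseteq D_v^{\rho+\rho_c}$, not $D_u^{\rho_c}\subseteq D_v^\rho$, so you cannot conclude $P_t^{\rho_c}(u)\le\hphi$ from what you have. Taking ``$\rho$ large relative to $\rho_c$'' does not fix this: the obstruction is that $u$ may sit at distance nearly $\rho R$ from $v$. The paper's device is a radius-doubling trick: it invokes Proposition~\ref{pr:Pgoodroundl} at radius $\rho'=2\rho$ for $v$, obtaining $P_t^{2\rho}(v)<\hphi$ and $\hcI_t^{2\rho}(v)<\hcI_1$; then for every $u\in D_v^\rho$ one has $D_u^\rho\subseteq D_v^{2\rho}$, so $P_t^\rho(u)<\hphi$ automatically. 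The same containment lets one bound the interference at $u$ from outside $D_u^\rho$ by splitting it into the part from $\bar D_v^{2\rho}$ (controlled by $\hcI_1$ plus a $2^\zeta$ factor from the triangle inequality) and the part from $D_v^{2\rho}\setminus D_u^\rho$ (at most $\hphi\cdot P/(\rho R)^\zeta$, which is driven to any small constant using the relation $\hphi=O(\rho^{\l+\l/k})$ from the proof of Proposition~\ref{pr:Pgoodroundl} together with $k>2\l/(\zeta-\l)$). Your phrase ``$O(1)$ contribution\ldots absorbed via bounded-independence'' is pointing in this direction but does not supply the mechanism.

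Second, you write ``once $\Ic$ is chosen large enough''---but $\Ic$ is a \emph{fixed} model parameter of \scc, not something the analysis may tune. What you control is the threshold $\hcI$ used to define low-interference rounds: it must be chosen small enough relative to $\Ic$ (the paper takes $\hcI\le(1-1/\rho)^\zeta\Ic/10$) so that Markov at $u$ yields interference below $\Ic$ with constant probability.

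A minor stylistic difference: rather than summing $\Pr[u\text{ mass-delivers}]$ over all $u\in D_v^\rho$ and then worrying about disjointness, the paper takes a constant-size $R/2$-cover $S$ of $D_v^\rho$, uses pigeonhole to find some $u\in S$ with $P_t(u)>\eta/|S|$, and applies Lemma~\ref{le:broadgood} to that single $u$. This directly lower-bounds the probability that \emph{some} node in $B(u,R/2)$ mass-delivers by a constant, sidestepping the disjointness question.
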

To this end, we first show that if a round is good for node $u$ and a node in its vicinity transmits, then it mass-delivers with constant probability,
utilizing both metric assumptions and the properties of good rounds.
We then argue that since most rounds are good (by Prop.~\ref{pr:Pgoodroundl}) and most rounds have by assumption sufficient contention, 
many rounds will be both good and with sufficient contention, and in each of those, a node in the vicinity of $v$ is likely to transmit and succeed.

\smallskip
When contention is low in a lot of rounds of a phase, 
the node will  detect {\idle} channel by the {\cd} primitive in many rounds.
This will actually happen during many good rounds, which have the low local contention and low external interference to allow for this detection.
\begin{proposition}\label{pr:idle}
Assume that $\hphi, \rho,\gamma$ are large enough. 
For each node $v$ and phase $H$, if at least $9/10$-fraction of the rounds of $H$ are low contention rounds, 
then  with probability $1-O(n^{-3})$, in at least $3/5$-fraction of the rounds of $H$, $v$ will detect {\idle} channel and have low contention and low interference.
\end{proposition}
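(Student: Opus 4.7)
The plan is to intersect the structural guarantee of Prop.~\ref{pr:Pgoodroundl} with the hypothesis that $9/10$ of the rounds in $H$ are low contention, lower-bound the per-round probability that {\cd} reports {\idle} on the intersection via Markov's inequality and the {\cd} definition, and convert this per-round bound into a count with Azuma--Hoeffding.

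First I would apply Prop.~\ref{pr:Pgoodroundl} with a small $\sigma$ (say $\sigma=1/100$) to obtain, with probability $1-O(n^{-3})$, at least $(1-\sigma)|H|$ \emph{good} rounds in $H$, i.e., rounds with $P_t^\rho(v)<\hphi$ and $\hcI_t^\rho(v)\le\hcI$. Intersecting with the assumed $(9/10)|H|$ low-contention rounds gives at least $(9/10-\sigma)|H|$ rounds that are both good and low contention; since $\eta<\hphi$, low contention implies bounded contention, so these \emph{target} rounds are precisely the rounds in which both $P_t^\rho(v)\le\eta$ and $\hcI_t^\rho(v)\le\hcI$ hold.

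Next, I would show that on any target round $t$, conditional on the history $\mathcal{F}_{t-1}$, {\cd} reports {\idle} with probability at least $81/100$. Given the transmission probabilities $p_t(\cdot)$ (which are $\mathcal{F}_{t-1}$-measurable), the external interference $\cI_t^\rho(v)$ is a sum of independent scaled Bernoullis with conditional mean $\hcI_t^\rho(v)\le\hcI$. Choosing the constant $\hcI$ as small as $\Icd/10$ (a choice compatible with Prop.~\ref{pr:Pgoodroundl}, whose constants depend on $\hcI$), Markov's inequality yields $\Pr[\cI_t^\rho(v)<\Icd\mid\mathcal{F}_{t-1}]\ge 9/10$. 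Combined with $P_t^\rho(v)\le\eta$, the {\cd} definition then guarantees {\idle} detection with probability at least $h_2^{-\eta}=9/10$, giving the claimed $(9/10)\cdot(9/10)=81/100$.

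Finally, let $X_t$ indicate that round $t$ is a target round and {\cd} reports {\idle}. Target status is $\mathcal{F}_{t-1}$-measurable, so $E[X_t\mid\mathcal{F}_{t-1}]\ge(81/100)\cdot\mathbf{1}[t\text{ is target}]$, and $\{X_t-E[X_t\mid\mathcal{F}_{t-1}]\}$ is a bounded martingale difference sequence. Azuma--Hoeffding then gives $\sum_t X_t\ge(81/100)\,M-O(\sqrt{|H|\log n})$ with probability $1-O(n^{-3})$, where $M=\sum_t\mathbf{1}[t\text{ is target}]$. On the high-probability event $M\ge(9/10-\sigma)|H|$ and with $\gamma$ large enough that $\sqrt{|H|\log n}=o(|H|)$, this lower bound is at least $(81/100)(9/10-\sigma)|H|-o(|H|)\ge(3/5)|H|$. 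The main delicacy is that target status is itself random and interleaved with detection outcomes (prior {\cd} readings feed back into future $p_t$'s); the martingale formulation bypasses this cleanly, and the $81/100$ slack comfortably absorbs both the $\sigma$ loss and the Azuma error.
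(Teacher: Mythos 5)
Your proposal is correct and follows essentially the same route as the paper's own proof: invoke Prop.~\ref{pr:Pgoodroundl}, intersect the resulting good rounds with the assumed low-contention rounds, obtain a per-round probability of $81/100$ of detecting {\idle} via Markov's inequality (using $\hcI\le\Icd/10$) together with the {\cd} guarantee, and then concentrate the count. The only cosmetic differences are your choice of constants and your use of Azuma--Hoeffding on martingale differences where the paper applies its dependency-tolerant Chernoff bound (Lemma~\ref{le:chernoff}); both handle the adaptive feedback from prior {\cd} outcomes in the same way.
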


\section{Local Broadcast}\label{sec:localbroadcast}

We propose an algorithm for asynchronous local broadcast in dynamic networks. The algorithm is an extension of the {\ta} procedure, where the nodes try to balance the contention in the network and stop transmitting as soon as they deliver their message. We assume the nodes are powered with {\cd} and {\ack} primitives.
Note that the passiveness parameter is set to $\beta=1$, which means that the transmission probability of nodes does not get below $1/(2n)$. 
\begin{figure}[h!]
\lbr: Each node $v$ executes {\ta}(1) with the following additional step:
if $v$ transmits and detects {\ack}, it stops (i.e. $p_{r}(v)=0$ for $r>t$).
\end{figure}

We will estimate the performance of the algorithm using the notion of \emph{dynamic degree}, defined as follows. Given a parameter $\rho>0$,
we denote $\Delta_v^\rho(t,t')=|\cup_{r=t}^{t'}D_v^\rho(r)|$ for node $v$ and rounds $t, t'$ with $t' > t$, where $D_v^\rho(r)$ denotes the in-ball $D_v^\rho$ in round $r$.

Below we prove that if there are not too many node insertions in the neighborhood of a node $v$, then $v$ mass-delivers (delivers to all its neighbors) in time comparable to its dynamic degree with $\rho$ a constant.
The main tools for proving the bound are Props.~\ref{pr:Pdecrease} and~\ref{pr:idle}. 
First we argue that if there is a phase of mostly low contention, then node $v$ will deliver its message, w.h.p. Then we show that if the insertions are not too intensive then the contention around $v$ will decrease and a phase with mostly low contention will happen.
\begin{theorem}\label{thm:localbroadcast}
There is a constant $\rho>0$, such that a node $v$ performing {\lbr} asynchronously in a time interval $T=[t,t']$ with $t'-t=\Omega(\Delta_v^\rho(t,t') + \log n)$ mass-delivers, w.h.p., provided that $t'-t=O(n^2)$.
\end{theorem}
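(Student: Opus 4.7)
The strategy is to partition $T = [t, t']$ into consecutive phases $H_1, \dots, H_m$ of length $\gamma \log n$ each (so $m = \Theta(|T|/\log n)$) and classify each phase in one of two types that together partition the phases: a \emph{high-contention phase} is one in which at least $1/10$-fraction of its rounds are high-contention for $v$ (so Proposition~\ref{pr:Pdecrease} applies), and a \emph{low-contention phase} is the complement, i.e., more than $9/10$-fraction of the rounds are low-contention for $v$ (so Proposition~\ref{pr:idle} applies). The plan is to charge high-contention phases against the dynamic supply of nodes in $v$'s vicinity, and to use low-contention phases to drive $v$'s own transmission probability up and force a successful mass-delivery.

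For the high-contention phases, Proposition~\ref{pr:Pdecrease} guarantees $\Omega(|H|)=\Omega(\log n)$ distinct nodes in $D_v^\rho$ mass-deliver during each one, w.h.p. Since {\lbr} shuts down every mass-delivering node (setting its $p$ to $0$ forever), each such node can be charged at most once across the entirety of $T$, and the pool of nodes ever present in $D_v^\rho$ during $T$ has size exactly $\Delta_v^\rho(t,t')$. Hence the total number of high-contention phases is $O(\Delta_v^\rho(t,t')/\log n)$. Choosing the hidden constant in $|T| = \Omega(\Delta_v^\rho(t,t') + \log n)$ sufficiently large, a sufficient number (in particular, at least one) of the phases must be low-contention.

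For a low-contention phase $H_i$ (assuming $v$ has not yet mass-delivered), Proposition~\ref{pr:idle} guarantees that in at least $3/5$-fraction of its rounds, $v$ detects the {\idle} channel, which by {\ta} causes $p_r(v)$ to double (capped at $1/2$), while at worst $2/5$-fraction of rounds halve it. Thus $p_r(v)$ grows geometrically and, within $O(\log n)$ such rounds, reaches a constant equilibrium value (naturally capped near the low-contention threshold $\eta$, since once $p_r(v)$ approaches $\eta$ further growth pushes $P_r^\rho(v)$ above $\eta$). Once $p_r(v) = \Theta(1)$, each good, low-contention round (of which there are $\Omega(\log n)$ per phase by Proposition~\ref{pr:idle}) gives $v$ a constant probability of transmitting; conditional on $v$ transmitting, by taking $\rho \geq \rho_c$ the expected number of other transmitters in $D_v^{\rho_c} \subseteq D_v^\rho$ is bounded by $\eta$, so no other node transmits with constant probability, and the external interference is below $\Ic$ by the good-round guarantee, so {\scc} applies and $v$ mass-delivers with constant probability per such round. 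Over $\Omega(\log n)$ such rounds, $v$ mass-delivers w.h.p.

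The \textbf{main obstacle} is controlling the probability dynamics inside a low-contention phase: one must verify that $p_r(v)$ not only climbs quickly but also spends a constant fraction of the rounds at its equilibrium level rather than mostly in transit from $1/(2n)$, and that the feedback loop between $p_r(v)$, the contention class of the round, and the {\cd} outcome does not destabilize. A secondary technical issue is to propagate the w.h.p.\ bounds of Propositions~\ref{pr:Pgoodroundl}--\ref{pr:idle} via a union bound over the $O(|T|/\log n) = O(n^2/\log n)$ phases, which is safe thanks to the $|T| = O(n^2)$ hypothesis. One must also verify that the premise of Proposition~\ref{pr:Pdecrease} remains valid under {\lbr}'s extra stopping behaviour (fewer active nodes can never worsen the bound), and choose $\rho$ large enough relative to $\rho_c$, $\z$, and the bounded-independence parameter $\l$ so that the supporting propositions hold.
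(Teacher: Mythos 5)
Your decomposition matches the paper's exactly: the same type-A/type-B phase classification with the $1/10$ and $9/10$ thresholds, the same charging of type-A phases against $\Delta_v^\rho(t,t')$ via Prop.~\ref{pr:Pdecrease} and the stopping rule, and the same union bound over the $O(n^2/\log n)$ phases. The gap is precisely the step you leave as your ``main obstacle'', the behaviour of $p_t(v)$ inside a low-contention phase, and your proposed resolution points in the wrong direction: there is no ``equilibrium near $\eta$'' and no feedback loop to stabilize. Prop.~\ref{pr:idle} already fixes, within the phase, a set $H'$ of at least $3|H|/5$ rounds that are simultaneously low-contention, low-interference and {\idle}-detecting for $v$; in every round of $H'$ the algorithm doubles $p(v)$ (capped at the \emph{algorithmic} cap $1/2$, not at $\eta$), at most $|H\setminus H'|\le 2|H|/5$ rounds halve it, and at most $\log n$ doublings are consumed climbing from the initial value $\ge 1/(2n)$. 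Hence, for $\gamma>10$, a purely deterministic count gives $p_t(v)=1/2$ in at least $(3/5-1/2)|H|=|H|/10$ of the rounds of $H'$, and in each such round Lemma~\ref{le:broadgood} (applied with $\phi=\eta$) yields mass-delivery with probability at least $0.9\cdot\tfrac12\cdot 4^{-\eta}$, so $v$ mass-delivers within a single type-B phase w.h.p. The feedback you worry about is moot because the argument is a case split on the realized contention: if $v$'s own rising probability pushes rounds into high contention, the phase is simply type A and is absorbed by the charging bound.

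A secondary inaccuracy sits in that charging bound: {\lbr} does not shut down ``every mass-delivering node''. A node stops only when {\ack} returns $1$, which additionally requires the interference at the transmitter to be at most $\Iack$; if it exceeds $\Iack$, the {\ack} outcome may adversarially be $0$ even though all neighbors received the message, and the node keeps transmitting. This is why the paper instantiates Props.~\ref{pr:Pdecrease} and~\ref{pr:idle} with the interference threshold $\hcI=\min\{(1-1/\rho)^\z\Ic,\Icd,\Iack\}/10$, so that the $\Omega(|H|)$ deliveries produced in a type-A phase also trigger {\ack} and hence removal from the contention pool. Without folding $\Iack$ into $\hcI$, your claim that each node in $D_v^\rho$ can be charged at most once over $T$ does not follow.
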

Note that the assumption $t'-t=O(n^2)$ is needed only for making the claim w.h.p.: it can be relaxed to higher degree polynomials by only increasing constant factors.
\begin{proof}
Let us fix constants $\rho,\hphi,\gamma$ so that Props.~\ref{pr:Pdecrease} and~\ref{pr:idle} with $\hcI=\min\{(1-1/\rho)^\z\Ic,\Icd,\Iack\}/10$.

 We partition $T$ into phases (for node $v$) and classify them into two types: (type A) phases $H$ where at least $1/10$-fraction of rounds are high contention rounds (i.e. $P_t^\rho(v) \ge \eta$), and (type B) phases $H$ where at least $9/10$-fraction of rounds are low contention rounds (i.e. $P_t^\rho(v) < \eta$).

\begin{claim}
Node $v$ mass-delivers in a type B phase, w.h.p.
\end{claim}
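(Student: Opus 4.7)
The plan is to combine Proposition~\ref{pr:idle} with a worst-case tracking of $p_t(v)$ through the phase. Fix a type B phase $H$ of length $\gamma\log n$. By Proposition~\ref{pr:idle}, with probability $1-O(n^{-3})$ at least a $3/5$-fraction of the rounds of $H$ are \emph{good} for $v$: in each of them $v$ detects {\idle}, $P_t^\rho(v)\le\eta$, and $\hcI_t^\rho(v)\le\hcI$. Condition on this event. Since every good round is idle and {\ta} halves $p_t(v)$ only on {\busy}, the phase contains at most $(2/5)|H|=(2\gamma/5)\log n$ busy rounds.

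Next I would show that $p_t(v)$ saturates at the cap $1/2$ along a $\Theta(\log n)$-sized suffix of the good rounds. List the good rounds as $t_1<\dots<t_G$ with $G\ge (3\gamma/5)\log n$. Before $t_k$, the walk $\log_2 p_t(v)$ has seen at least $k-1$ increments (one per preceding good round) and at most $(2\gamma/5)\log n$ decrements (from all busy rounds of the phase). Since $p_t(v)\ge n^{-1}$ at all times, this gives
\[
\log_2 p_{t_k}(v)\ \ge\ -\log_2 n\ +\ (k-1)\ -\ (2\gamma/5)\log n,
\]
so $p_{t_k}(v)=1/2$ whenever $k\ge (1+2\gamma/5)\log n+1$. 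Taking $\gamma$ a large enough constant, the last $(\gamma/5-1)\log n=\Omega(\log n)$ good rounds, which I call \emph{saturated}, all satisfy $p_{t_k}(v)=1/2$.

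In each saturated good round, $v$ transmits with probability $1/2$; conditioned on transmitting, I then argue that $v$ mass-delivers with some absolute constant probability $c>0$. This reuses the ``good-round success'' ingredient behind Proposition~\ref{pr:Pdecrease}: since $D_v^{\rho_c}\subseteq D_v^\rho$ and $P_t^\rho(v)\le\eta$, the probability that no other node in $D_v^{\rho_c}$ transmits is at least $e^{-2\eta}=\Omega(1)$; the interference at $v$ from the annulus $D_v^\rho\setminus D_v^{\rho_c}$ has expectation $O(\eta P/(\rho_c R)^\z)$ (using the $\ge\rho_c R$ distance lower bound and bounded independence), and the external contribution has expectation $\hcI_t^\rho(v)\le\hcI$, so a Markov bound together with the choice $\hcI=\min\{(1-1/\rho)^\z\Ic,\Icd,\Iack\}/10$ keeps the total interference below $\min\{\Ic,\Iack\}$ with constant probability. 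Then {\scc} applies, $v$ mass-delivers, and {\ack} returns $1$, so $v$ stops. Because $v$'s transmission coin-flips across the $\Omega(\log n)$ saturated good rounds are independent, the probability that $v$ fails in every one of them is $(1-c/2)^{\Omega(\log n)}=n^{-\Omega(1)}$; a union bound with the $O(n^{-3})$ failure of Proposition~\ref{pr:idle} yields the claim w.h.p.

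I expect the main obstacle to be the second step: identifying a $\Theta(\log n)$-sized tail of good rounds in which $p_t(v)$ has already saturated. Because good, idle-but-not-good, and busy rounds may be interleaved arbitrarily by the adversary, no averaging argument suffices; the entire busy budget must be absorbed up front via a worst-case walk bound before saturation can be guaranteed, which is precisely what pins the constant $\gamma$ to the $3/5$-vs-$2/5$ split of Proposition~\ref{pr:idle} and is the only place where the strength of that proposition is actually used.
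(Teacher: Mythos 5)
Your overall route is the same as the paper's: invoke Prop.~\ref{pr:idle} to get a $3/5$-fraction of idle, low-contention, low-interference rounds, convert the doubling--halving budget into many rounds in which $p_t(v)=1/2$, and finish with a constant per-round success probability (your inline clear-channel/Markov argument is essentially Lemma~\ref{le:broadgood}). The problem is the middle step, which you yourself flag as the crux. The walk bound $\log_2 p_{t_k}(v)\ge -\log_2 n+(k-1)-(2\gamma/5)\log n$ is not valid, because doublings are capped at $1/2$: an increment executed while $p_t(v)$ already sits at the cap is wasted and cannot be banked against later halvings, so increments and decrements cannot simply be summed. Consequently the suffix-saturation claim fails: if the good rounds come early (so most doublings do nothing) and the adversary spends its {\busy} budget late in the phase — e.g.\ alternating {\busy}/good over the last $2(\gamma/5-1)\log n$ rounds, or stacking up to $(2\gamma/5)\log n$ {\busy} rounds immediately before the final good rounds — then the last $(\gamma/5-1)\log n$ good rounds are not saturated; in the first pattern each has $p=1/4$, and in the second the last good rounds can have $p$ as small as $1/n$. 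So the step "the last $\Theta(\log n)$ good rounds all have $p_{t_k}(v)=1/2$" is false as stated, and your concluding coin-flip argument over that suffix does not follow.

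The repair is exactly the paper's accounting, which is a charging (deficit) argument rather than a worst-case walk bound: every good ({\idle}) round with $p_t(v)<1/2$ strictly decreases the deficit $\log_2\bigl(1/(2p_t(v))\bigr)\ge 0$, and the total deficit ever injected during the phase is at most the initial $\log n$ plus the number of {\busy} rounds, i.e.\ at most $2|H|/5+\log n\le |H|/2$ once $\gamma>10$. Hence at least $3|H|/5-|H|/2=|H|/10$ good rounds have $p_t(v)=1/2$; they need not form a suffix, and they don't have to, since your final high-probability step only requires $\Omega(\log n)$ such rounds somewhere in the phase. With that replacement (and citing Lemma~\ref{le:broadgood} instead of re-deriving it, which sidesteps the $\rho_c=0$ corner case in your annulus estimate), your proof coincides with the paper's.
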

\begin{proof}
Let $H'$ be the low contention and interference rounds during phase $H$ where $v$ detects {\idle} channel. By Prop.~\ref{pr:idle} $Pr[|H'| \ge 3|H|/5] > 1-O(n^{-3})$. Assume for now that the latter happens. For each $t\in H'$ we have $p_{t+1}(v)=\min\{2p_t(v), 1/2\}$. Let us call this operation doubling.
The value of $p_t(v)$ at the beginning of the phase is at least $1/(2n)$, so $\log n$ doubling operations are sufficient to raise it to $1/2$. The probability can be further halved during the phase at most $|H\setminus H'|\le 2|H|/5$ times. Thus, we may assume  we have at most $2|H|/5 + \log n$ halving and at least $3|H|/5$ doubling operations applied to an initial value $1/2$. If  $\gamma > 10$, then the total number of halving operations is less than $|H|/2$. It follows that $v$ has $p_t(v)=1/2$ in at least $(3/5-1/2)|H|=|H|/10$ low contention/interference rounds. By Lemma~\ref{le:broadgood}, in each such round, $v$ mass-delivers with probability at least $0.9\cdot \frac{1}{2}\cdot 4^{-\eta}$; hence, if $\gamma$ is large enough, $v$ mass-delivers in $H$, w.h.p.
\end{proof}

It remains to argue that there will be a type B phase during time interval $T$.
Consider a type A phase $H$.
Prop.~\ref{pr:Pdecrease}  implies that with probability $1-O(n^{-3})$, $\Omega(\log{n})$ nodes in $D_v^\rho$ deliver their message and stop during phase $H$.
 Thus, with probability $1-O(n^{-1})$, there are at most $C\cdot\frac{\Delta_v^\rho(t,t')}{\log n}$ type A phases with $C>0$ a constant, as there are at most $\Delta_v^\rho(t,t')$ nodes in  $D_v^{\rho}$ during the time interval $T$ (also recall that $t' - t =O(n^2)$). We conclude that if $T$ consists of at least $C\cdot\frac{\Delta_v^\rho(t,t')}{\log n} + 1$ phases, it will contain a type B phase and $v$ will deliver its message w.h.p.
 \end{proof}

\mypar{Implications for Static Networks}
In static networks, the parameter $\Delta_v^\rho(t,t')$ is at most $|D_v^\rho(0)|=O(\Delta)$ if $\rho$ is constant, where $\Delta=\max_{v}\{|N(v)|\}$ is the maximum size of a neighborhood in the network. Thus, we obtain the following optimal result (up to constant factors) for static networks, as $\Delta$  and $\log n$ are lower bounds even when  running in the spontaneous mode~\cite{YHWL12}.
\begin{corollary}
When running {\lbr} in a static asynchronous network, each node $v$ completes local broadcast in $O(|D_v^\rho| + \log n) = O(\Delta + \log n)$ rounds, w.h.p.
\end{corollary}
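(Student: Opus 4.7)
The plan is to deduce the corollary almost directly from Theorem~\ref{thm:localbroadcast}, with the only nontrivial step being the geometric bound $|D_v^\rho|=O(\Delta)$.

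First, I would apply Theorem~\ref{thm:localbroadcast} to the interval $T=[0,t']$ with $t'=\Theta(|D_v^\rho|+\log n)$. In a static network, the vicinity $D_v^\rho(r)$ is the same set in every round $r$, so by definition $\Delta_v^\rho(0,t')=|\bigcup_{r=0}^{t'}D_v^\rho(r)|=|D_v^\rho|$. Since $|D_v^\rho|\le n$, the assumption $t'-t=O(n^2)$ of Theorem~\ref{thm:localbroadcast} holds trivially, so $v$ mass-delivers w.h.p.\ within $O(|D_v^\rho|+\log n)$ rounds. This yields the first equality in the corollary.

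To obtain the second equality, I would use the $(\rmin,\l)$-bounded independence of $(V,d)$ together with $\rmin\le \R/4$ to cover $D_v^\rho$ by $O(1)$ neighborhoods. Concretely: take a maximal $\rmin$-packing $S$ of $D_v^\rho$. Since $D_v^\rho$ is an in-ball of radius $\rho \R = (\rho \R/\rmin)\cdot \rmin$ and $\rho,\R/\rmin$ are constants, the bounded-independence assumption gives $|S|\le C(\rho \R/\rmin)^\l = O(1)$. Maximality of the packing implies $S$ is a $2\rmin$-cover of $D_v^\rho$, and because $2\rmin\le \R/2 < \R_B$, each ball $B(u,2\rmin)$ with $u\in S$ is contained in the closed neighborhood of $u$ and hence has at most $\Delta+1$ elements. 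Summing over the $O(1)$ covering balls gives $|D_v^\rho|=O(\Delta)$.

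The main obstacle is this last geometric bound: one must be careful that the constants $\rho$ and $\R/\rmin$ being bounded truly yields a constant packing number, and that $2\rmin$-balls indeed embed into single-hop neighborhoods (which relies on $\rmin\le \R/4$ and the definition of $N_t(u,\e)$). Once $|D_v^\rho|=O(\Delta)$ is in hand, the corollary is immediate from Theorem~\ref{thm:localbroadcast}, and optimality is noted by citing the $\Omega(\Delta+\log n)$ lower bound of~\cite{YHWL12}.
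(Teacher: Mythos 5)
Your proposal is correct and follows essentially the same route as the paper: in a static network the dynamic degree $\Delta_v^\rho(t,t')$ is just $|D_v^\rho|$, which is $O(\Delta)$ via a constant-size cover of the vicinity by single-hop balls, and Theorem~\ref{thm:localbroadcast} then gives the bound. Your packing/cover argument merely spells out the geometric fact the paper asserts without proof, under the same implicit convention that $\rho$ and $\R/\rmin$ are model constants absorbed into the $O(\cdot)$.
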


\textbf{Remark.} In the special case when the nodes can start executing the algorithm simultaneously, i.e. in the \emph{spontaneous mode}, the nodes need not know an upper bound on the size of the network. Indeed, each node may start running {\ta} with initial probability set to an arbitrary value and with no lower limit. The first phase will be spent for stabilization and can be ignored, while the argument for the rest of the phases is nearly identical to the one in Thm.~\ref{thm:localbroadcast}.

\section{Broadcast}\label{sec:broadcast}

For the broadcast problem, we assume  nodes communicate in synchronized rounds of equal length.  Each round consists of two slots. The idea is to use   the first slot of each round for disseminating the message with {\ta} and the second slot for  notifying nodes which have no uninformed neighbors. The latter is accomplished by using higher precision primitives,  namely {\ack}($\e/2$) and {\scc}($\e/2$), when executing {\ta}. This helps to detect a transmission that is successfully received by all nodes in $N(v,\e/2)$ of a node $v$. Upon detecting such a transmission, node $v$ resends the message in the second slot, in order to inform nodes $u$ with $v\in D_u^{\e/2}$ that their neighborhood ($N(u,\e)$) has been covered. A node $u$ can detect that $v\in D_u^{\e/2}$ using {\ntd}.

The algorithm is presented below. We assume  the passiveness parameter $\beta$ of {\ta} is large enough, to be defined later. Note that the algorithm works for the non-spontaneous mode, as nodes act only after receiving the message.

\medbreak

\br($\beta$): Initially, only a source node $s$ has the message. A node $v$, upon receiving a message, starts executing {\ta}($\beta$) in the first slot of rounds. In addition, in each round $t$,\\
1. if $v$ detects {\ack} in the first slot, it retransmits in the second slot and restarts {\ta}($\beta$),\\
2. if $v$ receives a message in the first slot and detects {\ntd} in the second slot, it  restarts {\ta}($\beta$).
\smallskip

In order to evaluate the progress of the algorithm, we use a notion of a \emph{dynamic distance}, as defined below. Let $c>0$ be a parameter.
A sequence $v_1=s,v_2,\dots,v_k=v$ is called a \emph{stable $s$-$v$ path} if there is a sequence $I_1,I_2,\dots,I_{k-1}$ of time intervals with $I_i=[b_i, e_i]$, such that $e_i-b_i\geq c\log n$, $e_i-e_{i-1}\geq c\log n$ and nodes $v_{i-1}$ and $v_i$ are both alive and $v_i\in N(v_{i-1},\e)$ during $I_i$. The \emph{time-length} of a stable $s$-$v$ path is $e_{k-1} - b_1$. The \emph{stable $s$-$v$ distance} $D_{st}^c(s,v)$\marginpar{$D_{t}^c(s,v)$} is defined as the minimum time-length of a stable $s$-$v$ path.
Note that a stable path need not be connected at any fixed point in time. Moreover, most of the nodes might be missing at any given point in time.


The core idea behind the analysis of the following theorem is similar to the case of local broadcast:
we show that as soon as a neighbor $u$ of a node $v$ has the message and $u$ and $v$ keep being neighbors for $O(\log n)$ rounds, $v$ will receive a transmission of $u$ during those rounds. 


\begin{theorem}\label{thm:broadcast}
Assume the edge change rate $\tau$ is  sufficiently small. There are constants $\beta,c>0$, such that when running {\br}($\beta$) in the synchronous mode, each node $v$ receives the message in $O(D_{st}^c(s,v))$ rounds w.h.p.
\end{theorem}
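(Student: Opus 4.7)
The plan is to proceed by induction along a stable $s$-$v$ path $v_1=s,v_2,\dots,v_k=v$ with supporting intervals $I_1,\dots,I_{k-1}$, showing that $v_i$ has the message by time $e_i$ for every $i$, w.h.p. The inductive step reduces to a one-hop claim: given that $v_{i-1}$ holds the message by time $e_{i-1}$, and noting that the tail $[\max(b_i,e_{i-1}),e_i]$ of $I_i$ has length at least $c\log n$ (from $e_i-e_{i-1}\ge c\log n$ and $e_i-b_i\ge c\log n$) and that $v_{i-1},v_i$ are both alive with $v_i\in N_t(v_{i-1},\e)\subseteq N_t(v_{i-1},\e/2)$ throughout, node $v_{i-1}$ mass-delivers with respect to the $\e/2$-precision $\ack$ primitive at least once during this tail, and hence $v_i$ receives the message by time $e_i$. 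Summing $(e_i-e_{i-1})$ then yields the claimed $O(D_{st}^c(s,v))$ bound, with $c$ chosen to exceed the per-hop requirement inherited from the phase length $\gamma\log n$ of Prop.~\ref{pr:Pgoodroundl}.

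The one-hop claim is proved by essentially the argument of Thm.~\ref{thm:localbroadcast}, specialized to $v_{i-1}$ and with $\e/2$-precision in place of $\e$. Partition the tail of $I_i$ into phases (for $v_{i-1}$) and classify each as type A (at least a $1/10$-fraction of rounds are high-contention in the vicinity of $v_{i-1}$) or type B. Choosing $\hcI<\min\{(1-1/\rho)^\z\Ic(\e/2),\Icd,\Iack(\e/2)\}/10$, a type B phase yields a mass-delivery by $v_{i-1}$ via the same doubling/halving calculation as in the proof of Thm.~\ref{thm:localbroadcast}, using Prop.~\ref{pr:idle} to lift $p_t(v_{i-1})$ to $1/2$ in a constant fraction of good rounds. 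Thus it suffices to show that the tail contains either a type B phase, or a type A phase in which $v_{i-1}$ itself is among the mass-delivering nodes guaranteed by Prop.~\ref{pr:Pdecrease}.

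The ingredient that distinguishes {\br} from {\lbr} is the $\ack$+$\ntd$ restart mechanism, which prevents informed-but-redundant nodes from clogging the channel indefinitely. Whenever a node $w$ receives $\ack$ in a round, every node in $D_w^{\e/2}$ detects $\ntd$ in the second slot and restarts $\ta(\beta)$ at probability $\tfrac{1}{2}n^{-\beta}$; for $\beta$ sufficiently large, this keeps it effectively dormant over the next $\Omega(\log n)$ rounds, contributing negligible expected interference. By $(\rmin,\l)$-bounded independence, the vicinity $D_{v_{i-1}}^\rho$ contains only $O((\rho/\e)^\l)$ disjoint $\e/2$-balls, and Prop.~\ref{pr:Pdecrease} ensures that each type A phase produces $\Omega(\log n)$ such $\ack$ events in this vicinity. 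Consequently only $O(1)$ consecutive type A phases can occur before contention must drop into a type B phase, unless loud nodes are replenished by churn or edge changes; the smallness of $\tau$ together with the polynomial tail bound on the new-neighbor rate ensures that such replenishment is strictly slower than the retirement rate of $\ack$+$\ntd$.

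I expect the main obstacle to lie in making this sparsity argument quantitative: unlike in Thm.~\ref{thm:localbroadcast}, a delivered node does not halt permanently but merely resets, so the ``active budget'' of each vicinity must be tracked across phases, and a careful union bound applied over all $i\le k$ and over the $O(|I_i|/\log n)$ phases within each hop. Balancing the $O((\rho/\e)^\l)$ capacity of a vicinity against the rate of $\ack$-induced retirements (lower-bounded by Prop.~\ref{pr:Pdecrease}) and against the adversarial influx controlled by $\tau$ and the tail condition, one obtains a uniform bound of $O(1)$ consecutive type A phases per hop. Once this is in place, the remainder of the analysis is parallel to that of {\lbr}, and the theorem follows.
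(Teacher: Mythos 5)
Your outer structure (induction along a stable path, per-hop phase decomposition into type A/B, Prop.~\ref{pr:idle} for type B phases, and the $\ack$/$\ntd$ restart mechanism with large $\beta$, small $\tau$, and a packing bound as the ingredients for handling type A phases) matches the paper. But the step you yourself flag as the ``main obstacle'' is exactly the heart of the proof, and the route you sketch for it does not go through as stated. You argue that each type A phase produces $\Omega(\log n)$ $\ack$ events and conclude that ``only $O(1)$ consecutive type A phases can occur,'' balancing a cross-phase ``active budget'' of the vicinity against a retirement rate versus a replenishment rate. That inference is not valid: since delivered nodes merely restart {\ta}($\beta$) rather than halt, there is no cumulative retirement to charge against a budget across phases -- a silenced $\e/2$-ball can become loud again after roughly one phase, so many $\ack$ events in one phase do not cap the number of later type A phases. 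The argument that works (and is the one in the paper) is a contradiction \emph{within a single phase}: if a phase $H$ were type A, Prop.~\ref{pr:Pdecrease} gives a set $S$ of $\Omega(|H|)=\Omega(\log n)$ nodes of $D_v^\rho$ that mass-deliver during $H$; split $S$ into churn arrivals (which transmit during $H$ with probability at most $n^{-5}$ each, since they start at $n^{-\beta}$ with $\beta=\gamma+5$, so contribute nothing w.h.p.), edge-change arrivals (at most $O(\rho^\l)\tau|H|\ll|S|$ for $\tau$ small), and the rest, which -- because each delivery silences its entire $\e\R/2$-ball via $\ntd$ for the remainder of $H$ w.h.p. -- must form an $\e\R/2$-packing of $D_v^\rho$ and hence number only $O((\rho/\e)^\l)=O(1)$. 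This contradicts $|S|=\Omega(\log n)$ for $\gamma$ large, so w.h.p.\ \emph{no} type A phase occurs at all; your weaker disjunction (``a type A phase in which $v_{i-1}$ itself delivers'') is neither guaranteed by Prop.~\ref{pr:Pdecrease} nor needed.

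A secondary inaccuracy: you assert that a single type B phase yields mass-delivery by $v_{i-1}$ ``via the same doubling/halving calculation as in Thm.~\ref{thm:localbroadcast}.'' That calculation relied on $\beta=1$, i.e., a starting probability of at least $1/(2n)$. In {\br} the restart probability is $n^{-\beta}$ with $\beta=\gamma+5$, so one type B phase (net $\approx|H|/10=\gamma\log n/10$ doublings) cannot lift $p$ from $n^{-\beta}$ to $1/2$; one needs a constant number ($\approx 11$) of consecutive type B phases plus one more to deliver, which is why all phases of the hop interval must be shown type B and why the paper takes $c=12\gamma$. This is fixable once the within-phase contradiction above is in place, but as written your per-hop claim is not established.
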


\mypar{Implications for Static Networks}
When the network is static, Theorem~\ref{thm:broadcast} can be reformulated in terms of hop-distance $dist_G(s,v)$ in the communication graph, which is defined as the length of the shortest directed $s$-$v$ path in $G$: we have that $D_{st}^c(s,v)=O(\log n)\cdot dist_G(s,v)$ for any node $v$. Note also that in this setting nodes that succeeded transmitting or detected {\ntd} need not continue the algorithm, so they stop transmitting. In this case, setting the passiveness parameter to $\beta=1$ suffices. We call this variant of the algorithm {\br}$^*$.

\begin{corollary}
When running {\br}$^*$ in synchronous non-spontaneous mode in a static network with source $s$, each node $v$ receives the message in $O(\log n)\cdot dist_G(s,v)$ rounds w.h.p. When the communication graph is strongly connected, the broadcast from any source node is completed in $O(\log n)\cdot D_G$ rounds, where $D_G$ is the diameter of the communication graph.
\end{corollary}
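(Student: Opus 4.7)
The plan is to reduce the corollary directly to Theorem~\ref{thm:broadcast} via a simple observation about the stable distance in a static graph, and then to verify that the simplifications of {\br}$^*$ (stopping instead of restarting, and $\beta=1$) are harmless when there is no churn or edge change.

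\textbf{Step 1: Bounding stable distance by hop-distance.} Fix any node $v$ and let $k=\mathit{dist}_G(s,v)$, with $s=v_0,v_1,\dots,v_k=v$ a shortest directed path in $G$ (for the second statement, $k\le D_G$). Since the network is static, each edge $(v_{i-1},v_i)$ is present in every round. We may therefore choose the time intervals witnessing a stable path to be $I_i=[(i-1)c\log n,\, i\cdot c\log n]$, each of length $c\log n$ and with consecutive intervals shifted by exactly $c\log n$. All conditions of a stable $s$-$v$ path hold, so $D_{st}^c(s,v)\le c\log n\cdot k=O(\log n\cdot \mathit{dist}_G(s,v))$.

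\textbf{Step 2: Applying the dynamic broadcast theorem.} In the static setting, the edge change rate is $\tau=0$, so the hypothesis of Theorem~\ref{thm:broadcast} on $\tau$ is trivially satisfied. Applying it gives that $v$ receives the message in $O(D_{st}^c(s,v))=O(\log n\cdot \mathit{dist}_G(s,v))$ rounds with high probability. For the strong connectivity statement, $\mathit{dist}_G(s,v)\le D_G$ for every $v$, and a union bound over the $n$ nodes yields completion in $O(\log n\cdot D_G)$ rounds, w.h.p.

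\textbf{Step 3: Justifying $\beta=1$ and the ``stop instead of restart'' modification.} The passiveness parameter $\beta=\gamma+5$ in Theorem~\ref{thm:broadcast} was used only in the type-A analysis, to rule out that freshly inserted or newly-adjacent nodes (from churn or edge changes) transmit enough to disrupt contention bounds inside some $D_v^\rho$. In the static non-spontaneous case, no node ever becomes a new neighbor of $v$ after its first insertion; hence the partition of the newly mass-delivering set $S$ into $S_1,S_2,S_3$ collapses to the ``$S_3$'' case. The weaker bound $\beta=1$ then suffices, because the only role of $\beta$ is to pin the minimum transmission probability to $1/(2n^\beta)$, and a single reset to $1/(2n)$ at cold start is already rare enough at the timescales of Theorem~\ref{thm:broadcast}. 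The ``stop'' modification only removes nodes from future transmission pools, which can only decrease $P_t^\rho(\cdot)$ in each vicinity and hence can only help the type-B analysis (Prop.~\ref{pr:idle}) that drives a node to probability $1/2$ and delivery.

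The main obstacle, modest as it is, lies in Step~3: one has to confirm that removing delivering nodes from the contention pool never invalidates the good-round property (Prop.~\ref{pr:Pgoodroundl}) and its downstream consequences. Since Prop.~\ref{pr:Pgoodroundl} only needs upper bounds on vicinity contention and vicinity interference, and stopping can only lower both, the argument passes through verbatim; the remainder of the proof is a direct instantiation of Theorem~\ref{thm:broadcast} combined with Step~1 and a union bound.
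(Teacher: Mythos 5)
Your proposal is correct and follows essentially the same route as the paper: observe that in a static graph $D_{st}^c(s,v)=O(\log n)\cdot dist_G(s,v)$ (by chaining intervals of length $c\log n$ along a shortest path), invoke Theorem~\ref{thm:broadcast} with $\tau=0$, and note that with no churn or edge changes the ``stop'' rule and $\beta=1$ only remove nodes from the contention pool, so the type-A/type-B phase analysis still goes through. Your Step~3 is stated a bit loosely (the precise reason $\beta=1$ is harmless is that nodes entering at probability $1/(2n)$ contribute at most $1/2$ to any round's contention, which Lemma~\ref{le:phi} already absorbs, while stopped nodes never re-enter), but this matches the paper's own brief justification.
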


In the spontaneous mode, the bound above can be further improved to $O(D_G+\log n)$; see Appendix \ref{sec:spontaneous}.
This is based on finding a constant-density dominating set in $O(\log n)$ time \cite{SRS08} and simultaneously propagating along the dominators in $O(D_G+\log n)$ time.
We can therefore extend the approach based on \cite{SRS08} to uniform algorithms in bounded-independence metrics.

These results are close to best possible. We show below that in order to obtain bounds of that magnitude, it is necessary to have the {\ntd} primitive.  To this end, we extend the lower bound construction of~\cite[Thm.\ 7]{DGKN13} for ``compact SINR'' to our setting.

This construction leverages the property of our model that there can be arbitrarily many nodes that are mutually close to each other.
Namely, the bounded-independence metric is strictly more relaxed than the standard Euclidean metrics.
Indeed, there is a $O(D\log^2 n)$-round broadcast algorithm for the SINR model that does not need {\ntd} or other carrier sensing primitives \cite{JKRS14}. What the lower bound then illustrates is that to obtain such results, one must depend on opportune traits of the SINR model that we have tried to avoid and are not necessary for problems like local broadcast. Thus we can observe concrete tradeoffs depending on model assumptions.

\begin{theorem}\label{thm:brlowerbound}
 For every (possibly randomized) broadcast algorithm $\mathcal{A}$ that uses neither node coordinates nor {\ntd} primitive, there is a $(\e\R/8, 1)$-bounded-independence metric space where $\mathcal{A}$ needs $\Omega(n)$ rounds to do broadcast in a $O(1)$-broadcastable network, even if the nodes have {\cd} and {\ack} primitives and operate spontaneously.
\end{theorem}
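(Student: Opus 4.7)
The plan is to extend the compact-SINR lower bound of \cite[Thm.\ 7]{DGKN13} by exploiting two features that are simultaneously available in our model: the asymmetry of the quasi-metric $d$, and the fact that $(\epsilon R/8, 1)$-bounded independence allows arbitrarily many mutually-close nodes (they contribute only one representative to any packing). The strategy is to build an instance with a source, a large ``indistinguishable'' cluster, and a single sink $t$, so that the clear-channel condition at $t$ forces the algorithm to single out one adversarially hidden cluster member.

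First, I would construct the adversarial instance. Place a source $s$, a cluster of $n-2$ clones $w_1,\dots,w_{n-2}$ with pairwise distances much smaller than $\epsilon R/8$, and a sink $t$. Set $d(s,w_i)\le R_B$ for every $i$ so that $s$ delivers the message to all clones in one round. The adversary then picks a secret index $j$ and calibrates distances so that $d(w_j,t)\le R_B$ but $d(w_i,t)>R_B$ for $i\ne j$; because the cluster radius is much less than $\epsilon R$, this $\epsilon R$-scale separation is consistent with the $\zeta$-relaxed triangle inequality. Crucially, I exploit the quasi-metric asymmetry to set $d(t,w_i)$ larger than $R$ for every $i$, so that $t$'s transmissions never reach any clone. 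The cluster may contain $\Theta(n)$ nodes while still being $(\epsilon R/8,1)$-bounded independent, since any $(\epsilon R/8)$-packing of the cluster has constant size; in particular $D(t,\rho_c R)$ contains all clones yet admits only a constant-size packing. The resulting network is $O(1)$-broadcastable, since a scheme with side information would have $s$ transmit in round $1$ and $w_j$ transmit in round $2$.

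Second, I would establish indistinguishability and reduce to symmetric algorithms. All clones start in identical states (no coordinates, no distinguishing identifiers), receive the same messages from $s$ and from each other (by cluster symmetry), observe the same contention level via \cd\ (shared close neighborhood), receive nothing from $t$ (by asymmetry), and—by hypothesis—have no \ntd\ signal. Hence the clones are exchangeable, and by Yao's minimax principle it suffices to analyze any deterministic algorithm against a uniformly random $j\in\{1,\dots,n-2\}$; under this distribution I may assume a common transmission probability $q_r$ at each clone in round $r$, possibly depending on history. Now for $t$ to receive in round $r$, the clear-channel condition requires that exactly one node in $D(t,\rho_c R)$ transmits, and that node must be $w_j$; the probability of this event is at most $q_r(1-q_r)^{n-3}\le 1/(e(n-2))$. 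A union bound over $T$ rounds gives total success probability at most $T/(e(n-2))$, so $T=\Omega(n)$ rounds are necessary for any constant success probability.

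The main obstacle is calibrating the distances so that the construction simultaneously satisfies three competing demands: the clones must be clustered tightly enough to dominate every relevant $(\epsilon R/8)$-packing yet spread enough to allow $d(w_j,t)\le R_B<d(w_i,t)$ within the $\zeta$-relaxed triangle inequality, and $s$ must be placed outside $D(t,\rho_c R)$ (perhaps via a short $O(1)$-hop chain of auxiliary nodes) so that its possibly repeated transmissions do not inject enough interference at $t$ to break the clear channel when $w_j$ eventually transmits alone. A secondary technicality is that the \cd\ primitive is probabilistic; one must couple its outcomes across clones (or appeal to exchangeability directly) so that the symmetric-strategy reduction is preserved through adaptive, multi-round histories rather than only within a single round.
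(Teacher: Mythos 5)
Your proposal follows essentially the same route as the paper's proof: both extend the compact-SINR lower bound of~\cite{DGKN13} by planting a $\Theta(n)$-size cluster of mutually indistinguishable nodes (allowed precisely because bounded independence only constrains packings), making one hidden member the unique bridge to a sink that nobody else can reach, arguing that without coordinates or \ntd{} the nodes' histories remain symmetric with respect to \cd{} and \ack{} until the bridge transmits nearly alone, and concluding a per-round success probability of $O(1/n)$, hence $\Omega(n)$ rounds. The differences are in realization only: the paper puts the bridge at a geometrically distinct position and assigns nodes to positions uniformly at random (adding a complementary gadget for the spontaneous case), whereas you keep the bridge inside the cluster, hide it by calibrating its distance to the sink, and mute the sink via quasi-metric asymmetry.

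One step, as written, would fail and needs the paper's fix. You assert that for $t$ to receive, ``exactly one node in $D(t,\rho_c \R)$ transmits and it must be $w_j$,'' but \scc{} is evaluated at the \emph{transmitter}: delivery is forced only when no other node in $D_{w_j}^{\rho_c}$ transmits and the interference at $w_j$ is at most $\Ic$. In the SINR instantiation $\rho_c=0$, so your receiver-side condition is vacuous and the bound $q_r(1-q_r)^{n-3}$ does not follow from it; what actually rules out success under concurrent clone transmissions is the observation, made explicitly in the paper, that necessarily $P/R_B^{\z}>\Ic$, so even a single additional transmitter at intra-cluster distance drives the interference at $w_j$ above $\Ic$, voids \scc{}, and lets the adversary block reception (in graph-like instantiations with $\Ic=\infty$ one instead uses $\rho_c>0$ and places the clones inside $D_{w_j}^{\rho_c}$). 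With that repair the necessary event becomes ``$w_j$ transmits while at most a constant number of clones do,'' whose per-round probability is still $O(1/n)$; note also that your symmetric-strategy reduction must be carried through adaptive histories with conditioning on past failures (transmitters learn they are not the bridge), which is exactly what the paper's elimination/binomial-ratio count makes precise and what your exchangeability remark currently leaves hand-waved.
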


\newpage


\appendix
\section{Other Related Work}\label{DRW}


\emph{Wireless models:} 
Considering wireless interference, there are two classes of wireless network models: graph-based and physical models. Basically, the graph-based models define a local and binary type of interference, while the physical models consider fading effect of signal on wireless channels. The most classical graph-based model is the radio network model~\cite{CK85}. In this model, the network is modeled using a communication graph, where each pair of nodes that can communicate with each other is connected by an edge. It defines the interference just from direct neighbors, and a transmission can succeed if and only if there is only one neighbor of the receiver transmitting.  There are many widely used variants of the classical radio network models, including: 1) the $k$-hop model where the interference comes from $k$-hop neighbors~\cite{SW06}; 2) Unit Disc Graph (UDG) model~\cite{CCJ90} which defines the neighborhood using a unit disc; 3) Quasi Unit Disc Graph (QUDG) model~\cite{KWZ08} which just defines all pairs of nodes with distance at most $\rho$ for some given $\rho\in(0,1]$ are adjacent, and leave the `grey' area in $(\rho, 1]$ being determined by an adversary; 4) Protocol model~\cite{GK00}, where each node has a transmission range and an interference range, and a successful transmission occurs if a node falls into the transmission range of a transmitter and outside the interference ranges of all other transmitter; 5) Bounded-Independence Graph (BIG) model~\cite{SW10}, which defines abstractly and requires that the size of the maximal independent set in the $r$-hop neighborhood of each node is bounded by a polynomial function with $r$. Though the graph-based interference models miss certain crucial aspects of actual wireless networks, the simple definition of these models can help derive novel insights into distributed solutions to wireless problems. 

Physical models, also known as SINR models~\cite{GK00}, capture the fading and cumulative features of receptions in actual wireless environments.
The default assumption is that interference fades with a polynomial of the distance, and transmission succeeds only if the received signal strength is sufficiently larger than the total interference plus noise. Recently, the SINR model has attracted great attentions in the distributed community~\cite{BHM13P,BH14,DGKN13,GMW08,HHMW13,HM11i,HM12P,HWY15,JKRS14,JKS13,KV10,SRS08,YHWL12,YHWYL13,YWHL11D,YWYYL15}. Most of these works focus on networks embedded in Euclidean space, while many of the results hold also for doubling or ``bounded growth'' metrics \cite{BH14,DGKN13,JKRS14}. 
Those doubling metrics constrain growth at every (or arbitrarily small) granularity, 
while ours only bounds regions proportional to the transmission range, in particular capturing bounded independence graphs (BIG).
For more on wireless models, please refer to~\cite{LW12,SW06} .

\emph{Local Broadcast}:
In the radio network model, probably the first local broadcast result was a
randomized algorithm of Alon et al.~\cite{ABLP89} in a synchronous model, running in 
$O(\Delta\log n)$ rounds. 
Derbel and Talbi~\cite{DT10N} later generalized their algorithm to work without knowledge of $\Delta$ and their proposed algorithm can accomplish local broadcast in $O(\Delta\log n+\log^2n)$ rounds. The decay strategy also yields an $O(\Delta\log n)$ time algorithm for local broadcast without knowledge of $\Delta$~\cite{BGI87,GHLN12}. 

Goussevskaia et al.\ \cite{GMW08} gave the first results for local broadcast in the SINR model, running in time
$O(\Delta\log n)$ and $O(\Delta\log^3n)$ with and without knowledge of $\Delta$, respectively.  The latter was improved in
\cite{YWHL11D} and further improved, independently, to $O(\Delta\log n+\log^2n)$ time \cite{HM11,YHWL12}. 
With free acknowledgements, this was improved to $O(\Delta + \log^2 n)$ \cite{HM11}.
When additionally $\Delta$ is known, this was further improved recently to $O(\Delta + \log n \cdot \log\log n)$ in the spontaneous setting
\cite{BarenboimPeleg15}. The speedup of multiple channels on local broadcast was considered in~\cite{HWY15,YWYYL15}

\emph{Broadcast}:
The complexity of broadcasting is well understood in graph-based models. In the radio network model, Bar-Yehuda et al.~\cite{BGI87} presented the decay protocol which can accomplish non-spontaneous broadcast in $O(D\log n+\log^2n)$ rounds, where $D$ is the diameter. This result was improved to $\Theta(D+\log n)\log(n/D)$ independently by Czumaj and Rytter~\cite{CR03}, and Kowalski and Pelc~\cite{KP03}. These algorithms can be viewed as clever optimizations of the decay protocol and match the lower bound~\cite{ABLP91,KM98,N14}. With collision detection, this lower bound was recently broken in~\cite{GHK13}, where a solution of $O(D+polylog(n))$ was given. 
Broadcast in multi-channel radio networks was considered in~\cite{DGKN11,GGNT12}.
For the UDG model, an $O(D + \log^2 n)$ time algorithm was given in~\cite{DGKN13} in the spontaneous setting.

\emph{Distributed models of temporal variability}: 
Dynamic networks have been studied extensively in recent years (see \cite{KO11} for a survey),
but generally not in the presence of interference.

The \emph{dual graph} model~\cite{FLN09} (originally due to \cite{CMS04}) was designed to capture inherent \emph{unreliability} in wireless networks, much of which can be due to dynamicity.
The main focus of that work is on extending the radio network model in general graphs. Importantly, the dual graph model does not distinguish between interference and communication edges; only that the unreliable edges can transmit both interference and the usual communication, but their availability is under adversarial control.
Thus, there is no way to capture interference from further away nodes.
Most problems become extremely difficult against a powerful adversary, and to get good result, one must assume a much weaker one~\cite{GLN13}. This model only involves edge behavior and not node changes (churn). Both the local broadcast~\cite{GHLN12} and global broadcast~\cite{FLN09,KLNOR10,GLN13,GKLN14} problems are studied in the dual graph model.

A dynamic model that considers node insertion is the \emph{unstructured} model~\cite{KMW04}, which admits arbitrary wake-up mode and asynchronous communication.
This model was first proposed in the unit-disc setting, and then extended to bounded independence graphs (BIG)~\cite{SW09} and SINR~\cite{GMW08}. It has been widely used in the solution of a variety of distributed wireless problems~\cite{KMW04,MW08,SW09,SW10,HM11,GMW08,YWHL11D,YHWL12,YWYYL15}, including local broadcast~\cite{HM11,GMW08,YWHL11D,YHWL12,YWYYL15}, but there are no known global broadcast results. The model neither consider node deletion nor edge changes.
Hence, the impact of dynamicity  on global communication is largely unexplored. 


\section{Implementing Communication and Primitives}\label{sec:primitiveproofs}

\paragraph{Modeling communication} Our communication model captures most known algorithmic wireless models, as it is demonstrated below on the example of SINR, disk-graph based and Protocol models. 

Note that we assume below the distance $d(x,y)$ to be symmetric, but the results hold also for ``almost symmetric'' functions, i.e. when there is a constant $c$ such that $d(x,y) \le c\cdot d(y,x)$ for all $x,y$. 

\textit{SINR Model.} Consider a network in a metric space. In the SINR model of communication, if $S$ is the set of simultaneously transmitting nodes in the network, a node $v$ receives the transmission of node $u$ if and only if
\[
\frac{P}{d(u,v)^\z} > \beta \cdot \left(\sum_{w\in S\setminus \{u,v\}}\cI_{wv} + N\right),
\]
where constants $\beta\ge 1$ and $N> 0$ denote the minimum SINR threshold and the ambient noise, respectively.

Note that $\R=(P/(\beta N))^{1/\z}$ in this setting.
We can implement {\scc} with parameters $\Ic=\min\{\beta, ((1-\e)^{-\z}-1)\}N/2^{\z}$ and $\rho_c=0$, as shown in the proposition below.

\begin{proposition}\label{pr:scc}
If the interference at a node $v$ is less than $\Ic$ in round $t$, then $v$ will deliver its message if it transmits.
\end{proposition}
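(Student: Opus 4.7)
The plan is to fix any neighbor $u$ of $v$ (so $d(v,u)\le (1-\e)R$) and verify the SINR reception condition at $u$ directly: $(P/d(v,u)^\z)/(\sum_{w\in S\setminus\{v\}}\cI_{wu}+N)>\beta$. Since $\rho_c=0$, the \scc\ definition in this case only requires me to leverage the interference bound. The issue is that the hypothesis bounds the interference at $v$, while the SINR inequality I must check involves the interference at $u$, so the crux is to transfer this bound from $v$ to $u$ with a controlled loss.

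For the signal side, using $P=\beta N R^\z$ (as $\R = (P/(\beta N))^{1/\z}$), I would immediately get $P/d(v,u)^\z \ge \beta N(1-\e)^{-\z}$. Consequently, showing that the interference-plus-noise at $u$ is at most $(1-\e)^{-\z}N$ suffices. For the interference side, I would use the triangle inequality of the (here symmetric) quasi-metric with metricity $\z$. The key observation is that the piece $\Ic\le \beta N/2^\z$ in the stated value of $\Ic$ forces each individual interferer $w$ to satisfy $\cI_{wv}<\beta N/2^\z$, equivalently $d(w,v)>2R\ge 2\,d(v,u)$. The triangle inequality $d(w,v)\le d(w,u)+d(u,v)$ then yields $d(w,u)\ge d(w,v)/2$, so $\cI_{wu}\le 2^\z \cI_{wv}$.

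Summing over the set $S\setminus\{v\}$ of concurrent transmitters gives $\sum_w \cI_{wu}\le 2^\z \sum_w \cI_{wv}<2^\z \Ic\le ((1-\e)^{-\z}-1)N$, so that the interference-plus-noise at $u$ is strictly less than $(1-\e)^{-\z}N$. Plugging the two bounds into the SINR formula, the ratio exceeds $\beta N(1-\e)^{-\z}/((1-\e)^{-\z}N)=\beta$, which is exactly the reception condition. Since $u$ was an arbitrary neighbor of $v$, the transmission is received by every node in $N_t(v,\e)$, as required by \scc.

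The main obstacle is purely in getting the constants to align: the choice $\Ic=\min\{\beta,(1-\e)^{-\z}-1\}N/2^\z$ balances two opposing constraints. The $\beta/2^\z$ piece is needed so that no individual interferer is close enough to $v$ to ruin the triangle-inequality halving trick (otherwise one would have to handle close interferers separately). The $((1-\e)^{-\z}-1)/2^\z$ piece is needed so that, after blowing up the interference by the factor $2^\z$, the bound still sits within the SINR ``headroom'' $((1-\e)^{-\z}-1)N$ left by the signal lower bound. Either factor alone fails in the opposite regime, so taking the minimum is essential.
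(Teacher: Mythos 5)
Your proposal is correct and follows essentially the same route as the paper's proof: both deduce from $\Ic\le \beta N/2^\z$ that every concurrent transmitter lies at distance greater than $2R$ from $v$, use the (symmetric) triangle inequality to lose only a factor $2^\z$ when transferring interference from $v$ to a neighbor, and then verify the SINR condition via the headroom $((1-\e)^{-\z}-1)N$. The only cosmetic difference is that the paper phrases the first step as "no node in $D_v^{2}$ transmits" while you bound each interferer's distance individually; these are the same observation.
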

\begin{proof}
Note that if the interference at node $v$ is not more than $\Ic$ then there is no node $u\in D_v^{2}$ transmitting in round $t$, as otherwise the interference at $v$ would be at least $P /(2\R)^\z=\beta N /2^\z$; hence, only nodes in $\bar{D}_v^{2}$ can transmit. Consider an arbitrary node $w\in N(v)$. For each node $u\in \bar{D}_v^{2}$,
$d(u,w)\ge d(u,v)-d(v,w)\geq d(u,v)/2$.
Then, the interference at $w$ is at most:
\begin{equation*}
\cI_w \le \sum_{u\in \bar{D}_v^{2}}\frac{P}{d(u,w)^\z}\le 2^\z\sum_{u\in \bar D_v^{2}}\frac{P}{d(u,v)^\zeta}\leq 2^{\z} \Ic\le ((1-\e)^{-z}-1)N,
\end{equation*}
which implies that node $w$ receives $v$'s transmission: $\frac{P/d(v,w)^\z}{N+\cI_w}\ge\frac{\beta N(1-\e)^{-\z}}{N+((1-\e)^{-\z} - 1)N}\ge\beta$.
\end{proof}

\textit{The UDG and UBG Models.} These models are described by geometric  graphs: a node $u$ receives a message from another node $v$ if and only if $v$ is the only transmitting neighbor of $u$. 

In the Unit Disk Graph (UDG) and Unit Ball Graph (UBG) models the nodes are located in a metric space and two nodes are connected by an edge if and only if their distance is at most $\R$. The functionality of {\scc} can be modeled as follows: the transmission of a node $v$ is received by all its neighbors if there is no other node at distance less than $2\R$ from $v$ transmitting simultaneously, i.e. we can set the parameters to $\Ic=\infty$ and $\rho_c=2$.

\textit{The Quasi-UDG Model.}
The Quasi-UDG model is an extension of the UDG model: a) if $d(u,v) \le \R$ then $u$ and $v$ are connected by an edge, b) if $d(u,v) > \R'$ then they are disconnected, c) otherwise, $u$ and $v$ may be connected or not. In this case {\scc} may be implemented by setting $\Ic=\infty$ and $\rho_c=(\R+\R')/\R$, with the adversary constrained to follow the specific static situation captured by the QUDG.

\textit{The Protocol Model.} In the  Protocol Model, the nodes are in a metric space and there are two radii: $\R$ -- the communication radius, and $\R'$ -- the interference radius. A node $v$ receives the transmission of a node $u$ if and only if: 1) $v$ is in the communication range of node $u$, i.e. $d(u,v) \le \R$, and 2) there is no transmitting node $w$ such that $v$ is in the interference range of $w$: for each transmitting node $w\neq u$, $d(w,v) > \R'$.
{\scc} may be implemented here by setting $\Ic=\infty$ and $\rho_c=(\R+\R')/\R$.

\textit{The BIG Model.}
In the Bounded Independence Graph (BIG) model, for a parameter $\lambda$, 
we are given a graph on the nodes with the property that for every node $v$ and every $k \ge 1$,
the maximum independent set in the $k$-neighborhood of $v$ is $O(k^{\lambda})$.
The shortest-path distance metric on the graph is now naturally a $(1,\lambda)$-bounded independence metric.
To fit in our model, the growth parameter $\lambda$ must be less than $\zeta$.

\textit{$k$-hop Variants.}
These graph models can be naturally generalized to a model on interference, where 
nodes of distance at most $k$ cause interference, for some $k > 1$.
We capture this by extending $\rho_c$ as needed.

\paragraph{Implementing primitives with physical carrier sensing}

One way of implementing the primitives mentioned in this paper is to use physical carrier sensing, i.e. we assume the nodes have technology to detect if the interference (plus noise) is higher than a given threshold. 

We show below how to implement  {\ack}, {\cd} and {\ntd} primitives using carrier sensing.

\textbf{{\cd} primitive.}
The {\cd} primitive can be implemented using a carrier sensing threshold $T=P/((1-\e)\R)^\z$ and setting the parameter $\Icd<T$; {\busy} channel is detected if and only if the interference is at least $T$.

We will need the following technical fact.
\begin{lemma}\label{ieq}
For every $x_i \in [0, \frac{1}{2}]$, $i=1,2,\dots,n$, it holds that
$
4^{-\sum_{i=1}^{n} x_i} \leq \prod_{i=1}^{n} (1-x_i) \leq e^{-\sum_{i=1}^{n} x_i}.
$
\end{lemma}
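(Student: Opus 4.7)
The plan is to prove the two inequalities separately, reducing each to a scalar (one-variable) inequality and then multiplying across $i$.

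For the upper bound, I would invoke the standard inequality $1 - x \leq e^{-x}$, valid for every real $x$ (it follows from convexity of $e^{-x}$, or equivalently from truncating its Taylor series). Applying this to each factor $1 - x_i$ and multiplying gives $\prod_{i=1}^n (1-x_i) \leq \prod_{i=1}^n e^{-x_i} = e^{-\sum_i x_i}$, which is exactly the right-hand inequality.

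For the lower bound, I would reduce to proving the pointwise inequality $1 - x \geq 4^{-x}$ for all $x \in [0, 1/2]$; multiplying then yields $\prod_i (1 - x_i) \geq \prod_i 4^{-x_i} = 4^{-\sum_i x_i}$ as desired. The cleanest way to verify this pointwise bound is by convexity: the function $x \mapsto 4^{-x}$ is convex, so on $[0, 1/2]$ it lies below the chord joining its endpoint values. Since $4^0 = 1$ and $4^{-1/2} = 1/2$, that chord is precisely the line $y = 1 - x$, which gives $4^{-x} \leq 1 - x$ on the whole interval, with equality at the endpoints. Alternatively one could set $f(x) = \ln(1-x) + x \ln 4$, note that $f(0) = f(1/2) = 0$ and that $f''(x) = -(1-x)^{-2} < 0$, and conclude $f \geq 0$ on $[0, 1/2]$ by concavity.

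There is no serious obstacle here; the only thing worth noting is that the two constants in the statement, the range $[0, 1/2]$ and the base $4$, are chosen so that equality holds at both endpoints of the interval. Any attempt to enlarge the range of $x_i$ or shrink the base would break this chord argument, which is why the statement is sharp in the form given.
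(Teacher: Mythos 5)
Your proof is correct: the upper bound via $1-x\le e^{-x}$ and the lower bound via convexity of $4^{-x}$ on $[0,1/2]$ (where the chord through $(0,1)$ and $(1/2,1/2)$ is exactly $y=1-x$) is the standard argument, and the paper itself states this lemma as a known technical fact without giving a proof. Your closing remark about why the base $4$ and the range $[0,1/2]$ make the bound tight at both endpoints is also accurate.
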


\begin{proposition}
If $P_t(v)>\phi \ge 1$ in round $t$ then all nodes in $B(v, \R/2)$ detect {\busy} channel with probability at least $1-(1+2\phi) e^{-\phi}$. In particular, we can take $h_1=2$ if $\phi \ge 10$.
\end{proposition}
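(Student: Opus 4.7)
My plan is to reduce the event ``all nodes in $B(v, R/2)$ detect {\busy}'' to a lower bound on the number of transmitters in $B(v, R/2)$, and then bound the tail by a direct Chernoff-style expansion that leverages Lemma~\ref{ieq}. Under this implementation of {\cd}, detection happens at $u$ exactly when the interference at $u$ is at least $T = P/((1-\epsilon)R)^\zeta$.

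For the geometric reduction, I would use that because $d = f^{1/\zeta}$ by the definition of metricity, the quasi-metric $d$ satisfies the ordinary triangle inequality. Hence for any $w, u \in B(v, R/2)$ one has $d(w, u) \leq d(w, v) + d(v, u) < R$, so a single transmitter $w \in B(v, R/2)$ contributes strictly more than $P/R^\zeta$ of interference at $u$. Consequently, if at least two nodes in $B(v, R/2)$ transmit in round $t$, the interference at every $u \in B(v, R/2)$ exceeds $2 P/R^\zeta \ge T$---the last inequality holding under $2(1 - \epsilon)^\zeta \ge 1$, which applies in the precision regime of interest. Thus the bad event is contained in $\{X \leq 1\}$, where $X$ denotes the number of transmitters in $B(v, R/2)$.

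To bound $P[X \leq 1]$ I would apply Lemma~\ref{ieq}. Its upper half gives $P[X = 0] = \prod_{w \in B(v, R/2)} (1 - p_t(w)) \leq e^{-P_t(v)} \leq e^{-\phi}$. For the $X = 1$ term, factoring out $1/(1 - p_t(w)) \leq 2$ (valid since {\ta} enforces $p_t(w) \leq 1/2$),
\[
P[X = 1] = \sum_w p_t(w) \prod_{w' \neq w}(1 - p_t(w')) \leq e^{-P_t(v)} \sum_w \frac{p_t(w)}{1 - p_t(w)} \leq 2 P_t(v) e^{-P_t(v)}.
\]
Since $x \mapsto x e^{-x}$ is decreasing on $[1, \infty)$ and $P_t(v) > \phi \geq 1$, this is at most $2 \phi e^{-\phi}$, yielding $P[X \leq 1] \leq (1 + 2 \phi) e^{-\phi}$, which is the main bound.

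For the ``in particular'' clause with $h_1 = 2$, the remaining check $(1 + 2\phi) e^{-\phi} \leq 2^{-\phi}$ for $\phi \geq 10$ is equivalent to $\ln(1 + 2\phi) \leq \phi (1 - \ln 2)$; it holds (tightly) at $\phi = 10$ and continues to hold because the right side grows linearly while the left grows only logarithmically. The main technical obstacle I anticipate is the ``two transmitters suffice'' step, which uses $2(1 - \epsilon)^\zeta \geq 1$. If this is unavailable, one would instead require $X \geq k$ for $k = \lceil (1 - \epsilon)^{-\zeta} \rceil$ and extend the same expansion to bound $P[X \leq k - 1]$; the exponential rate $e^{-\phi}$ is preserved, with only the polynomial prefactor in $\phi$ growing.
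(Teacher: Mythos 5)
Your proposal is correct and follows essentially the same route as the paper's proof: reduce ``all of $B(v,\R/2)$ detect \busy'' to the event that at least two nodes in the ball transmit, then bound the probabilities of zero and exactly one transmitter via Lemma~\ref{ieq} together with the factor-$2$ trick from $p_t(\cdot)\le 1/2$, giving $(1+2\phi)e^{-\phi}$. The only differences are presentational: you spell out the threshold comparison (the condition $2(1-\e)^{\z}\ge 1$ behind ``two transmitters suffice''), the monotonicity of $x e^{-x}$, and the numerical check for $h_1=2$ at $\phi=10$, all of which the paper leaves implicit.
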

\begin{proof}
By the setting of $T$ and the definition of $B(v,\R/2)$, if two nodes in $B(v,\R/2)$ transmit in round $t$, then all nodes in $B(v,\R/2)$ will detect {\busy} channel.
Hence, the probability of all nodes in $B(v,\R/2)$ detecting {\busy} channel is at least the probability of more than one node transmitting in round $t$. The probability of no node transmitting is $p_0=\prod_{u\in B(v,\R/2)}(1-p_t(u)) \le e^{-\phi}$ by Lemma~\ref{ieq}. The probability of exactly one node transmitting is:
\[
p_1 = \sum_{u\in B(v,\R/2)}{p_t(u)\prod_{w\in B(v,\R/2)\setminus u}(1-p_t(w))}\le 2\sum_{u\in B(v,\R/2)}{p_t(u)\prod_{w\in B(v,\R/2)}(1-p_t(w))}\le 2\phi e^{-\phi},
\]
where we used the assumption that $p_t(u)\le 1/2$ and Lemma~\ref{ieq}.
Thus, the probability of detecting {\busy} channel is at least $1-p_0-p_1\ge 1-(1+2\phi)e^{-\phi}$.
\end{proof}

\begin{proposition}
For every $\rho>0$, if $P_t^\rho(v)<\eta$ and $\cI_t^\rho(v) < \Icd$ then $v$ detects {\idle} channel with probability at least $4^{-\eta}$.
\end{proposition}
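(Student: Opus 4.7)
The plan is to unpack the carrier-sensing implementation of \cd: an {\idle} reading occurs precisely when the instantaneous interference at $v$ falls below the sensing threshold $T = P/((1-\e)\R)^\z$. So I want to show that, under the two hypotheses, there is a decent chance that the total interference at $v$ stays below $T$.

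Split the interference at $v$ into a contribution from transmitters inside the vicinity $D_v^\rho$ and one from transmitters outside. The outside contribution is exactly $\cI_t^\rho(v)$, which by hypothesis is less than $\Icd$, and $\Icd < T$ by our choice of parameters. Hence the simplest sufficient event for detecting {\idle} is that no node inside $D_v^\rho$ transmits in round $t$: in that case the total interference equals $\cI_t^\rho(v) < \Icd < T$, and $v$ senses {\idle} deterministically.

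Now I would bound the probability of this sufficient event. Since transmissions are independent across nodes,
\[
\Pr[\text{no node in }D_v^\rho\text{ transmits}] \;=\; \prod_{u\in D_v^\rho}(1-p_t(u)).
\]
Applying Lemma~\ref{ieq} (with each $p_t(u)\le 1/2$) gives the lower bound $4^{-\sum_{u\in D_v^\rho} p_t(u)} = 4^{-P_t^\rho(v)}$, and the contention hypothesis $P_t^\rho(v)\le \eta$ then yields $4^{-\eta}$. This justifies taking the constant $h_2=4$ in the \cd definition. The whole argument is a short composition of the threshold choice with Lemma~\ref{ieq}, so there is no real obstacle; the only thing worth being careful about is making explicit that $\Icd < T$ is guaranteed by the parameter setup so that the ``no inside transmission'' event is genuinely sufficient for {\idle}.
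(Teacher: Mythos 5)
Your proposal is correct and follows essentially the same route as the paper's proof: condition on the event that no node in $D_v^\rho$ transmits (which, given $\cI_t^\rho(v)<\Icd<T$, forces the sensed interference below the threshold $T$ and hence an {\idle} reading), and lower-bound its probability by $\prod_{u\in D_v^\rho}(1-p_t(u))\ge 4^{-P_t^\rho(v)}\ge 4^{-\eta}$ via Lemma~\ref{ieq}. Your write-up is in fact slightly more explicit than the paper about why $\Icd<T$ makes this event sufficient.
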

\begin{proof}
By the setting of the threshold $T$, if there is no node transmitting in $D_v^\rho$ then node $v$ will detect {\idle} channel. Thus, the probability that $v$ detects {\idle} channel is at least
\begin{equation*}
\prod_{u\in D_v^{\rho}}(1-p_t(u))
\geq 4^{-\sum_{u\in D_v^{\rho}}p_t(u)}
\geq 4^{-\eta},
\end{equation*}
using Lemma~\ref{ieq}.
\end{proof}

\textbf{{\ack} primitive.}
In order to detect successful transmission, we can use interference threshold $T=\min\{\Ic, P/(\rho_c\R)^\z\}$ and set $\Iack<T$, where $\Ic(\e)$ and $\rho_c(\e)$ are the parameters of {\scc}. If a node $v$ senses that the interference is no higher than $T$, it knows that: 1. there is no node in $D_v^{\rho_c}$ transmitting, as otherwise the interference would be at least $P/(\rho_c\R)^\z$, 2. the interference is at most $\Ic$; thus, it knows that its transmission has been received by all neighbors in $N_t(\e)$ by {\scc}.

\textbf{{\ntd} primitive.}
If a node $v$ receives a message from a node $u$ then $v$ can separate the signal from the interference and  measure the received signal strength. As the nodes use uniform power assignment, node $v$ knows that $u\in D_v^{\e/2}$ if the received signal is stronger than $P/(\e\R/2)^\z$.

\paragraph{Implementing primitives by other means} The primitives can frequently be implemented in other ways, often with the logarithmic blowup that explains the differences
with the best carrier-sense-free results.

\textbf{{\cd} primitive.} In an asynchronous system, it may be impossible to implement {\cd} by other means than carrier sense. In a synchronized system, however, we can be achieved with logarithmic or polylogarithmic factor overhead.
Consider a given round. For each probability $p=2^{-i}, i=1,2, \ldots, \log n$, 
repeat $C \log n$ times: the senders in the original round transmit with probability $p$.
Using concentration bound with $C$ sufficiently large, one can infer the contention within an small approximation,
with high probability. Such a strategy has been applied, e.g., in \cite{HM12P}.

\textbf{{\ack} primitive.} 
A simple strategy is to work with only probabilistic guarantees of a transmission being received by all neighbors.
Then, simply repeat the protocol until this has been achieved $C \log n$ times, which gives an {\ack} guarantee, w.h.p.
This approach underlies, e.g., the local broadcast algorithms without carrier sense \cite{GMW08,HM11,YHWL12}.

\textbf{{\ntd} primitive.} This primitive, which is essential for dominator-based strategies for broadcast,
can be implemented using power control: by lowering the power on all units appropriately, one can ensure that nodes further away (by a small constant factor) will not be able to hear the message due to the ambient noise term, see e.g.~\cite{YHWTL12}. Alternatively, one can assume that distances can be determined in other ways, such as by GPS \cite{JKRS13,JKS13}.

\section{Proof of Proposition \ref{pr:Pgoodroundl}: Contention Control}\label{sec:contentiondetails}

Recall that we need to prove the following. We assume that at the beginning of the \emph{first} phase under consideration, the contention in the whole network is bounded by a constant. This holds for all algorithms in this paper, as the initial probability of nodes is always at most $1/n$.

\vspace{5pt}
\noindent\textbf{Proposition \ref{pr:Pgoodroundl}.}{
Let $\sigma\in (0,1)$.
If constants $\rho=\rho(\sigma,\hcI),\hphi = \hphi(\rho,\sigma)$ and $\gamma=\gamma(\rho,\hphi,\sigma,\hcI)$ are large enough then  for each node $v$ and phase $H$, with probability $1-O(n^{-3})$, 
a $(1-\sigma)$-fraction of the rounds in $H$ are good.
}
\vspace{5pt}

Recall that in a good round, there should be both bounded contention and low interference. The proof is split into two parts, each handling one of these properties.
Prop.~\ref{pr:Pgoodroundl} follows by simply combining those two parts.

 We will need the following concentration bounds.

\begin{lemma}\cite{SRS08,IK10}\label{le:chernoff}
Consider a collection of binary random variables $X_1,\ldots,X_n$, and let $X=\sum_{i=1}^nX_i$. If there are probabilities $p_1,\ldots,p_n$ with $E[\prod_{i\in S}X_i]\leq \prod_{i\in S}p_i$ for every set $S\subseteq\{1,\ldots,n\}$, then it holds for $\mu=\sum_{i=1}^np_i$ and $\delta>0$ that
\begin{equation*}
Pr[X\geq (1+\delta)\mu]\leq\left(\frac{e^{\delta}}{(1+\delta)^{1+\delta}}\right)^\mu\leq e^{-\frac{\delta^2\mu}{2(1+\delta/3)}}.
\end{equation*}
If, on the other side, there are probabilities $p_1,\ldots,p_n$ with $E[\prod_{i\in S}X_i]\geq \prod_{i\in S}p_i$ for every set $S\subseteq\{1,\ldots,n\}$, then it holds for $\mu=\sum_{i=1}^np_i$ and $0<\delta<1$ that
\begin{equation*}
Pr[X\leq (1-\delta)\mu]\leq\left(\frac{e^{-\delta}}{(1-\delta)^{1-\delta}}\right)^\mu\leq e^{-\delta^2\mu/2}.
\end{equation*}
\end{lemma}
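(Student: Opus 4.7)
The plan is to prove both tails by the classical Chernoff--Cram\'er moment generating function method, with the crucial observation that since each $X_i$ is $\{0,1\}$-valued, the MGF factorizes algebraically into a polynomial whose coefficients are exactly the product-moments controlled by the hypothesis. Concretely, for any real $t$, $e^{tX_i}=1+(e^t-1)X_i$, so
\[
e^{tX}=\prod_{i=1}^n\bigl(1+(e^t-1)X_i\bigr)=\sum_{S\subseteq\{1,\dots,n\}}(e^t-1)^{|S|}\prod_{i\in S}X_i.
\]
Taking expectations, the MGF is therefore a weighted sum of the quantities $E[\prod_{i\in S}X_i]$ that appear verbatim in the hypothesis. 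This identity is the reason the weak correlation assumption is enough to replace the usual independence step in Chernoff's proof.

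For the upper tail, I would choose $t>0$, so that all coefficients $(e^t-1)^{|S|}$ are nonnegative. Then the hypothesis $E[\prod_{i\in S}X_i]\leq \prod_{i\in S}p_i$ can be applied term by term, yielding
\[
E[e^{tX}]\leq \prod_{i=1}^n\bigl(1+(e^t-1)p_i\bigr)\leq e^{(e^t-1)\mu},
\]
using $1+x\leq e^x$ at the end. Markov's inequality gives $\Pr[X\geq (1+\delta)\mu]\leq e^{(e^t-1)\mu}e^{-t(1+\delta)\mu}$, and optimizing over $t$ by setting $t=\ln(1+\delta)$ produces the first inequality $\bigl(e^{\delta}/(1+\delta)^{1+\delta}\bigr)^{\mu}$. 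The cleaner bound $e^{-\delta^2\mu/(2(1+\delta/3))}$ then follows from the standard analytic estimate $(1+\delta)\ln(1+\delta)-\delta \geq \delta^2/(2+2\delta/3)$, which I would just quote.

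For the lower tail, the plan is analogous with $t<0$, but here lies the main obstacle: writing $\alpha=1-e^t\in(0,1)$ gives $E[e^{tX}]=\sum_S(-\alpha)^{|S|}E[\prod_{i\in S}X_i]$, so the hypothesis $E[\prod_{i\in S}X_i]\geq\prod_{i\in S}p_i$ now enters an alternating sum and cannot be invoked term by term. The cleanest way around this is to substitute $Y_i=1-X_i$ with target mean $q_i=1-p_i$, verify by inclusion--exclusion that the positive-correlation hypothesis on the $X_i$ translates into the negative-correlation hypothesis $E[\prod_{i\in S}Y_i]\leq\prod_{i\in S}q_i$ (each side is an alternating sum of the $E[\prod X]$ or $\prod p$ quantities, and the desired inequality reduces to the original one, pair by pair in the expansion), and then apply the upper-tail argument to $Y=\sum Y_i$. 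The lower-tail bound on $X$ follows since $\{X\leq(1-\delta)\mu\}=\{Y\geq(1+\delta')(n-\mu)\}$ for an appropriate $\delta'$, and optimizing gives $\bigl(e^{-\delta}/(1-\delta)^{1-\delta}\bigr)^\mu$, which is upper-bounded by $e^{-\delta^2\mu/2}$ via the standard estimate $(1-\delta)\ln(1-\delta)+\delta\geq\delta^2/2$. The argument would close by noting that the translation step is the only place where the binary nature of $X_i$ is used a second time, and this is precisely where the constructive approach of~\cite{IK10} can be cited in place of a hand-written verification.
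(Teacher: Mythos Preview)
The paper does not prove this lemma; it simply quotes it from~\cite{SRS08,IK10}. Your upper-tail argument is correct and is exactly the Impagliazzo--Kabanets device: the binary identity $e^{tX_i}=1+(e^t-1)X_i$ turns the MGF into a positive combination of the product moments, so the hypothesis applies termwise.

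Your lower-tail argument, however, has a genuine gap. The claimed inclusion--exclusion translation---that $E[\prod_{i\in S}X_i]\ge\prod_{i\in S}p_i$ for all $S$ implies $E[\prod_{i\in S}(1-X_i)]\le\prod_{i\in S}(1-p_i)$ for all $S$---is false. Both sides do expand as alternating sums, but the terms with even $|T|$ go the wrong way, and they do not cancel ``pair by pair.'' Concretely, take $X_1=X_2$ with $\Pr[X_1=1]=\tfrac12$ and $p_1=p_2=\tfrac12$. The hypothesis holds (in particular $E[X_1X_2]=\tfrac12\ge\tfrac14$), yet $E[(1-X_1)(1-X_2)]=E[1-X_1]=\tfrac12>\tfrac14=q_1q_2$. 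So the reduction to the upper tail via $Y_i=1-X_i$ does not go through, and~\cite{IK10} cannot be invoked to rescue it.

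In fact this same example shows that the lower-tail inequality, as stated here, cannot hold: with $\mu=1$ and $\delta=0.95$ one has $\Pr[X\le 0.05]=\Pr[X=0]=\tfrac12$, while $\bigl(e^{-0.95}/0.05^{0.05}\bigr)\approx 0.449<\tfrac12$. The correct hypothesis for the lower tail in the cited sources is stated directly on the complements, namely $E[\prod_{i\in S}(1-X_i)]\le\prod_{i\in S}(1-p_i)$ for all $S$ (equivalently, the upper-tail condition applied to $Y_i=1-X_i$). With that hypothesis your MGF argument for $t<0$ becomes a positive-coefficient expansion in $Y_i$ and goes through verbatim; the issue is only with the hypothesis as written, not with your method.
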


\subsection{Bounded Contention Rounds}\label{sec:lowc}

First, we show that for each fixed node $u$, the contention in the \emph{local} neighborhood $B(u,\R/2)$ is bounded in most of the rounds of a phase. To this end, we show that in each round, the contention is either already low or will be halved with significant probability, then apply a concentration bound to show the claim.  Recall the constant $h_1$ from the definition of {\cd}.

\begin{lemma}\label{le:phi}
Let $H$ be a time interval and assume  that $P_t(u) =\phi_0\ge 0$ at the beginning of $H$. Then for every $\phi\ge 3$, $P_t(u)>\phi$ happens at most $O(\phi^{-k})\cdot |H| + \max\{0, \log_{6/5}(\phi_0/\phi)\}$ times during $H$, with probability $1-2^{-\mu |H|}$, where $\mu=h_1^{-\phi}/(1-O(\phi^{-k}))$ and $k>0$ is the edge change parameter.
\end{lemma}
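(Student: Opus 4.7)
My plan is a potential-function argument combined with a concentration bound on CD failures. Define the potential
\[
D_t = \max\{0, \log_{6/5}(P_t(u)/\phi)\},
\]
which is strictly positive exactly when $P_t(u) > \phi$ and initially satisfies $D_0 = \max\{0, \log_{6/5}(\phi_0/\phi)\}$. The goal is to bound $B := |\{t \in H : D_t > 0\}|$. I would classify each round $t \in H$ into three types: \emph{edge-bad} if more than $\phi/5$ new neighbors are inserted into $B(u,R/2)$ in round $t$ (by the edge-change assumption the fraction of such rounds is $O(\phi^{-k})$, giving $O(\phi^{-k})|H|$ edge-bad rounds in total); \emph{CD-failed} if $P_t(u)>\phi$ yet the {\cd} primitive fails, which by its definition has conditional probability at most $h_1^{-\phi}$ and is essentially independent across rounds given the history; and \emph{very good} if $P_t(u) > \phi$ but neither of the two bad events occurs.

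The core drift inequality is that in a very-good round, every node in $B(u,R/2)$ halves its probability while at most $\phi/5$ newly arrived neighbors each contribute at most $1/2$, so
\[
P_{t+1}(u) \le \tfrac{1}{2} P_t(u) + \tfrac{\phi}{10} + O(n^{1-\beta}) \le \tfrac{3}{5} P_t(u),
\]
using $P_t(u) > \phi$ and the negligibility of the initial-churn contribution $\tfrac{1}{2}n^{-\beta}$. Hence $D_{t+1} \le D_t - 1$ in every very-good round. In a CD-failed round the worst case is $P_{t+1} \le 2P_t$, giving $D_{t+1} - D_t \le \log_{6/5} 2 = O(1)$; in an edge-bad round I would split the number of new arrivals into dyadic ranges $[\phi\cdot 2^i, \phi\cdot 2^{i+1})$ and exploit the polynomial tail from the edge-change assumption to argue that the total potential accumulated from all edge-bad rounds in $H$ is $O(|H|\cdot \phi^{-k})$ (absorbing constants into the $O(\cdot)$).

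A telescoping sum of the drift then gives
\[
(\text{very-good count}) \le D_0 + (\text{total potential accumulated in non-very-good rounds}),
\]
so the count of very-good rounds with $P_t(u)>\phi$ is at most $O(\phi^{-k})|H| + D_0$. Adding the edge-bad count $O(\phi^{-k})|H|$ and the CD-failed count yields the bound on $B$ up to the CD-failure term. For the latter I apply Lemma~\ref{le:chernoff} to the sequence of indicator variables of CD failure: since each has conditional probability at most $h_1^{-\phi}$ and is activated only in the at most $B$ rounds with $P_t(u)>\phi$, the CD-failed count is at most $(1+\delta)h_1^{-\phi} B$ with failure probability $2^{-\Omega(h_1^{-\phi}|H|)}$, which after solving for $B$ yields the claimed $\mu = h_1^{-\phi}/(1 - O(\phi^{-k}))$.

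The hard part I expect is the edge-bad contribution: a single adversarial burst of new neighbors can inflate $P_t(u)$ arbitrarily, and it can then take many consecutive very-good rounds to drain the potential back below $\phi$. Controlling this drain requires simultaneously using (i) the $O(\phi^{-k})$ tail bound on bursts at every threshold level, (ii) a dyadic summation so the total potential injected per unit time telescopes into an $O(|H|\phi^{-k})$ term, and (iii) the fact that $\phi \ge 3$ ensures the geometric decay factor $3/5$ in very-good rounds is strict. Making these pieces fit together, and carefully formalizing the conditional-independence needed to invoke Lemma~\ref{le:chernoff} for the CD-failure tally even though $B$ is itself a random variable (handled via a union bound over values of $B$), is the main technical obstacle.
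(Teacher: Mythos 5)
Your proposal follows essentially the same route as the paper's proof: discard the $O(\phi^{-k})\cdot|H|$ edge-bad rounds using the edge-change tail assumption, use the {\cd} guarantee to get per-round halving-failure probability at most $h_1^{-\phi}$ (with the churn and edge-change additives absorbed via $\phi\ge 3$), control the failure count with Lemma~\ref{le:chernoff}, and convert this into a bound on the number of high-contention rounds by a $\log_{6/5}$ drift/counting argument that produces the $\max\{0,\log_{6/5}(\phi_0/\phi)\}$ term -- your global potential telescoping is a repackaging of the paper's counting over maximal high-contention intervals. The one point where the paper is cleaner is exactly the conditioning issue you flag at the end: rather than a union bound over the random count $B$, it defines the failure indicator $X_t$ for \emph{every} round of the fixed set $H'$ (with $X_t=0$ deterministically whenever $P_t(u)\le\phi$), so that $E\left[\prod_{s\in S}X_s\right]\le \left(h_1^{-\phi}\right)^{|S|}$ holds over a deterministic index set and Lemma~\ref{le:chernoff} applies directly.
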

\begin{proof}

By the assumptions on edge changes, we have that during phase $H$, the fraction of rounds where more than $\phi/4$ nodes become a neighbor of $u$ because of edge changes is  $O(\phi^{-k})$. Let $H'$ denote the remaining set of rounds. We have $|H'| = (1-O(\phi^{-k}))|H|$. In each of those rounds, the contribution of edge changes in $P_t(u)$ is clearly at most $\phi/8$. 

Next we bound the number of rounds in $H'$, where $P_t(u) > \phi$.
Consider such a round $t\in H'$. Let $\mu'=h_1^{-\phi}$. By the definition of {\cd}, all nodes in $B(u,\R/2)$ detect {\busy} channel in round $t$ and halve their transmission probabilities with probability at least $1-\mu'$. Thus, $Pr[P_{t+1}(u)\leq \frac{1}{2}P_t(u) + 1/2 + \phi/8]\ge 1-\mu'$, where the additive $1/2$ accounts for the sum of probabilities of the nodes that just join the ball $B(u,\R/2)$ due to node churn (recall that each of them has an initial probability at most $1/2n$) and $\phi/8$ is an upper bound on the contention due to edge changes (by the definition of $H'$).
For each round $t\in H'$, let us define a binary random variable $X_t$ as follows: $X_t=0$ if $P_{t}(u)\leq\phi$ or $P_{t}(u)>\phi$ and $P_{t+1}(u)\leq\frac{1}{2}P_t(u) +1/2 + \phi/8\le 5P_t(u)/6$ (recall that $\phi \ge 3$) and $X_t=1$ otherwise. By the discussion above, we have:
\[
\begin{aligned}
Pr[X_t=0]&=Pr[P_{t}(u)\leq\phi]+Pr[P_{t}(u)>\phi\wedge P_{t+1}(u)\leq\frac{5}{6}P_t(u)]\\
&=Pr[P_{t}(u)\leq\phi]+Pr[P_{t}(u)>\phi]\cdot Pr[P_{t+1}(u)\leq\frac{5}{6}P_t(u)|P_{t}(u)>\phi]\\
&\geq Pr[P_{t}(u)\leq\phi]+Pr[P_{t}(u)>\phi]\cdot (1-\mu')\\
&\geq 1-\mu'.
\end{aligned}
\]
 This implies that for each round $t\in H$ and each subset $S\subseteq H$ of earlier rounds ($s<t$ for $s\in S$), $Pr[X_t=0|\prod_{s\in S}X_s=1]\geq 1-\mu'$ and $Pr[X_t=1|\prod_{s\in S}X_s=1]\leq \mu'$. Thus, for each subset $S\subseteq H'$ we get 
$
E[\prod_{s\in S}X_s]\le \mu'^{|S|},
$
 and if we denote $X=\sum_{t\in H'}{X_t}$, then $E[X]\le \mu' |H'|=\mu |H|$, where we denote $\mu = \mu' |H|/|H'|$. Thus, we can apply Chernoff bound (Lemma~\ref{le:chernoff}) with $\delta=3/2$ to bound $X$ with high probability:
 \begin{equation}\label{eq:Xupperl}
Pr[X \ge 3\mu|H|/2]\le 2^{-\mu|H|}.
\end{equation}
Using this bound, we obtain a bound on the number of rounds $t$ with $P_t(u)> \phi$. Let $Y$ denote this number.
Consider a maximal interval $\hat H=[r_1,r_2]\subseteq H'$ such that $P_t(u) > \phi$ for all $t\in \hat H$. For each round  $t\in \hat H$, $P_{t+1}(u)\le \frac{5}{6}P_t(u)$  if $X_t=0$ and $P_{t+1}(u)\leq 2P_t(u) + 1 + \phi/4 < 3P_t(u)$ otherwise. By maximality of $\hat H$, if $r_1$ \emph{is not} the first round of phase $H$, then we have $P_{r_1 - 1}(u) \le \phi$. Then a simple calculation shows that $X_t=1$ for at least $1/8$ part of the rounds of $\hat H$, i.e. $\hat X=\sum_{t\in \hat H}X_t \ge |\hat H|/8$. On the other hand, if there is a unique maximal interval $\hat H$ starting at the first round of phase $H$ and such that $P_{r_1}(u) > \phi$, then it must hold that $\hat X \ge (|\hat H| - L)/8$, where $L=\max\{0, \log_{6/5}(\phi_0/\phi)$. Combining these observations, we get that $Y\le 8X+L$.

By combining this bound with~(\ref{eq:Xupperl}), we have:
\[
Pr[Y\ge 8\cdot 3\mu|H|/2+L] \le Pr[X\ge \delta\mu|H|] < 2^{-\mu|H|}.
\]
Thus, with probability $1-2^{-\mu|H|}$, in at most $O(h_1^{-\phi} + \phi^{-k})=O(\phi^{-k})$ fraction of rounds in $H$ there can be $P_t(u) > \phi$.
\end{proof}

Recall that for node $v$ we need to show that $P_t^\rho(v)\le \hphi$ for most rounds in a phase. We prove this by taking a \emph{constant size} $\R/2$-cover of $D_v^\rho$ and applying Lemma~\ref{le:phi} to all nodes in the cover \emph{simultaneously}.

\begin{proposition}\label{pr:lowc}
For every $\sigma\in (0,1)$ and $\rho>0$, there are constants  $\hphi = \hphi(\rho,\sigma)$ and $\gamma =\gamma(\rho,\sigma)$, such that for every node $v$ and phase $H$, with probability $1-O(n^{-3})$, at least $(1-\sigma)$-fraction of rounds in $H$ are bounded contention rounds, i.e. $P_t^\rho(v) \le \hphi$.
\end{proposition}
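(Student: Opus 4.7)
The plan is to reduce control of $P_t^\rho(v)$ to a constant number of local-neighborhood contentions of the form $P_t(u)$, and then apply Lemma~\ref{le:phi} to each of them together with a union bound. Since $D_v^\rho = D(v,\rho\R)$ and $\rmin\le\R/4$, the $(\rmin,\l)$-bounded-independence assumption yields a maximal $\rmin$-packing $\{u_1,\dots,u_m\}$ of $D_v^\rho$ of size $m \le C\cdot(\rho\R/\rmin)^\l = O(\rho^\l)$. Any maximal $\rmin$-packing with $\rmin\le\R/4$ is automatically an $(\R/2)$-cover, so every node $w\in D_v^\rho$ lies in some $B(u_i,\R/2)$, giving $P_t^\rho(v)\le\sum_{i=1}^m P_t(u_i)$. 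In particular, if $P_t(u_i)\le\phi$ for every $i$ in a given round, then $P_t^\rho(v)\le m\phi$.

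I would then pick the threshold $\phi=\phi(\rho,\sigma)$ large enough to ensure $m\cdot O(\phi^{-k}) \le \sigma/2$, which is possible because $k>0$ and $m$ is a constant. Lemma~\ref{le:phi} applied to each $u_i$ bounds the number of rounds $t\in H$ with $P_t(u_i)>\phi$ by $O(\phi^{-k})\,|H| + \max\{0,\log_{6/5}(\phi_0/\phi)\}$ with probability at least $1-2^{-\mu|H|}$, where $\mu = h_1^{-\phi}/(1-O(\phi^{-k}))$ is a positive constant determined by $\phi$. The standing assumption that the initial per-node probability is at most $1/(2n)$ makes $\phi_0 = \mathrm{poly}(n)$, so the additive term is $O(\log n)$. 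A union bound over the $m$ centers gives that, outside an event of probability $m\cdot 2^{-\mu|H|}$, at most $m\cdot O(\phi^{-k})\,|H| + O(m\log n) \le \sigma|H|$ rounds of $H$ have some $P_t(u_i)>\phi$, provided $|H|=\gamma\log n$ is large enough to absorb the additive $O(m\log n)$ into $\sigma|H|/2$. Choosing $\gamma$ further so that $\mu\gamma\log n \ge 4\log_2 n$ drives the failure probability to $O(n^{-3})$, and setting $\hphi := m\phi$ completes the claim.

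The main obstacle is that the cover $\{u_1,\dots,u_m\}$ is a priori a snapshot of $D_v^\rho$ at a single round, whereas Lemma~\ref{le:phi} tracks the close-neighborhood contention of a fixed node across the whole phase. I would resolve this by observing that the proof of Lemma~\ref{le:phi} is really a statement about the ball $B(\cdot,\R/2)$ around any \emph{point}: the {\cd} primitive forces nodes inside such a ball to back off their transmission probability when the local contention is high, regardless of whether the center is a node, so virtual point-centers may be used and the cover fixed for the whole phase. The bounded rate of new neighbors due to edge changes is already built into Lemma~\ref{le:phi} through its $O(\phi^{-k})$ term and the additive $1/2$ and $\phi/8$ slack in its recurrence, so no further accounting for churn or edge changes inside the phase is required.
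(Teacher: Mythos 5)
Correct, and essentially the paper's own argument: fix an $\R/2$-cover of $D_v^\rho$ of size $O(\rho^\l)$, apply Lemma~\ref{le:phi} to each center with per-ball threshold $\phi=\hphi/m$, union bound over the constantly many centers, and sum the local contentions to bound $P_t^\rho(v)$. The only divergence is the treatment of the contention at the start of $H$: the paper locates a round with $P_t(u)\le\phi$ in the \emph{preceding} phase and restarts Lemma~\ref{le:phi} from it so that the additive $\log_{6/5}(\phi_0/\phi)$ term vanishes, whereas you bound $\phi_0\le \mathrm{poly}(n)$ (which follows from $p_t(\cdot)\le 1/2$ and the polynomial bound on the number of nodes, not from the $1/(2n)$ initial probability) and absorb the resulting $O(\log n)$ term into $\sigma|H|$ by enlarging $\gamma=\gamma(\rho,\sigma)$ --- equally valid for this proposition since only constantly many centers are involved, though the paper's restart trick is what later allows Corollary~\ref{co:Pexpectation} to avoid any additive term.
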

\begin{proof}
It will be sufficient to choose $\gamma$ and $\hphi$ such that $\hphi = O(\rho^{\l + \l/k}/\sigma^{1/k})$ and $\gamma=h_1^{O(\hphi/\rho^\l)}$.
 By the definition of $\lambda$ and bounded independence, there is a $\R/2$-cover $S$ of $D_v^\rho$ of size $|S|=O(\rho^\l)$. Set $\phi = {\hphi} / |S|$. Let us fix a node $u\in S$. If $H$ is the first phase or $P_t(u)\le \phi$ then we set $\hat H=H$. Otherwise, consider the phase $H'$ of $\gamma\log n$ rounds preceding $H$. By Lemma~\ref{le:phi}, if we set $\gamma\ge 3h_1^\phi$ then, with probability $1-n^{-3}$, there is a round in $H'$ where $P_t(u) \le \phi$. Let $t_0$ be such a round and set $\hat H = [t_0,t_l]$, where $t_l$ is the last round of $H$. Clearly, $|\hat H| \le 2 |H|$. Now we can apply Lemma~\ref{le:phi} for node $u$ with $\phi$ as above and conclude that with probability at least $1-n^{-3}$, there are at most $O(\phi^{-k})\cdot|\hat H|$ rounds in $\hat H$ (note that $\phi_0=\phi$) where $P_t(u) > \phi$, so there are at most $O(\phi^{-k})\cdot|H|$ rounds in $H$ where $P_t(u) > \phi$. Now we can apply the same argument for all nodes in $|S|$ simultaneously and conclude that with probability at least $1-O(n^{-3})$, there are at most $O(\phi^{-k})\cdot|S||H|\le \sigma|H|$ rounds in $H$ where $P_t(u) > \phi$ for \emph{every} $u\in S$, if we take $\hphi = O(\rho^{\l + \l/k}/\sigma^{1/k})$ to be large enough. In the remaining $(1-\sigma)|H|$ rounds we have $P_t^\rho(v)\le \sum_{u\in S}P_t(u)\le \hphi$.
\end{proof}

We extract the following result from the proof of Lemma~\ref{le:phi}, to use it later. The proof is similar to the proof of Prop.~\ref{pr:lowc}.

\begin{corollary}\label{co:Pexpectation}
For every $\phi\ge 3$ and each node $v$, the expected number of rounds in each phase $H$ where $P_t(v) > \phi$, is  $O(\phi^{-k})\cdot|H|$.
\end{corollary}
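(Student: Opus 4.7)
My plan is to extract an expectation bound directly from the random variables constructed in the proof of Lemma~\ref{le:phi}, rather than invoking only its high-probability conclusion. Fix a phase $H=[t_0,t_\ell]$, let $Y$ denote the number of rounds $t\in H$ with $P_t(v)>\phi$, and write $\phi_0=P_{t_0}(v)$.

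First I would replay the decomposition in the proof of Lemma~\ref{le:phi}: split $H$ into the subset $H'$ of size $(1-O(\phi^{-k}))|H|$ on which edge changes contribute at most $\phi/8$ to $P_t(v)$, and the complementary $O(\phi^{-k})|H|$ edge-heavy rounds that contribute trivially. On $H'$ define the indicator $X_t$ exactly as in that proof, so that its argument yields $Y\le (|H|-|H'|)+8X+L$ with $X=\sum_{t\in H'}X_t$ and $L=\max\{0,\log_{6/5}(\phi_0/\phi)\}$. The conditional bound $Pr[X_t=1\mid \prod_{s<t}X_s=1]\le h_1^{-\phi}$ established there gives, by linearity of expectation, $E[X]\le h_1^{-\phi}|H|$. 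Since $h_1^{-\phi}=O(\phi^{-k})$ uniformly for $\phi\ge 3$ (exponential decay dominates any fixed polynomial), the $8X$ term and the edge-heavy rounds together already contribute $O(\phi^{-k})|H|$ to $E[Y]$.

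The remaining term $E[L]$ depends on the contention history preceding $H$, and this is where I would mirror Proposition~\ref{pr:lowc}. Let $t_r$ be the latest round in the preceding phase $H_{\mathrm{prev}}$ with $P_{t_r}(v)\le \phi$; on the event $\mathcal{E}$ that such a $t_r$ exists, repeat the analysis above on the extended interval $[t_r,t_\ell]$ (of length at most $2|H|$), which forces $L=0$ and gives $E[Y\cdot \mathbf{1}_{\mathcal{E}}]=O(\phi^{-k})|H|$. On the complementary event every round of $H_{\mathrm{prev}}$ has $P_t(v)>\phi$, which by the same martingale-style argument forces the corresponding $X_{\mathrm{prev}}=\Omega(|H|)$, an event whose probability is the Chernoff tail $2^{-\Omega(h_1^{-\phi}|H|)}$ of Lemma~\ref{le:phi}. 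Iterating this reduction back to the first phase, where the paper's standing assumption bounds the initial contention by a constant and so $L=0$, controls the accumulated contribution of these rare events.

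The chief obstacle is reconciling the two bounds when $\phi$ is large: the Chernoff tail $2^{-\Omega(h_1^{-\phi}|H|)}$ weakens as $\phi$ grows while the target $O(\phi^{-k})|H|$ tightens. For $\phi$ in the regime where $\phi^{-k}|H|=\Omega(1)$ the reduction is direct; for larger $\phi$, the target itself drops below $1$, and one must instead use the crude a-priori bound $\phi_0\le n/2$ (so $L=O(\log n)$) multiplied by the tail probability of a single violation persisting through $H_{\mathrm{prev}}$, which decays fast enough to keep the overall contribution within $O(\phi^{-k})|H|$. Combining all pieces yields $E[Y]\le O(\phi^{-k})|H|$ as claimed.
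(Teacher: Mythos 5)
Your argument is correct and follows essentially the same route as the paper's own proof: extend the phase backwards to a round of the preceding phase where $P_t(v)\le\phi$ (the event that such a round exists, as in Prop.~\ref{pr:lowc}, holds with high probability), apply the indicator-variable machinery from the proof of Lemma~\ref{le:phi} to the extended interval of length at most $2|H|$ to obtain the conditional expectation bound, and absorb the rare complementary event with a crude bound. Your explicit case split for very large $\phi$ is more detailed than the paper's one-line treatment of the complement, but it is the same underlying idea.
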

\begin{proof}
Let $\hat H$ be the shortest time interval containing $H$ such that $P_t(v)\le \phi$ at the beginning of $\hat H$.  Let $\mathcal{E}$ denote the event that $|\hat H| \le 2 |H|$. As in the proof of Prop.~\ref{pr:lowc}, we have $Pr[\mathcal{E}] > 1-n^{-3}$.  Let $X$ denote the number of rounds $t\in \hat H$ with $P_t(v)>\phi$. We know from the proof of Lemma~\ref{le:phi} that $E[X|\mathcal{E}] = O(\phi^{-k})\cdot|\hat H| = O(\phi^{-k})\cdot |H|$ and $E[X]=E[X|\mathcal{E}]\cdot Pr[\mathcal{E}] + E[X|\bar{\mathcal{E}}]\cdot Pr[\bar{\mathcal{E}}]= O(\phi^{-k})\cdot |H|$. This completes the proof as $H\subseteq \hat H$.
\end{proof}

\subsection{Low Interference Rounds}

Next we show that if $\rho$ is appropriately chosen then for each node $v$, $\bcI_t^\rho(v) < \hcI$ happens most of the time during each phase w.h.p. In order to show this, first we split the set of nodes in $\bar D_v^\rho$ into local neighborhoods. We show that if in a given round the contention in each local neighborhood is bounded by an appropriate threshold, then the expected interference is small. For each local neighborhood, the expected number of rounds when the contention is higher than its threshold is bounded using Corollary~\ref{co:Pexpectation}. Combining these results into one we get a bound on the expected number of rounds when $\bcI_t^\rho(v)$ is small. It then remains to apply a concentration bound to conclude the proof.


\begin{proposition}\label{pr:Pinterferenceboundl}
For every $\sigma\in (0,1)$ and $\hcI > 0$, there are constants  $\rho = \rho(\sigma, \hcI)$ and $\gamma =\gamma(\sigma)$, such that for each node $v$ and  phase $H$,  with probability $1-O(n^{-3})$, at least $(1-\sigma)$-fraction of rounds in $H$ are low interference rounds, i.e.  $\bcI_t^\rho(v)\leq \hcI$.
\end{proposition}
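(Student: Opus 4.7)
The plan is to decompose $\bar D_v^\rho$ into geometric annuli, bound each annulus's contribution to $\bcI_t^\rho(v)$ in terms of local contentions in a constant-hop cover, and then invoke Corollary~\ref{co:Pexpectation} on each ball of the cover. Concretely, first I would set $A_i = D_v^{2^{i+1}\rho}\setminus D_v^{2^i\rho}$ for $i\ge 0$ and, using the $(\rmin,\l)$-bounded independence, cover $A_i$ by a set $S_i$ of balls of radius $\R/2$ with $|S_i|=O((2^i\rho)^\l)$. For $w\in B(u,\R/2)$ with $u\in S_i$, the quasi-triangle inequality for $d$ gives $d(w,v)\ge d(u,v)/2 \ge 2^{i-1}\rho\R$ once $\rho$ is a large enough constant, hence $\cI_{wv}\le 2^\z P/(2^i\rho \R)^\z$. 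Grouping terms by ball yields the deterministic bound
\begin{equation*}
\bcI_t^\rho(v)\;\le\;\sum_{i\ge 0}\sum_{B\in S_i}P_t(B)\cdot\frac{2^\z P}{(2^i\rho \R)^\z}.
\end{equation*}

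Second, I would choose thresholds $\phi_i=C\cdot 2^{i\alpha}$ for an exponent $\alpha$ in the interval $(\l/k,\;\z-\l)$, which is non-empty since $k>2\l/(\z-\l)$. If every ball $B\in S_i$ satisfies $P_t(B)\le\phi_i$, the display above is a convergent geometric series (since $\alpha<\z-\l$), bounded above by $O(C\rho^{\l-\z})\cdot P/\R^\z$; choosing $\rho$ sufficiently large (as a function of $\hcI$ and $C$) makes this at most $\hcI$. Thus a \emph{bad} round, namely one with $\bcI_t^\rho(v)>\hcI$, forces some pair $(i,B)$ to have $P_t(B)>\phi_i$. On the expectation side, Corollary~\ref{co:Pexpectation} gives, for each such $B\in S_i$, at most $O(\phi_i^{-k})|H|$ rounds of $H$ with $P_t(B)>\phi_i$ in expectation; summing over $B\in S_i$ and over $i\ge 0$, the expected number of bad rounds is at most
\begin{equation*}
O\!\Bigl(\rho^\l\sum_{i\ge 0}2^{i(\l-k\alpha)}\Bigr)\cdot C^{-k}|H|\;=\;O(\rho^\l C^{-k})\cdot|H|,
\end{equation*}
since $k\alpha>\l$; this can be driven below $\sigma|H|/2$ by taking $C$ large.

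Finally, to convert this expectation into a $1-O(n^{-3})$ tail bound I would reuse the argument of Lemma~\ref{le:phi}: for each ball $B$ the indicators $Y_{B,t}=\mathbf{1}[P_t(B)>\phi_i]$ satisfy a product inequality $E[\prod_{t\in T}Y_{B,t}]\le(h_1^{-\phi_i})^{|T|}$, so Lemma~\ref{le:chernoff} yields an exponential concentration for each $B$ with failure probability $2^{-\Omega(h_1^{-\phi_i}|H|)}$. For the close annuli ($\phi_i$ bounded by an absolute constant), this is trivially inverse-polynomial in $n$ when $\gamma$ is large, so a union bound over the $O(\rho^\l)$ balls in these annuli costs nothing. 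For the far annuli, per-ball tails are too weak, so instead I would aggregate within each annulus: since the expected contribution of annulus $i$ to the bad-round count is geometrically small in $i$, aggregating the indicators $\sum_{B\in S_i}Y_{B,t}$ and applying Lemma~\ref{le:chernoff} annulus by annulus still gives exponentially small failure probabilities, summable to $O(n^{-3})$ once $\gamma=\gamma(\sigma)$ is sufficiently large. The main obstacle I anticipate is precisely this last step, because the per-ball tail guarantee degrades with $\phi_i$; the decisive structural fact is that $k>2\l/(\z-\l)$ leaves enough slack in the choice of $\alpha$ to keep the interference series and the bad-round expectation sum simultaneously summable, which is what lets the aggregated concentration argument close.
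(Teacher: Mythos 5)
Up to the concentration step your argument is essentially the paper's proof: the paper uses annuli of width $\R$ (in-balls of radius $(\rho+i)\R$) with thresholds $\phi_i=(\rho+i)^{(\z-\l)/2}\phi$ where you use dyadic annuli and a tunable exponent $\alpha\in(\l/k,\z-\l)$, but both rely on exactly the same two summability facts ($\alpha<\z-\l$ for the interference series, $k\alpha>\l$ for the expected bad-round count via Corollary~\ref{co:Pexpectation}), and the deterministic claim ``bad round $\Rightarrow$ some cover ball exceeds its threshold'' and the expectation bound are the same. (The circular order in which you fix $\rho$ and $C$ is harmless, since $\z-\l-\l/k>0$ makes the two constraints simultaneously satisfiable.)

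The genuine gap is the final step. First, a technical point: the indicators $Y_{B,t}=\mathbf{1}[P_t(B)>\phi_i]$ do \emph{not} satisfy $E[\prod_{t\in T}Y_{B,t}]\le (h_1^{-\phi_i})^{|T|}$ — contention can sit above the threshold for many consecutive rounds deterministically (high initial value, edge insertions), even if \busy{} is detected every round. The variables with the product-moment property are the ``fails to halve while above threshold'' indicators of Lemma~\ref{le:phi}, and the above-threshold round count is tied to them only through the deterministic potential argument there (the factor $8$ and the $\log_{6/5}(\phi_0/\phi)$ term, which is why Prop.~\ref{pr:lowc} and Cor.~\ref{co:Pexpectation} extend the window back to a round of low contention). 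Second, and more fundamentally, the annulus-by-annulus tail bound plus a union bound over annuli cannot deliver the proposition with a \emph{constant} $\gamma$. The relevant annuli extend up to index $i=\Theta(\log n)$ (only then does $\phi_i$ exceed the maximum possible contention or $\mu_i$ become inverse-polynomial), and for an annulus whose expected count $\mu_i$ is geometrically small in $i$ but only, say, inverse-polylogarithmic, Lemma~\ref{le:chernoff} bounds $Pr[\mathrm{count}_i\ge a_i]$ only by roughly $(e\mu_i/a_i)^{a_i}$; making this $O(n^{-3})$ forces $a_i=\Omega(\log n/(i+\log\log n))$, and summing these minimal allowances over the $\Theta(\log n)$ relevant annuli gives $\Omega(\log n\cdot\log\log n)$, exceeding the total budget $\sigma|H|=\sigma\gamma\log n$ for any constant $\gamma$. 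The paper sidesteps this entirely by never asking for per-annulus high-probability control: it bounds the number of bad rounds by one \emph{global} sum of product-dominated indicators over all balls of all annuli, whose total mean is at most $\sigma|H|/3$, and applies Lemma~\ref{le:chernoff} once to that aggregate with $\delta=2$, giving failure probability $2^{-\sigma|H|/3}\le n^{-3}$ for $\gamma\ge 9/\sigma$. Replace your last step by this single aggregate application (routing through Lemma~\ref{le:phi}'s indicators) and the proof closes.
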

\begin{proof}
Let $D_i$ denote the in-ball of radius $\rho\R+i\R$ centered at node $v$ for $i=0,1,\dots$. Let $S_i\subseteq D_i$ be an $\R/2$-cover of $D_i$ of size $|S_i|\leq C\cdot (\rho+i)^{\l}$ for a constant $C=C(\l)$, which exists by bounded independence. Let $\phi_i=(\rho+i)^{(\z-\l)/2}\phi$ for $i=1,2,\dots$, where $\phi = \Omega(\l/\sigma)$ is a large enough constant to be defined later.

\begin{claim}\label{le:PexpectAl}
If $P_t(u)<\phi_i$ holds for all $i$ and $u\in S_i\setminus S_{i-1}$ and $\rho \ge \left(\frac{2C \phi P (\z+\l)}{R^\zeta\hcI(\z - \l)}\right)^{2/(\z-\l)}$ then $\bcI_t^\rho(v) \le \hcI$.
\end{claim}
\begin{proof}

We have, by the definition of expected interference (explanations below),
\begin{equation*}
\begin{aligned}
\bcI_t^\rho(v)=\sum_{w\in \bar D_v^{\rho}}\frac{P}{d(w,v)^\z}\cdot p_t(w)& \leq \sum_{i\geq 1}\frac{P}{((\rho+i-1) \R)^\z}\cdot |S_i\setminus S_{i-1}|\cdot \phi_i \\
&\leq \frac{\phi P(1+1/\rho)}{R^\zeta}\sum_{i\geq 1}\frac{|S_i\setminus S_{i-1}|}{(\rho+i)^{(\z+\l)/2}}\\
&\leq \frac{2\phi P}{R^\zeta}\sum_{i\geq 1}|S_i|\cdot (\frac{1}{(\rho+i)^{(\z+\l)/2}}-\frac{1}{(\rho+i+1)^{(\z+\l)/2}})\\
&\leq \frac{2\phi P}{R^\zeta}\sum_{i\geq1}C(\rho+i)^\l\cdot\frac{(\z+\l)/2}{(\rho+i)^{(\z+\l)/2+1}}\\
&= \frac{C \phi P (\z+\l)}{R^\zeta} \sum_{i\geq1}(\rho+i)^{-(1+(\zeta-\lambda)/2)}\\
&\leq C \phi (\z+\l)\cdot (P/R^{\zeta})\cdot 2/(\zeta-\lambda)\cdot \rho^{-(\zeta-\lambda)/2},
\end{aligned}
\end{equation*}
where the first inequality follows from the definition of sets $S_i$ and the assumption of the claim, the second one follows by 
$\phi_i=(\rho+i)^{(\z-\l)/2}\phi$ and $(\rho+i-1)(1+1/\rho) \ge \rho + i$, the third one is a rearrangement of the sum, the fourth one follows by the fact that $\frac{1}{(t-1)^{\alpha}}-\frac{1}{t^\alpha}< \frac{\alpha}{(t-1)^{\alpha+1}}$ for every $t> 1$ and $\alpha\ge 1$ (here, $(\z+\l)/2 > 1$), and the last one follows because $\z-\l >0$. Thus, if $\rho \ge \left(\frac{2C \phi P (\z+\l)}{R^\zeta\hcI(\z - \l)}\right)^{2/(\z-\l)}$ then the claim follows.
\end{proof}

Let us put the nodes in an arbitrary order and let $Z_{i,j}$ denote the number of rounds $t \in H$ s.t. $P_{t}(u_j)\geq \phi_i$, where $u_j$ is the $j$-th node in $S_i\setminus S_j$. 
By Corollary~\ref{co:Pexpectation},
$
E[Z_{i,j}]= O(\phi_i^{-k})\cdot|H|
$
holds for each $i,j$. 
Let $Z$ denote the number of rounds in $H$ when $\bcI_t^\rho(v)> \hcI$. By Claim~\ref{le:PexpectAl}, $E[Z]$ can be bounded as follows,
\[
E[Z]\leq\sum_{i}\sum_{j}E[Z_{i,j}]
\leq \sum_i|S_i\setminus S_{i-1}|\cdot O(\phi_i^{-k})\cdot |H|
\leq \frac{\sigma}{3}|H|,
\]
where the last inequality holds if $\phi=C''/\sigma^{1/k}$ for a large enough constant $C''$; we have $|S_i\setminus S_{i-1}|\le C\l(\rho+i)^{\l-1}$ and $\phi_i^{-k}\le \phi^{-k}\cdot (\rho+i)^{k(\z-\l)/2}$ with $k(\z-\l)/2 > \l$ (by definition), so we have $\sum_i|S_i\setminus S_{i-1}|\cdot O(\phi_i^{-k})=O(\phi^{-k})$. 

Each $Z_{i,j}$ can be considered as the sum of binary random variables similar to that in the proof of Lemma~\ref{le:phi}. Thus, $Z$ can also be seen as the sum of binary random variables. Moreover, the binary random variables satisfy the conditions of the Chernoff bound in Lemma~\ref{le:chernoff}. Thus, by taking $\delta=2$ and  $\gamma \ge 9/\sigma$, we have:
$
Pr[Z\geq 3\cdot\frac{\sigma}{3}|H|]\leq 2^{-\sigma|H|/3}\le n^{-3}.
$
\end{proof}

\section{Proof of Proposition~\ref{pr:Pdecrease}: Transmissions in Good Rounds}

\begin{lemma}\label{le:broadgood}
Let $\rho\ge\rho_c+2$, $\hcI \le (1-1/\rho)^\z \Ic/10$ and $\phi >0$, where $\rho_c,\Ic$ are the parameters of {\scc}. Let $t$ be a round such that for a node $u$, $P_t^\rho(u) < \phi$ and $\hcI_t^\rho(u) < \hcI$. \\
If a node in $B(u,\R/2)$ transmits in round $t$ then it mass-delivers, with probability at least $0.9\cdot 4^{-\phi}$.
\end{lemma}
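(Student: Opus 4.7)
The plan is to apply the {\scc} primitive to $w$ and separately bound its two success conditions. Let $w \in B(u,\R/2)$ be the transmitter. {\scc} guarantees mass-delivery provided (i) no other node of $D_w^{\rho_c}$ transmits, and (ii) the total interference at $w$ is at most $\Ic$. Because $d(w,u) < \R/2$ and $\rho \ge \rho_c+2$, the triangle inequality for the quasi-metric yields $D_w^{\rho_c} \subseteq D_u^{\rho}$, so it suffices to work with the cleaner events $C = \{\text{no node of } D_u^{\rho} \setminus \{w\} \text{ transmits}\}$ and $D = \{\text{interference at } w \text{ from transmitters in } \bar D_u^{\rho} \le \Ic\}$. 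When both $C$ and $D$ hold, all interference at $w$ comes from $\bar D_u^{\rho}$ and is bounded by $\Ic$, so $w$ mass-delivers.

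For $\Pr[C]$, I will use independence of transmissions together with Lemma~\ref{ieq}: since $p_t(v) \le 1/2$ for every node and $\sum_{v \in D_u^{\rho}} p_t(v) = P_t^{\rho}(u) < \phi$, we get $\Pr[C \mid w \text{ transmits}] = \prod_{v \in D_u^{\rho} \setminus \{w\}}(1 - p_t(v)) \ge 4^{-\phi}$. For $\Pr[D]$, I will bound the expected interference at $w$ from far nodes by translating the known bound at $u$. For every $v \in \bar D_u^{\rho}$ the triangle inequality $d(v,u) \le d(v,w) + d(w,u)$ together with $d(w,u) < \R/2$ and $d(v,u) \ge \rho\R$ gives $d(v,w) \ge (1-1/(2\rho))\,d(v,u)$, hence $\cI_{vw} \le (1-1/(2\rho))^{-\z}\,\cI_{vu}$. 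Summing against $p_t(v)$ and using $\hcI_t^{\rho}(u) < \hcI \le (1-1/\rho)^{\z}\Ic/10$ yields an expected interference at $w$ from $\bar D_u^{\rho}$ of at most $\bigl((1-1/\rho)/(1-1/(2\rho))\bigr)^{\z}\Ic/10 < \Ic/10$; Markov then gives $\Pr[D] \ge 9/10$.

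Since $C$ and $D$ depend on transmissions of the disjoint node sets $D_u^{\rho} \setminus \{w\}$ and $\bar D_u^{\rho}$, they are independent of one another and of the transmission event of $w$. Combining the two bounds yields $\Pr[w \text{ mass-delivers} \mid w \text{ transmits}] \ge \Pr[C]\Pr[D] \ge 0.9 \cdot 4^{-\phi}$, as required. The delicate step will be the interference translation in the analysis of $D$: the numerical choice $\hcI \le (1-1/\rho)^{\z}\Ic/10$ is calibrated precisely so that the loss factor $(1-1/(2\rho))^{-\z}$ from shifting the reference point from $u$ to $w$, combined with the factor of $10$ absorbed by Markov, still leaves the bound comfortably below $\Ic$.
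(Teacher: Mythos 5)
Your proposal is correct and follows essentially the same route as the paper's proof: lower-bound the probability that no other node of $D_u^{\rho}$ transmits via Lemma~\ref{ieq} and the contention bound, handle the far interference by Markov using $\hcI \le (1-1/\rho)^{\z}\Ic/10$ together with the quasi-metric translation from $u$ to the transmitter, and multiply by independence. The only (immaterial) difference is that you translate the expected interference to the transmitter before applying Markov, whereas the paper applies Markov at $u$ and then translates the realized interference; your explicit check that $D_w^{\rho_c}\subseteq D_u^{\rho}$ is a nice touch the paper leaves implicit.
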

\begin{proof}
Let $v$ be the transmitting node, which could possibly be $u$ itself. The probability that no other node in $D_u^{\rho}$ transmits is 
\[
p'=\prod_{w\in D_u^{\rho}\setminus\{v\}}(1-p_t(w))\ge 4^{-\sum_{w\in D_u^{\rho}\setminus\{v\}}p_t(w)}\ge 4^{-\phi},
\]
where we used Lemma~\ref{ieq} with the assumption that $p_t(w)\le 1/2$ and the bounded contention assumption.

On the other hand, we have that $Pr[\cI_t^{\rho-1}(v) \le \Ic]\ge 9/10$. Indeed, $Pr[\cI_t^\rho(u) \le \cI'=(1-1/\rho)^\z \Ic]\ge 9/10$ follows by the bounded interference assumption and Markov inequality, so it suffices to show that $\cI_t^{\rho-1}(v) \le (1-1/\rho)^{-\z} \cI_t^\rho(u)$, which we have because for each node $x\in\bar D_u^{\rho}$,  $d(x,v)\geq d(x,v)-d(v,u)>(1-1/\rho)d_{xu}$, implying that 
\[
\cI_t^{\rho-1}(v)=\sum_{x\in\bar D_u^{\rho}}P/d(x,v)^\z < \sum_{x\in \bar D_u^{\rho}}(1-1/\rho)^{-\z}\cdot P/d_{xu}^\z=(1-1/\rho)^{-\z}\cI_t^\rho(u).
\]
Thus, we have that with probability $0.9\cdot 4^{-\phi}$, $v$ transmits, no other node in $D_u^\rho$ transmits and the interference from $\bar D_u^\rho$ at $v$ is at most $\Ic$, so $v$ delivers its message, by {\scc}.
\end{proof}

\noindent\textbf{Proposition \ref{pr:Pdecrease}}
\textit{
Assume that constants $\hphi, \rho,\gamma$ are large enough. 
For each node $v$ and phase $H$, 
if at least $1/10$-fraction of the rounds of $H$ are of high contention,
then $\Omega(|H|)$ nodes in $D_v^{\rho}$ mass-deliver, with probability $1-O(n^{-3})$.
}
\vspace{5pt}
\begin{proof}
Let $\sigma=1/10$. Let us choose constants $\hphi, \rho',\gamma$ such that Prop.~\ref{pr:Pgoodroundl} holds for $\sigma'=\sigma/2$ and $\hcI_1 \le  (1/2-1/\rho')^\z \Ic/20$ and such that $\rho' > 2(\rho_c+2)$ is large enough, as stated below, where $\rho_c,\Ic$ are the parameters of {\scc}. We will use the fact that in Prop.~\ref{pr:Pgoodroundl}, $\hphi$ has the following dependency on $\rho'$: $\hphi = O(\rho'^{\l + \l/k})$, where $k$ is the edge change parameter (see the proof of Prop.~\ref{pr:lowc}).
We set $\rho=\rho'/2$.
Namely, if we choose $\hphi, \rho,\gamma$ large enough (satisfying the relation above), then we have that with probability $1-O(n^{-3})$, there are at least $(1-\sigma/2)|H|$ rounds $t\in H$ such that $P_t^{2\rho}(v) < \hphi$ and $\hcI_t^{2\rho}(v) < \hcI_1$. Let $t$ be such a round.
\begin{claim}
For each node $u\in D_v^\rho$, it holds that $P_t^\rho(u)< \hphi$ and $\hcI_t^\rho(u)< \hcI_2=(1-1/\rho)^\z\Ic$.
\end{claim}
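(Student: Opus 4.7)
My plan is to transfer the assumed bounds at $v$ over the radius-$2\rho$ vicinity to bounds at $u$ over the radius-$\rho$ vicinity, using only the quasi-metric triangle inequality together with the fact that $d(u,v) < \rho\R$ since $u \in D_v^\rho$.

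For the contention bound, I would first establish the set inclusion $D_u^\rho \subseteq D_v^{2\rho}$: for any $w \in D_u^\rho$, the triangle inequality gives $d(w,v) \le d(w,u) + d(u,v) < \rho\R + \rho\R = 2\rho\R$. Summing transmission probabilities then yields $P_t^\rho(u) = \sum_{w \in D_u^\rho} p_t(w) \le \sum_{w\in D_v^{2\rho}} p_t(w) = P_t^{2\rho}(v) < \hphi$ directly, with no further calculation.

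For the interference bound I would split $\bar D_u^\rho$ into a near part $\bar D_u^\rho \cap D_v^{2\rho}$ and a far part, which is exactly $\bar D_v^{2\rho}$ (contained in $\bar D_u^\rho$ because any $w \in \bar D_v^{2\rho}$ satisfies $d(w,u) \ge d(w,v) - d(u,v) > 2\rho\R - \rho\R = \rho\R$). The near part uses $d(w,u) \ge \rho\R$ to bound each $\cI_{wu}$ by $P/(\rho\R)^\z$, contributing at most $P_t^{2\rho}(v) \cdot P/(\rho\R)^\z < \hphi P/(\rho\R)^\z$. The far part uses $d(u,v) < \rho\R \le d(w,v)/2$ to conclude $d(w,u) > d(w,v)/2$, hence $\cI_{wu} \le 2^\z \cI_{wv}$; this contribution is at most $2^\z \hcI_t^{2\rho}(v) < 2^\z \hcI_1 \le 2^\z (1/2 - 1/\rho')^\z \Ic/20 \le \Ic/20$. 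Combining, $\hcI_t^\rho(u) < \hphi P/(\rho\R)^\z + \Ic/20$, which I need to push below $\hcI_2 = (1 - 1/\rho)^\z \Ic$.

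The only nontrivial obstacle is absorbing the near term $\hphi P/(\rho\R)^\z$. Recall from the proof of Prop.~\ref{pr:lowc} that $\hphi = O(\rho'^{\l + \l/k}) = O(\rho^{\l + \l/k})$, where $k$ is the edge-change parameter satisfying $k > 2\l/(\z - \l)$. This inequality implies $\z > \l(1 + 2/k) > \l(1 + 1/k)$, so $\hphi/\rho^\z = O(\rho^{\l(1 + 1/k) - \z})$ has a strictly negative exponent in $\rho$ and tends to $0$. Choosing $\rho$ large enough (as is already permitted by the statement) forces $\hphi P/(\rho\R)^\z < \Ic/2$ say, while $(1 - 1/\rho)^\z$ can be kept arbitrarily close to $1$, so the total stays comfortably below $(1 - 1/\rho)^\z \Ic = \hcI_2$, finishing the claim.
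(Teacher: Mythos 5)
Your proposal is correct and follows essentially the same route as the paper: the paper also deduces $P_t^\rho(u)\le P_t^{2\rho}(v)<\hphi$ from $D_u^\rho\subseteq D_v^{2\rho}$, splits the exterior interference into the contributions of $\bar D_v^{2\rho}$ (transferred to $u$ with a $2^\z$ factor, absorbed by the choice $\hcI_1\le(1/2-1/\rho')^\z\Ic/20$) and of $D_v^{2\rho}\setminus D_u^\rho$ (bounded by $\hphi\cdot P/(\rho\R)^\z$), and kills the latter term using $\hphi=O(\rho^{\l+\l/k})$ with $k>2\l/(\z-\l)$ for $\rho$ large. Your constants differ slightly from the paper's but the argument is the same.
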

\begin{proof}
Let us fix a node $u\in D_v^\rho$. First note that $D_u^\rho \subseteq D_v^{2\rho}$, implying that $P_t^\rho(u)\le P_t^{\rho'}(u)< \hphi$. It remains to bound the expected interference at $u$ by nodes in $\bar D_u^\rho$. This interference can be split into two parts: 1. the interference by nodes in $\bar D_v^{\rho'}$ and 2. the interference by nodes in $D_v^{\rho'}\setminus D_u^\rho$. The first part can be bounded by $2^\z\hcI_1=\hcI_2/2$ using a computation as in the proof of Lemma~\ref{le:broadgood}. The second part can be bounded as follows:
the contention in the area $D_v^{\rho'}\setminus D_u^\rho$ is at most $\hphi$ and the distance of those nodes to node $u$ is at least $\rho$, so the expected interference by nodes in $D_v^{\rho'}\setminus D_u^\rho$ is at most 
$\hphi \cdot \frac{P}{(\rho\R)^\z}=O(\frac{\rho^{\l+\l/k}}{\rho^\z})=O(\frac{1}{\rho^{(\z-\l)/2}})$, and is less than $\hcI_2/2$ if $\rho$ is large enough, as $\z-\l>0$. Here we used the assumption that $k>2\l/(\z-\l)$. 
\end{proof}
Recall that there are at least $\sigma|H|$ high contention rounds for node $v$, i.e. rounds where $P_t^\rho(v) > \eta$. Let $t$ be such a round. By the bounded independence property, there is a $\rho\R$-cover $S$ of $D_v^\rho$ of constant size. By the pigeonhole principle, there is a node $u\in S$ such that $P_t^\rho(u) > \eta' = \eta/|S|$. 

By summarizing the above said, we get that there are at least $(1-\sigma/2)|H|$ rounds $t$ s.t. for all nodes $u\in D_v^\rho$, $P_t^\rho(u) < \hphi$ and $\hcI_t^\rho(u) < \hcI_2$. On the other hand, there are at least $\sigma|H|$ rounds where there is a node in $D_v^\rho$ that has $P_t^\rho(u) > \eta'$. Thus, there are at least $\sigma|H|/2$ rounds $t$ with the following property: there is a node $u\in D_v^\rho$ such that $P_t^\rho(u) \in (\eta', \hphi)$ and $\hcI_t^\rho(u) < \hcI_2$. Fix such round $t$ and node $u$. By Lemma~\ref{le:broadgood}, there is a node $u'\in B(u,\R/2)$ that transmits (the probability of this is at least $\eta'$) and delivers its message with probability $0.9\eta' 4^{-\hphi}$. Thus, the expected number of nodes in $D_v^\rho$ that deliver their message during $H$ is $\Omega(4^{-\hphi}|H|)$. The proof now follows by applying a Chernoff bound with the assumption that the constant $\gamma$ is large enough: $\gamma=\Omega(4^{\hphi})$.
\end{proof}

\section{Proof of Proposition \ref{pr:idle}: Low Contention Rounds}

\noindent\textbf{Proposition~\ref{pr:idle}}
\textit{
Assume that $\hphi, \rho,\gamma$ are large enough. 
For each node $v$ and phase $H$, if at least $9/10$-fraction of the rounds of $H$ are low contention rounds, 
then  with probability $1-O(n^{-3})$, in at least $3/5$-fraction of the rounds of $H$, $v$ will detect {\idle} channel and have low contention and low interference.
}
\vspace{5pt}
\begin{proof}
Let $\sigma=9/10$.
Assume that $\hphi, \rho,\gamma$ large enough, such that Prop.~\ref{pr:Pgoodroundl} holds for $\sigma'=\sigma/10$ and $\hcI \le  \Icd/10$, where $\Icd$ is the parameter of {\cd}.
Let $\mathcal{E}$ denote the event that there are at least $(1-\sigma/10)|H|$ good rounds for $v$ in phase $H$. By Prop.~\ref{pr:Pgoodroundl}, we have $Pr[\mathcal{E}] = 1-O(n^{-3})\ge 19/20$, if $\gamma$ is large enough. Given $\mathcal{E}$, there are at least $9\sigma|H|/10$ rounds $t$ that are both good and low contention for $v$. Let $H'$ denote the set of such rounds and  let $X_t$ be a binary random variable with value 1 if and only if $v$ detects {\idle} channel in round $t\in H'$. Let $X=\sum_{t\in H'}X_t$. For each round $t\in H'$, since $t$ is a good round and $\hcI < \Icd/10$, it holds with probability at least $9/10$ that the interference at node $v$ from nodes in $D_v^\rho$ is at most $\Icd$. Given this and the fact that in low contention rounds we have $P_t^\rho(v) < \eta=\log_{h_2}(10/9)$, $v$ will detect {\idle} channel with probability at least $h_2^{-\eta}=9/10$ in round $t$. Thus, in each round $t\in H'$, irrespective of prior rounds, we have $Pr[X_t]\ge 81/100$.
  This implies: $E[X|\mathcal{E}]\ge (9^2/10^2)|H'|\ge
	\frac{9^2}{10^2}\cdot \frac{9^2}{10^2}|H|$ and 
	\[
	E[X] \ge (19/20)\cdot\frac{9^4}{10^4}|H| > 0.62|H|.
	\]
Just as in the proof of Lemma~\ref{le:phi}, we can now apply Chernoff bound to obtain that $Pr[X<3|H|/5] < n^{-3}$ if $\gamma$ is large enough. This completes the proof by recalling that all rounds of $X$ are low contention and low interference rounds for node $v$.
\end{proof}

\section{Proof of Thm.~\ref{thm:broadcast}: Non-Spontaneous Broadcast}

\noindent \textbf{Theorem \ref{thm:broadcast}}
\textit{
Assume the edge change rate $\tau$ is  sufficiently small. There are constants $\beta,c>0$, such that when running {\br}($\beta$) in the synchronous mode, each node $v$ receives the message in $O(D_{st}^c(s,v))$ rounds w.h.p.
}
\vspace{5pt}
\begin{proof}
Let us fix  constants $\rho,\hphi,\gamma$ so that Props.~\ref{pr:Pdecrease} and~\ref{pr:idle} hold with interference threshold $\hcI=\min\{(1-1/\rho)^\z\Ic(\e/2),\Iack(\e/2)\}/10$, where $\Iack(\e/2)$ and $\Ic(\e/2)$ are the parameters corresponding to  precision $\e/2$.  We set the passiveness parameter of {\ta} to $\beta=\gamma+5$ and the stable distance parameter $c=12\gamma$.
Consider a node $u$ that receives the message in round $t$. Let $v$ be such that $v\in N(u,\e)$  in the time interval $T=[t, t+c\log n]$ (and both stay alive during $T$).
The theorem follows by an induction and union bound from the claim below.
\end{proof}

\begin{claim}
Node $v$ gets the message of $u$ during $T$, with probability $1-O(n^{-2})$.
\end{claim}
\begin{proof}
We prove the Claim by contradiction, and assume that $v$ cannot get the message during $T$. Let us split $T$ into phases (for $u$) and, similar to the proof of Thm.~\ref{thm:localbroadcast}, classify the phases $H$ into types: (type A) at least $1/10$-th of rounds in $H$ are high contention rounds for $u$ -- $P_t^{\rho}(u)\ge \eta$, and (type B) at least $9/10$-th of rounds in $H$ are low contention rounds for $u$ -- $P_t^{\rho}(u) < \eta$.
We show that with probability $1-O(n^{-3})$, all phases in $T$ are of type B. Consider a type A phase $H$. We know from Prop.~\ref{pr:Pdecrease} that, with probability $1-O(n^{-3})$, there is a set $S$ of $\Omega(|H|)$ nodes in $D_v^{\rho}$ that deliver their messages and restart {\ta} during $H$. 
We split $S$ into three subsets, $S_1$ -- those inserted by churn during $H$, $S_2$ -- those inserted by edge change during $H$ and $S_3$ -- the rest. 

Consider $S_2$ first. We know that in each local neighborhood, there are at most $\tau |H|$ nodes arriving due to edge changes. Note that $D_u^\rho$ can be covered with $O(\rho^\l)$ local neighborhoods; hence, the total number of nodes arriving in $D_u^\rho$ due to edge changes is $O(\rho^\l)\cdot \tau|H|$. Since $|S|=\Omega(|H|)$, setting $\tau = O(\rho^{-\l})$ a small enough constant gives $|S_1| << |S|$.
Now consider $S_3$. By the {\ntd} primitive, for each node $u\in S_3$, all nodes in $D_u^{\e/2}$ also restart {\ta}, setting their transmission probability to $n^{-\beta}$. By the setting of $\beta= \gamma + 5$, the probability that a node restarting {\ta} in phase $H$ transmits again in $H$ is at most $n^{-5}$. Thus, with probability at least $1-O(n^{-3})$, the nodes in $S_2$ constitute an $\e\R/2$-packing of $D_v^{\rho}$ and are at most $|S_2|\le O(\rho/\e)^\l=O(1)$. As for $S_1$, the probability that a newly arriving node transmits during phase $H$ is at most $n^{-5}$, so $|S_1|=0$ w.h.p. 
These bounds lead to a contradiction -- $|S_1| + |S_2| + |S_3| < |S|$, if constant $\gamma$ is large enough. Thus, each phase $H$ of $T$ is of type B with probability $1-O(n^{-3})$. It remains to recall that $T$ contains constant number of phases.

Now assume that $H$ is of type B. In a similar way as in Theorem~\ref{thm:localbroadcast}, we can apply Prop.~\ref{pr:idle} and show that if $p_t(u)=p_0$ at the beginning of $H$, then by the end of $H$, the effect on $p_t(v)$ is equivalent to applying at least $|H|/10$ of $p_{t+1}(v)=\min\{2p_t(v),1/2\}$ operations on $p_0$. Since $p_t(u)$ at the beginning of $T$ is at least $n^{-\beta}$, it will take at most 11 phases to increase the probability to the value $1/2$. Then, a single phase will suffice for $u$ to deliver its message (including to node $v$), with probability at least $1-O(n^{-2})$.
\end{proof}

\section{Spontaneous Broadcast}
\label{sec:spontaneous}

The basic observation is that if all the nodes start running {\br}$^*$ simultaneously, a \emph{constant density dominating set} can be computed in $O(\log n)$ rounds (cf. \cite{SRS08}), where an $r$-dominating set of density $\kappa$ is a set $S$ of nodes such that for each node $u$, $1 \le |\{v\in S:u\in N(v,r)\}|\le \kappa$. The dominator algorithm is as follows: all nodes run {\br}$^*$ simultaneously, and 1. if a node $v$ stops by {\scc} then it is a \emph{dominator}, 2. if a node $u$ stops by detecting {\ntd} of node $v$ then it is dominated by $v$.
Having formed a constant density dominating set, it remains to disseminate the message using only dominators. 
 In order for the set of dominators to derive the connectivity properties of the original graph, we need that $(V,d)$ forms a $(\e\R/8,\l)$-bounded independence \emph{metric} space.
\begin{theorem}
In the static spontaneous setting, there is a uniform algorithm that performs broadcast in $O(D_G + \log n)$ rounds, w.h.p.
\label{thm:static-spont-bcast}
\end{theorem}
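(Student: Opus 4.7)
The plan is to implement the two-phase scheme sketched immediately before the theorem. In the first phase, all nodes simultaneously execute {\br}$^*$ and use its stopping conditions to elect a dominating set: a node that stops by detecting {\ack} becomes a \emph{dominator}, while a node that stops by detecting {\ntd} is marked \emph{dominated} (by the near transmitter it sensed). The analysis of {\ta} in Sec.~\ref{sec:balancingcontention}, applied in the static spontaneous mode (cf.\ the remark after Thm.~\ref{thm:localbroadcast}), shows that in $O(\log n)$ rounds, w.h.p., every node is either a dominator or dominated by one at distance at most $\e\R/2$. The $(\e\R/8,\l)$-bounded-independence assumption then forces any ball of radius $O(\R)$ to contain only $O(1)$ dominators, so the elected set has constant density $\kappa$.

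Next I would establish the \emph{shortcut} property of the dominating set: for every edge $(u,v)\in E_G$, the dominator $u'$ of $u$ and the dominator $v'$ of $v$ satisfy $d(u',v')\le (1-\e/2)\R$ by the triangle inequality for the quasi-metric (with constant metricity $\z$), so that $v'\in N(u',\e')$ for a slightly weakened precision $\e'<\e$. Consequently the subgraph induced on the dominators is strongly connected with diameter $O(D_G)$, and we may conceptually fix a BFS tree rooted at the source's dominator of depth $O(D_G)$.

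In the second phase, dominators propagate the message with a fixed uniform transmission probability $p=\Theta(1/\kappa)$, small enough that the contention $P_t^\rho(\cdot)$ in the vicinity of every dominator stays below the threshold $\eta$ of Sec.~\ref{sec:balancingcontention}, but constant since $\kappa=O(1)$. Then {\scc} guarantees that in each round every informed dominator mass-delivers its message to all neighboring dominators with some constant probability $q>0$, and a dominated node receives the message from its dominator in $O(1)$ further rounds via the same {\ntd}-based handoff used by {\br}$^*$. The main obstacle is converting the naive per-hop $O(\log n)$ bound into the claimed $O(D_G+\log n)$ aggregate bound; for this I would use the standard pipelining/Chernoff argument: the delay for the message to traverse any fixed length-$D_G$ path in the dominator graph is a sum of $D_G$ independent geometric random variables with constant mean $1/q$, which concentrates within $O(D_G+\log n)$ w.h.p., and a union bound over all (at most $n$) BFS leaves completes the argument. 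Uniformity in $n$ is obtained by wrapping the whole procedure in the standard doubling-guess schema on the horizon $n'=2,4,8,\dots$, incurring only a constant-factor overhead, and using a self-check that re-triggers the phases if the guess turns out to be too small.
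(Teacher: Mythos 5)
Your two core stages coincide with the paper's proof: the dominating set is elected exactly as in the paper (run \br$^*$ spontaneously; stopping by \ack{} makes a dominator, stopping by \ntd{} makes a dominated node), the shortcut claim $d(u',v')\le(1-\e/2)\R$ is the same triangle-inequality computation, and the second stage is the same constant-probability flooding over the constant-density dominator graph. Your concentration step is phrased as a sum of ``independent'' geometric per-hop delays plus a union bound; the per-round success events are not literally independent (they depend on which other dominators are informed and transmit), so this needs a stochastic-domination or submartingale formulation --- the paper does exactly this, following \cite{BHM13P}, via a submartingale and Azuma--Hoeffding. That is a presentational rather than substantive difference.

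The genuine gap is in how you obtain uniformity. Wrapping the procedure in a doubling-guess ``on the horizon $n'=2,4,8,\dots$ with a self-check that re-triggers the phases'' does not work as stated: broadcast completion (or failure of the dominating-set phase under a too-small guess) is not locally detectable by a node --- a node that has not yet received the message cannot distinguish ``guess too small'' from ``message still in transit'' --- so no self-check mechanism is available; moreover, if each guess re-runs the phases, the total time is not a constant factor of $O(D_G+\log n)$ (and the horizon also involves the unknown $D_G$), and silent failures under small guesses threaten the w.h.p.\ guarantee. The paper needs no guessing at all: in the static spontaneous setting \ta/\br$^*$ is already uniform (remark after Thm.~\ref{thm:localbroadcast} --- nodes may start from an arbitrary probability with no lower limit, the first phase being absorbed as stabilization), the second-stage probability $p_0$ is a constant depending only on model parameters because the dominator density is a constant independent of $p_0$ and $n$, and the two stages are run \emph{simultaneously} rather than sequentially, so no node ever needs to know $n$ or when stage~1 has globally terminated. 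Replacing your guess-and-check wrapper with this structural argument (and the interleaving of the two stages) is what actually delivers the uniformity claim in the theorem.
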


The algorithm for spontaneous broadcast consists of two stages: 1. compute a constant density $\e\R/4$-dominating set $DS$, 2. transmit the message using only the nodes in $DS$.
\paragraph{Dominating Set.} The dominating set is constructed by running {\br}$^*$ in the spontaneous mode, i.e. all nodes start running the algorithm simultaneously. Let $DS$ be the set of nodes that stop the algorithm by {\scc}. Since the dominated nodes use $\ntd(\e/2)$ as a stopping condition, $DS$ is a $\e\R/4$-dominating set. Moreover, $DS$ is a $\e\R/8$-packing, which implies that each in-ball $D$ of radius $k\cdot \R$ contains at most $O(k/\e)^\l$ dominators. In particular each node $u$ is dominated by at most constant number of dominators. The proof the algorithm terminates after at most $O(\log n)$ rounds is almost identical to the proof of Theorem~\ref{thm:broadcast}.
\paragraph{Broadcast.}
Recall that we assume that the communication graph is connected and has diameter $D_G$. Also, since $d$ is a metric, the graph is undirected.

The broadcast part is as follows: in the first round, the source node transmits the message to its neighbors; each dominator $u$, upon receiving the message, transmits it in each round with probability $p_t(u)=p_0$ until detecting {\ack}($\e/2$), where $p_0$ is a small enough constant.

Note that as soon as all dominators successfully transmit at least once, all nodes will get the message.
Moreover, if the constant $p_0$ is small enough, then the two algorithms can be run simultaneously: the key point is that the number of dominators in each $c\R$-neighborhood is bounded by a constant (not depending on $p_0$). This ensures that the nodes \emph{need not know $n$}, in order to coordinate the two algorithms.
It remains to show that the message will get to all dominators in $O(D_G+\log n)$ rounds using the broadcast algorithm.

Consider a  graph $H$ defined over the dominating set $DS$, where for every pair of nodes $u,v\in DS$, $u,v$ form an  edge if $v\in N(u,\e/2)$.
\begin{claim}
The diameter of $H$ is at most $D_G$.
\end{claim}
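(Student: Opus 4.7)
My plan is to lift each shortest $G$-path between two dominators to a walk of the same length in $H$. Fix any $u,v\in DS$ and take a shortest path $u=w_0,w_1,\dots,w_k=v$ in $G$, so $k\le D_G$ with $d(w_i,w_{i+1})\le(1-\e)R$ for each $i$. Using that $DS$ is an $\e R/4$-dominating set, for each intermediate $w_i$ I pick a dominator $d_i\in DS$ at distance less than $\e R/4$ from $w_i$; at the endpoints I simply set $d_0=u$ and $d_k=v$, which is legitimate since both are themselves dominators.

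The main step is to verify that $\{d_i,d_{i+1}\}$ is an edge of $H$ (or $d_i=d_{i+1}$). Recalling that in this static spontaneous setting $d$ is a genuine symmetric metric, I apply the triangle inequality:
\[
d(d_i,d_{i+1}) \;\le\; d(d_i,w_i)+d(w_i,w_{i+1})+d(w_{i+1},d_{i+1}) \;<\; \tfrac{\e R}{4}+(1-\e)R+\tfrac{\e R}{4}\;=\;(1-\tfrac{\e}{2})R,
\]
placing $d_{i+1}\in N(d_i,\e/2)$, which is exactly the edge threshold in $H$. Hence $d_0,d_1,\dots,d_k$ is a walk of length at most $D_G$ in $H$; compressing out repetitions yields a $u$--$v$ path in $H$ of at most that length, and since $u,v\in DS$ were arbitrary the diameter of $H$ is at most $D_G$.

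No real obstacle is expected. The calibration of the $\e R/4$ dominating radius against the $(1-\e/2)R$ edge threshold of $H$ is tight but self-consistent, designed precisely so the telescope closes as $1-\e+\e/2=1-\e/2$. The only subtlety worth noting is the reliance on symmetry of $d$ in this regime; in the general quasi-metric setting one would need to be careful about orienting the terms $d(d_i,w_i)$ versus $d(w_i,d_i)$, but that is explicitly ruled out here by the metric assumption stated at the top of the section.
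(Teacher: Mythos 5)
Your proposal is correct and follows essentially the same route as the paper: show via the triangle inequality that dominators of $G$-adjacent nodes are within $(1-\e/2)R$ of each other, hence adjacent in $H$, and then replace each node of a shortest $G$-path by its dominator. Your explicit handling of the endpoints ($d_0=u$, $d_k=v$) and the remark on symmetry of $d$ are only minor clarifications of what the paper does implicitly.
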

\begin{proof}
Note that for every pair of nodes $u,v\in G$  with $u\in N(v,\e)$, the corresponding dominators $u'$ and $v'$ are adjacent in $H$.
Indeed, by the definition of the dominating set, we have
\[
d(u',v')\le d(u',u) + d(u,v) + d(v,v')\le \e \R/4 + (1-\e)\R + \e \R/4 = (1-\e/2)\R.
\]
 Let $u',v'$ be two arbitrary nodes in $DS$ and let $P$ be the path of length at most $D_G$ in $G$ connecting nodes $u'$ and $v'$. By the observation above, if we replace each node $w\in P$ with its dominator $w'$, we obtain a path $P'$ of length $D$ in $H$, connecting $u'$ and $v'$. The completes the proof of the claim.
\end{proof}

\begin{claim}
For every transmitting node $w\in DS$, the probability that $w$ delivers its message to all its neighbors in $H$ is $\Omega(1)$ in each round $t$, if constant $p_0$ is sufficiently small.
\end{claim}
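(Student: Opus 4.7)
The plan is to verify the three conditions of \scc($\e/2$) at $w$ --- that $w$ transmits, that no other node in the in-ball $D_w^{\rho_c(\e/2)}$ transmits in the same round, and that the total interference at $w$ is at most $\Ic(\e/2)$ --- and to show that they hold jointly with probability $\Omega(1)$. Whenever all three are satisfied, $w$'s transmission is received by every node in $N(w,\e/2)$, and by the preceding claim this set contains every $H$-neighbor of $w$, so mass delivery in $H$ follows. The key simplification is that once the dominating-set stage has settled, only dominators still transmit (non-dominators have halted upon detecting \ntd), and each does so independently with probability $p_0$.

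The ``no nearby competing transmission'' condition is easy: since $DS$ is an $\e R/8$-packing in a $(\e R/8,\l)$-bounded-independence metric, the in-ball $D_w^{\rho_c(\e/2)}$, whose radius is a constant multiple of $R$, contains only $O((\rho_c/\e)^\l) = O(1)$ dominators. A union bound gives that the probability some such dominator transmits is $O(p_0)$, which can be made smaller than $1/4$ by choosing $p_0$ sufficiently small.

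The main obstacle is the interference bound. I would partition $\bar D_w^{\rho_c(\e/2)}$ into annular shells $A_i = \{v : iR \le d(w,v) < (i+1)R\}$ for $i \ge \rho_c$, and use the packing property to estimate $|A_i \cap DS| = O(i^{\l-1}/\e^\l)$, since a ball of radius $(i+1)R$ holds at most $O((i/\e)^\l)$ dominators and $(i+1)^\l - i^\l = O(\l \cdot i^{\l-1})$. Each dominator in $A_i$ that transmits contributes at most $P/(iR)^\z$ to the interference at $w$, so
\[
\mathbb{E}[\cI] \;\le\; \frac{p_0 P}{R^\z \e^\l} \sum_{i \ge \rho_c} O(i^{\l-1-\z}) \;=\; O(p_0),
\]
where convergence of the series crucially uses the standing hypothesis $\z > \l$, which gives $\l - 1 - \z < -1$. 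Markov's inequality then yields that the total interference at $w$ exceeds $\Ic(\e/2)$ with probability only $O(p_0/\Ic(\e/2))$.

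Conditional on $w$ transmitting (which happens with probability $p_0$) and taking a union bound over the two bad events above, all the \scc($\e/2$) conditions hold simultaneously with probability at least $1 - O(p_0) \ge 1/2$, once $p_0$ is chosen small enough in terms of the fixed constants $\l$, $\z$, $\e$, $\rho_c$, and $\Ic(\e/2)$. Hence the unconditional probability of $w$ delivering its message to all its $H$-neighbors is $\Omega(p_0) = \Omega(1)$, as claimed. The delicate point throughout is that the universal constant $p_0$ must be chosen small enough to absorb the constants coming from the packing bound and the convergent series, but since all of those constants depend only on fixed model parameters, a single $p_0$ works uniformly across all dominators $w$.
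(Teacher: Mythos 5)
Your overall route is the same as the paper's: condition on $w$ transmitting, verify the two remaining {\scc}($\e/2$) requirements --- no competing transmitter in $D_w^{\rho_c}$ and total interference at most $\Ic(\e/2)$ --- using the constant density of $DS$ for the near field, and an expected-interference-plus-Markov argument for the far field. The near-field step and the final accounting are fine (the paper multiplies the probabilities of the two good events while you take a union bound over the bad ones; either works), though note that the containment of $w$'s $H$-neighbors in $N(w,\e/2)$ is by the definition of $H$, not by the preceding (diameter) claim.

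The gap is in your shell count. From the packing property you may conclude that the ball of radius $(i+1)\R$ around $w$ contains $O((i/\e)^\l)$ dominators, but you cannot difference two such ball bounds to conclude $|A_i\cap DS|=O(i^{\l-1}/\e^\l)$: an upper bound on each ball's packing number says nothing about how the points are distributed radially, and in a general $(\e\R/8,\l)$-bounded-independence metric --- exactly the generality this paper insists on --- all $\Theta((i/\e)^\l)$ dominators of the ball may sit inside a single width-$\R$ annulus (this is only impossible in Euclidean-like spaces). With the bound you are actually entitled to, $|A_i\cap DS|=O((i/\e)^\l)$, your series becomes $\sum_i i^{\l-\z}$, which diverges whenever $\z\le \l+1$; since the standing hypothesis is only $\l<\z$, the step as written fails. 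The fix is standard and is what the paper's own proof points to: either group the far nodes into dyadic shells of radii $2^j\rho_c\R$, where the outer-ball packing bound $O((2^j\rho_c/\e)^\l)$ combined with the distance lower bound $2^j\rho_c\R$ yields a geometric series $\sum_j 2^{j(\l-\z)}$, or redo the summation-by-parts computation of Prop.~\ref{pr:Pinterferenceboundl}, which is the argument the paper cites at this point. Either way the expected interference is $O(p_0)$ for every $\z>\l$, and the rest of your argument goes through.
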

\begin{proof}
Let $S$ denote the set of neighbors of node $w$ in $H$, i.e. the nodes of $DS$ that are at distance at most $(1-\epsilon/2)\R$ from $w$. It suffices to show that all nodes in $S$ receive the message in a fixed round $t$ where $w$ transmits with $\Omega(1)$ probability.
This event holds if the interference at $w$ is no more than $\Ic$ -- event $\mathcal{E}_1$, and no other node transmits in $D_w^{\rho_c}$ -- event $\mathcal{E}_2$, in round $t$.

By the properties of the dominating set, the contention in a ball of radius $\rho_c\R$ is at most $O(p_0(\rho_c/\e)^{\l})$. Similarly to the proof of Prop.~\ref{pr:Pinterferenceboundl}, it can be shown that the expected interference by nodes in $\bar D_w^{\rho_c}$ is $O(p_0)$. Thus, by Markov inequality, $Pr(\mathcal{E}_1)=1-O(p_0)=\Omega(1)$ if $p_0$ is small enough. On the other hand, since the contention in $D_w^{\rho_c}$ is $O(p_0)$,  $Pr(\mathcal{E}_2)\ge 4^{-{O(p_0)}}$ follows by Lemma~\ref{ieq}. Thus, $Pr(\mathcal{E}) \ge Pr(\mathcal{E}_1)\cdot Pr(\mathcal{E}_2)=\Omega(1)$.
\end{proof}
Given that each informed node in $DS$ has  probability at least  $\eta=\Omega(1)$ of successfully broadcasting the message, the rest of the proof essentially follows along the lines of the proof of~\cite[Lemma 6]{BHM13P}. We present a sketch of the proof for completeness of the argument.

Let $S_t$ be the set of nodes in $DS$ that have been informed by round $t$, where $R_0$ contains only the source node. Let us fix a node $v\in DS$ and let $d_t$ be the distance (in graph $H$) from $v$ to the nearest node in $S_t$. Note that $d_0\le D_G$. The difference $\delta_t=d_{t-1}-d_t$ is the progress made in round $t$ and is a Bernoulli random variable with $E[\delta_t]\ge \eta$ for all $t$ when $d_t>0$ and $\delta_t=0$ otherwise. Let ${\tilde\delta}_t$ be a random variable that has the same distribution as $\delta_t$ when $d_t>0$ and is an i.i.d. Bernoulli random variable with mean $\eta$ otherwise. Let $\Delta_t=\sum_t{{\tilde\delta}_t}$. Note that node $v$ has been informed by round $t$ iff $\Delta_t\ge d_0$. Thus, we need to bound the probability $Pr(\Delta_t<d_0)$. Let $Z_t=\Delta_t-\eta t$. It is easy to show that the sequence $Z_t$ is a submartingale. Moreover, for a round $t\ge \frac{c}{\eta}(D_G+\log n)$ and constant $c>1$, $\Delta_t<d_0$ implies that $Z_t < -(c-1)(D_G+\log n)$. Since the sequence $Z_t$ is a submartingale, we can apply Azuma-Hoeffding bound to show that $Pr(\Delta_t<d_0) \le Pr(Z_t < -(c-1)(D_G+\log n)) < n^{-O(c)}$. Now the theorem follows by union bound over all nodes, by choosing the constant $c$ large enough.

\section{Proof of Theorem~\ref{thm:brlowerbound}: Necessity of {\ntd}}

\noindent\textbf{Theorem~\ref{thm:brlowerbound}}
\textit{For every (possibly randomized) broadcast algorithm $\mathcal{A}$ that uses neither node coordinates nor {\ntd} primitive, there is a $(\e\R/8, 1)$-bounded-independence metric space where $\mathcal{A}$ needs $\Omega(n)$ rounds to do broadcast in a $O(1)$-broadcastable network, even if the nodes have {\cd} and {\ack} primitives and operate spontaneously.}
\vspace{5pt}
\begin{proof}
Assume that $\z=2$. We present the construction using a distance function $d$: it is then straightforward to construct the corresponding path-loss matrix. Denote $\delta = \e/(8(1-\e))$.   First, assume the nodes operate non-spontaneously, i.e. a non-source node may need to receive a message in order to start participating in a protocol. Recall that $R_B=(1-\e)\R$. Consider $n$ points $p_1,p_2,\dots,p_n$, such that for every $i,j \le n-2$, it holds that $d(p_i,p_j)=\delta R_B=\e\R/8$, $d(p_i,p_{n-1})=\mu R_B < \R$, $d(p_i,p_n)=(\mu + 1)R_B > \R$ and $d(p_{n-1}, p_{n})=R_B$, where $\mu =\e (1+\e)/(1-\e) < 1$ (see the diagram in Fig. \ref{fig:nonspon}). Clearly, this set of points forms a $(\e\R/8,1)$-bounded independence metric space. We place $n$ wireless nodes  at distinct points $p_1,p_2,\dots,p_n$ uniformly at random. Let $v_i$ be the node at $p_i$. We assume, further, that communication only happens according to {\scc}; in particular, if the interference at a node is more than $\Ic$ then \emph{none} of its neighbors receives its transmission.

Note that broadcast in this network can be completed in 2 steps, starting at any point.
Note also that $v_n$ cannot be directly reached from nodes $v_i$ with $i\le n-2$. Moreover, if at least $3$ nodes $v_i$ with $i\le n-2$ transmit simultaneously, \emph{no node} receives a message, including the potential communication between $v_{n-1}$ and $v_n$. This follows by observing that $P/R_B^\z > \Ic$: the signal power at the neighboring nodes must be more than the interference threshold, otherwise a node could receive two signals of the same power.

Assume $v_i$ with some $i\le n-2$ is the source. At the first round of the algorithm, nodes $v_1,v_2,\dots,v_{n-1}$ receive the message. Let $\mathcal{E}_t$ denote the event that $v_{n-1}$ transmits and no more than $s=3$ nodes $v_i$ with $i\le n-2$ transmit in round $t$. By the observations above, $v_n$ will not receive the message until $\mathcal{E}_t$ happens. Moreover,  during the subsequent steps of the algorithm, until $\mathcal{E}_t$ happens for the first time, all nodes at $v_1,v_2,\dots,v_{n-1}$ will have the same history and will be symmetric with respect to {\cd} and {\ack} primitives.

\begin{figure}[ht]
\centering
\subfloat[][non-spontaneous mode]{
\includegraphics[width=0.4\textwidth]{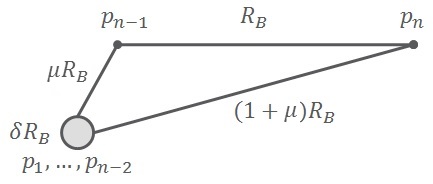}\label{fig:nonspon}
}
\qquad
\subfloat[][spontaneous mode]{
\includegraphics[width=0.45\textwidth]{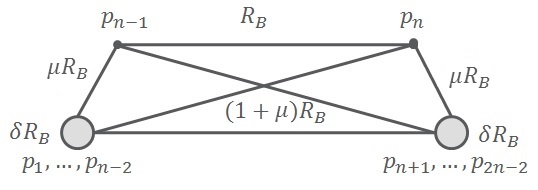}\label{fig:spon}
}
\caption{Distance diagrams of the lower bound instances.}
\end{figure}

It remains to bound the expected time  until $\mathcal{E}_t$ happens for the first time. We can assume w.l.o.g. that in each round there are no more than $s$ nodes transmitting. As discussed above, if $v_n$ is not informed in such a round, then the nodes that transmitted in the given round can learn at best that none of them is at $v_{n-1}$ and stop transmitting thereafter. Thus, we can assume that after each unsuccessful round there are at most $s+1$ nodes that stop transmitting and the probability $\mathbb{P}_r$ of success at round $0\le r \le n/(2s)$ (if no success occurred before round $r$) is at most
\[
\mathbb{P}_r\le \frac{\sum_{t=1}^{s-1}{n-sr\choose t}}{\sum_{t=1}^{s}{n-sr\choose t}}\le \frac{(s-1)\cdot{n-sr\choose s-1}}{(s-1)\cdot{n-sr\choose s-1} + {n-sr\choose s}}\le \frac{s^2}{n-(r+1)s +s^2}\le 2s^2/n,
\]
where the numerator counts the number subsets of an $n-sr$-element set that contain a fixed element (the node $v_{n-1}$) and no more than $s$ elements in total, and the denominator counts all subsets containing no more than $s$ elements.
A further straightforward calculation shows that the expected number of rounds until the first success is $\Omega(n)$ (see e.g.~\cite[Thm. 1]{DGKN13}).

In the case when the node $v_n$ might try to act without receiving a message, the instance above can be complemented as in  Fig. \ref{fig:spon}. A similar argument works in this case too.
\end{proof}

\end{document}